\newtheorem{thm}{Theorem}[section]
\newtheorem{lem}[thm]{Lemma}
\newtheorem{cor}[thm]{Corollary}
\newtheorem{defn}{Definition}[section]
\theoremstyle{definition}
\newtheorem{exmp}{Example}[section]
\theoremstyle{remark}
\newcommand{\beq}{\begin{equation}}
\newcommand{\eeq}{\end{equation}}
\newcommand{\tinyspace}{\mspace{1mu}}
\newcommand{\microspace}{\mspace{0.5mu}}
\newcommand{\tr}{\operatorname{Tr}}
\def\complex{\mathbb{C}}
\def\real{\mathbb{R}}
\def\I{\mathbb{I}}
\def \lket {\left|}
\def \rket {\right\rangle}
\def \lbra {\left\langle}
\def \rbra {\right|}
\newcommand{\ket}[1]{\lket\microspace #1 \microspace\rket}
\newcommand{\bra}[1]{\lbra\microspace #1 \microspace\rbra}
\newcommand{\proj}[1]{\ket{#1}\bra{#1}}
\newcommand{\comp}[1]{#1^*}
\newcommand{\state}[2]{| #1\rangle\langle #2|}
\newcommand{\Tr}{\operatorname{Tr}}
\newcommand{\diag}[1]{\operatorname{diag}(#1)}
\renewcommand{\neg}{\mathcal{N}}
\newcommand{\lnorm}[1]{\left\lVert\tinyspace #1 \tinyspace\right\rVert_{l_1}}
\newcommand{\upto}[1]{\{1,2,\cdots, #1\}}
\newcommand{\CC}{\operatorname{\mathcal{CC}}}
\newcommand{\MCS}{\operatorname{\mathcal{MCS}}}
\newcommand{\CQ}{\operatorname{\mathcal{CQ}}}
\newcommand{\matsize}[1]{\mathbb{M}(#1)}
\newcommand{\bas}[1]{\mathcal{B}(#1)}
\newcommand{\ppt}{\operatorname{\mathcal{PPT}}}
\newcommand{\sep}{\operatorname{\mathcal{SEP}}}
\newcommand{\rel}[2]{\operatorname{S}(#1||#2)}
\let\orgdescriptionlabel\descriptionlabel
\renewcommand*{\descriptionlabel}[1]{%
  \let\orglabel\label
  \let\label\@gobble
  \phantomsection
  \edef\@currentlabel{#1}%
  \let\label\orglabel
  \orgdescriptionlabel{#1}%
}
\newcommand{\rmnum}[1]{\romannumeral #1}
\newcommand{\Rmnum}[1]{\expandafter\@slowromancap\romannumeral #1@}
\begin{document}
\title{Negativity of quantumness and its interpretations}
\author{Takafumi Nakano}
\affiliation{Institute for Quantum Computing and Department of Physics and Astronomy, University of Waterloo, Waterloo ON
N2L 3G1, Canada}

\author{Marco Piani}
\affiliation{Institute for Quantum Computing and Department of Physics and Astronomy, University of Waterloo, Waterloo ON
N2L 3G1, Canada}

\author{Gerardo Adesso}
\affiliation{School of Mathematical Sciences, University of Nottingham, University Park, Nottingham NG7 2RD, United Kingdom}

\begin{abstract}
We analyze the general nonclassicality of correlations of a composite quantum systems as measured by the negativity of quantumness. The latter corresponds to the minimum entanglement, as quantified by the negativity, that is created between the system and an apparatus that is performing local measurements on a selection of subsystems. The negativity of quantumness thus quantifies the degree of nonclassicality on the measured subsystems. We demonstrate a number of possible different interpretations for this measure, and for the concept of quantumness of correlations in general. In particular, for general bipartite states in which the measured subsystem is a qubit, the negativity of quantumness acquires a geometric interpretation as the minimum trace distance from the set of classically correlated states. This can be further reinterpreted as minimum disturbance, with respect to trace norm, due to a local measurement or a nontrivial local unitary operation. We calculate the negativity of quantumness in closed form for Werner and isotropic states, and for all two-qubit states for which the reduced state of the system that is locally measured is maximally mixed---this includes all Bell diagonal states. We discuss the operational significance and potential role of the negativity of quantumness in quantum information processing.
\end{abstract}

\pacs{03.65.Ud, 03.65.Ta, 03.67.Ac, 03.67.Mn}

\maketitle

\section{Introduction}\label{sec:introduction}

Quantum systems differ from classical ones in a number of ways. This is particularly true for composite systems, which can exhibit quantum features like nonlocality and quantum entanglement~\cite{horodecki2009quantum}. They are certainly striking manifestations of a deviation from classicality. While quantum entanglement can be regarded as one of the most characteristic traits of quantum mechanics~\cite{CambridgeJournals:1737068},  a lot of effort has recently been directed towards characterizing a more general notion of \emph{quantumness} of correlations~\cite{ollivier2001quantum,henderson2001classical,modi:2011a}, almost always present in mixed states also in the absence of entanglement~\cite{acinferraro}. The study of quantum correlations, or of quantumness in its most essential manifestation in composite systems, has a deep foundational value as it provides a trail to investigate the boundary between the classical and the quantum world from the perspective of quantum measurements \cite{zurek2003decoherence}. Quantumness of correlations manifests, for instance, when any local complete projective measurement on a subsystem necessarily alters the state of a composite system \cite{ollivier2001quantum}. There are many other (sometimes equivalent) signatures to reveal quantumness of correlations in a state, and there are correspondingly a number of possible ways to quantify such quantumness (including the {\it quantum discord} \cite{ollivier2001quantum,henderson2001classical}), which range from informational to geometric and thermodynamical settings. Some of the most prominent approaches are summarized below, while for a more extensive treatment we defer the reader to a recent review, see \cite{modi:2011a} and references therein.

From an applicative point of view, Quantum Information Processing aims at harnessing quantum properties  to outperform classical information processing \cite{nichu}. A natural step towards such a goal is that of developing concepts and tools to more precisely determine which states possess or do not possess a certain quantum property, further aiming to quantifying and exploiting the latter when present. While quantumness of correlations reduces to entanglement for pure states, thus already embodying the key resource for quantum information processing in absence of noise \cite{horodecki2009quantum,nichu}, a number of researchers are investigating the role of general nonclassical correlations in quantum computation \cite{datta,dqc1expwhite,dqc1explaflamme,dattacomputing}, quantum communication~\cite{cavalcanti:pr2011a,np2} and quantum metrology \cite{modix,LQU}. 


In the following, we will always address the notions of classicality or quantumness as referred to the correlations among subsystems in the state of a composite system; we similarly adopt the wording of (non)classical states to mean equivalently (non)classically correlated states. We will not be concerned with other definitions of (non)classicality such as those usually adopted in quantum optics to characterize the nature of light \cite{nonclasslight}, which often lead to a very different classification of states into classical and non-classical~\cite{PhysRevLett.108.260403}.

As anticipated, there are a wide variety of approaches to quantify the nonclassicality (of correlations) of the state $\rho$ of a quantum system \cite{modi:2011a}. To list a few, it can be measured by
\begin{description}
\item[Approach 1\label{itm:activation}]
(Activation) the minimum amount of entanglement created between the system and its measurement apparatus in a local measurement~\cite{streltsov:prl2011b,piani:prl2011a,doi:10.1142/S0219749911008258,piani:pr2012a};
\item[Approach 2\label{itm:distance}]
(Geometric) the minimum distance between $\rho$ and its closest classical state~\cite{modi:prl2010a,dakic:prl2010a,luo2010geometric,problemgeometric};
\item[Approach 3\label{itm:decoherence}]
(Disturbance by measurement) the minimum amount of disturbance caused by local projective measurements~\cite{ollivier2001quantum,henderson2001classical,luo:pr2008a};
\item[Approach 4\label{itm:localunitary}]
(Disturbance by unitary) the minimum disturbance caused by particular local unitaries~\cite{gharibian:2012a,faberdagmar}.
\end{description}

As summarized in \cite{modi:2011a}, a number of nonclassicality measures can be defined via \ref{itm:distance}, \ref{itm:decoherence} and \ref{itm:localunitary} using different distance functions such as the Hilbert-Schmidt distance or the quantum relative entropy.
Even though at a first glance \ref{itm:activation} seems to be very different from the other approaches, it turns out it is intimately connected with them (see also \cite{coles:2012a}). This is because there are entanglement measures that are similarly defined as the distance from the closest separable state. One such example is given by the relative entropy of entanglement~\cite{vedral:prl1997a}, which is a well known upper bound to the distillable entanglement~\cite{rains:itit2001a}.

In this paper we develop an extensive study of a promising measure of nonclassical correlations recently defined in~\cite{piani:prl2011a} along \ref{itm:activation}: the \emph{Negativity of Quantumness} (NoQ).
NoQ~\cite{piani:prl2011a} corresponds to the adoption of negativity~\cite{negativity} as entanglement measure in  \ref{itm:activation}. In~\cite{piani:pr2012a}  it was proven that   \ref{itm:activation} leads to a quantitative hierarchy of correlations that formalizes the intuition that `the general quantumness of correlations is more than entanglement'. In particular, the entanglement generated in a complete projective local measurement is provably always greater than the entanglement present in the state whose non-classicality of correlations is under scrutiny. This fact is independent of the specific choice of entanglement measure. Nonetheless, given the usefulness of negativity~\cite{negativity} in the study and quantification of entanglement---a usefulness that comes in particular by its being easy to calculate---it is natural to focus on the specific hierarchy it generates. Furthermore, in this article we prove there are other reasons to focus on NoQ.


Indeed, we find that NoQ can be further interpreted from the perspective of~\ref{itm:distance}, \ref{itm:decoherence}, and~\ref{itm:localunitary} under suitable conditions. In particular, in the special case of a bipartite state $\rho_{AB}$, when furthermore the measured subsystem $A$ is a two-level quantum object (qubit), it turns out that the NoQ acquires a geometric interpretation as the minimum distance between $\rho_{AB}$ and the set of classical states, with the distance measured in {\it trace norm}. This can be proven to be further equivalent to the minimal state change, again in trace norm, after either a local measurement on $A$ or a (nontrivial) local unitary evolution on $A$. The equivalence between the last two approaches when $A$ is a qubit is here proven for any norm-based distance, complementing the original results of \cite{gharibian:2012a,faberdagmar} which were specific to the Hilbert-Schmidt norm. All these results are derived and described throughout the paper with relevant examples. Our work also provides a proof to a conjecture raised by Khasin {\it et al.}~about a bound for the negativity (of entanglement) \cite{khasin:pr2007a}. We obtain closed analytical expressions for the NoQ in relevant cases such as Werner and isotropic states of arbitrary dimension \cite{werner,iso}, and a family of two-qubit states where qubit $A$ is maximally mixed, which includes Bell diagonal states. In the latter instance, the problem is recast into an appealing geometrical formulation. The closed formula for two-qubit states with one maximally mixed marginal, together with the hierarchical relation between quantumness of correlations and entanglement of Ref.~\cite{piani:pr2012a} allows, e.g., a consistent study and comparison of the evolution of entanglement---as measured by negativity---and quantumness of correlations---as measured by the one-sided negativity of quantumness---under the action of a family of qubit channels, e.g., a semigroup.

We note that, during completion of this manuscript, we became aware of other works investigating a distance-based measure of quantumness based on trace norm  \cite{vianna,giovannettiprep,faberprep,saro}

The paper is organized as follows. In Section~\ref{sec:notation} we fix some notation adopted throughout the manuscript. In Section~\ref{sec:CNC} we review the main definitions and approaches to quantify quantumness of correlations as sketched above. Section~\ref{sec:NoQ} is focused on the definition and formulation of the NoQ. The main properties and interpretations of the NoQ are discussed in Section~\ref{sec:interp}, along with its interplay with the usual negativity of entanglement.  In Section~\ref{sec: qubit system} we prove in general the equivalence of the various approaches when the measured subsystem is a qubit. In Section~\ref{sec:examples} we calculate the NoQ for relevant families of bipartite states. We conclude the main body of the paper in Section~\ref{sec:concl}. The Appendices contain a number of technical proofs and extensions.


\section{Notation}
\label{sec:notation}

We will deal only with finite-dimensional Hilbert spaces and we will identify linear operators with matrices. We will denote by $\langle A, B \rangle:= \Tr(A^\dagger B)$ the \emph{Hilbert-Schmidt inner product} of two matrices $A$ and $B$.

The \emph{Schatten $p$-norms} $\|\cdot\|_p$ are a family of norms parametrized by a real number $p\geq 1$. If $\sigma_i(A)$'s are the singular values of a matrix $A$, the $p$-norm of the latter is defined as \begin{equation}\|A\|_p:=\left(\sum_i \sigma^p_i(A)\right)^{1/p}.\end{equation}
In particular, we are going to focus our attention on the $1$-norm (also called \emph{trace norm}) $\|A\|_1=\sum_i \sigma_i(A)=\Tr\sqrt{A^\dagger A}$,
 and on the $2$-norm $\|A\|_2=\sqrt{\sum_i\sigma_i(A)^2}=\sqrt{\Tr(A^\dagger A)}=\sqrt{\langle A, A \rangle}$ (also called Hilbert-Schmidt norm). 
For any norm $\|\cdot\|$, one can define an associated distance on matrices by means of $\|A-B\|$. In particular, the distance associated with the $1$-norm is also called \emph{trace distance} (up to a factor $1/2$), while the distance associated with the $2$-norm is known as \emph{Hilbert-Schmidt distance}. The trace distance between two mixed states (i.e., positive semidefinite operators of trace one) has a direct operational interpretation linked to the probability of success in distinguishing the two states via a measurement~\cite{nichu}.

We will also make use of the $l_1$-norm, which is a basis-dependent norm defined as the sum of the absolute values of the entries of a matrix: $\|A\|_{l_1}=\sum_{i,j}|A_{i,j}|$. See Appendix~\ref{sec: l1-norm} for more details.

The relative entropy of a density matrix $\rho$ with respect to a density matrix $\sigma$ is defined as $S(\rho\|\sigma):=\Tr(\rho(\log \rho -\log \sigma))=-S(\rho)-\Tr(\rho\log\sigma)$. Here $S(\rho):=-\Tr(\rho\log\rho)$ is the von Neumann entropy, and the logarithms are taken in base $2$ throughout all the paper. The relative entropy is not a distance as, for example, it is not symmetric in its arguments, but for the sake of our investigation we can and will treat it as if it was a distance measure. This is done routinely in quantum information theory \cite{vedralrmp}.

A \emph{channel} is  a completely positive trace-preserving (CPTP) linear map on operators~\cite{nichu}; it admits a Kraus representation
\begin{equation}
\label{eq:channel}
\Lambda[B] = \sum_i K_i B K_i^\dagger,
\end{equation}
with $K_i$'s the corresponding Kraus operators satisfying the trace-preservation condition $\sum_i K_i^\dagger K_i =\I$. The dual $\Lambda^\dagger$ of a channel $\Lambda$---and in general, of a linear map---is defined via the Hilbert-Schmidt inner product through the relation
\[
\langle A, \Lambda [B] \rangle = \langle \Lambda^\dagger [A], B\rangle, \quad \forall A, B.
\]
It easy to verify that for a channel with Kraus decomposition \eqref{eq:channel} the dual map has Kraus decomposition
\[
\Lambda^\dagger[A] = \sum_i K_i^\dagger A K_i.
\]
Hence, $\Lambda$ being trace-preserving implies that $\Lambda^\dagger$ is unital, i.e. $\Lambda^\dagger[\I]=\I$; $\Lambda^\dagger$ is furthermore a channel---i.e., also trace-preserving---if $\Lambda$ is unital itself.

We will also make use of the Pauli matrices $\sigma_1=\sigma_x=\begin{pmatrix} 0 & 1 \\ 1 & 0 \end{pmatrix}$, $\sigma_2=\sigma_y=\begin{pmatrix} 0 & -i \\ i & 0 \end{pmatrix}$, and $\sigma_3=\sigma_z=\begin{pmatrix} 1 & 0 \\ 0 & -1 \end{pmatrix}$, which, alongside with the identity $\sigma_0=\I=\begin{pmatrix} 1 & 0 \\ 0 & 1 \end{pmatrix}$, form a basis for the space of $2\times 2$ matrices.


\section{Classicality and nonclassicality of correlations: notions and measures}\label{sec:CNC}

\subsection{Classicality of quantum states}

One key quantum feature is the fact that a measurement will in general disturb the state of the system being measured. Most importantly, a local measurement will typically lead to a decrease in (total) correlations -- quantified, e.g.,  in terms of quantum mutual information -- between the measured system and any other systems that might have been initially correlated with it~\cite{ollivier2001quantum,henderson2001classical,piani:prl2008a}. Because of the orthogonality and perfect distinguishability of the states that form an orthonormal basis, a complete projective measurement can be thought as a quantum-to-classical mapping, where the information extracted from the quantum system is recorded in a classical register. In this sense, only correlations between a classical system and the remaining unmeasured systems are left after the local measurement. It has been actually suggested that such surviving correlations should be deemed \emph{classical}~\cite{henderson2001classical}. It can be proven that a local measurement that does not destroy any amount of correlations exists  for a state if and only if the measured system could be considered classical to start with~\cite{ollivier2001quantum,luo:pr2008a,piani:prl2008a}, so that such a measurement does not disturb the system. To be more precise, the following notions will be adopted.

\begin{defn}[Classicality of a quantum state (\romannumeral 1)]
Let $\rho_{\upto{n}}$ be an $n$-partite quantum state. For any $i\in\upto{n}$, $\rho_{\upto{n}}$ is \emph{classical on the $i$-th system} if there exists a complete local projective measurement on the $i$-th subsystem which leaves $\rho_{\upto{n}}$ invariant. 
\end{defn}

Complete projective measurements are described by a set of orthogonal rank-one projectors, say $\{\state{a_i}{a_i}\}$, such that they sum up to the identity of the space, i.e., $\sum_i\state{a_i}{a_i}=\I$. Therefore the state is invariant under such measurement if and only if the original state has block-diagonal form with respect to the basis $\{\ket{a_i}\}$, and this can be used as an alternative definition of classicality of the state.

\begin{defn}[Classicality of a quantum state (\romannumeral 2)]
Let $\rho_{\upto{n}}$ be an $n$-partite quantum state. For any $i\in\upto{n}$, $\rho_{\upto{n}}$ is \emph{classical on the $i$-th system} if $\rho_{\upto{n}}$ can be represented as
\beq
\label{eq:classical}
\rho_{\upto{n}}=\sum_j\state{a_j}{a_j}_i\otimes \sigma^j_{\{1,\cdots,i-1,i+1,\cdots,n\}},
\eeq
where $\left\{\ket{a_j}\right\}$ is an orthonormal basis of the $i$-th system and
\[
\sigma^j_{\{1,\cdots,i-1,i+1,\cdots,n\}}=\langle a_j|\rho_{\upto{n}}|a_j\rangle.
\]
\end{defn}

In this paper, mainly bipartite quantum states are considered, $\rho\equiv \rho_{AB}$. A bipartite quantum state which is classical only on one subsystem (say $A$) is called \emph{classical-quantum} ($\CQ$). Such a state will exhibit zero quantumness (of correlations between $A$ and $B$) with respect to local measurements on $A$. Similarly, a state classical on both subsystems is called \emph{classical-classical} ($\CC$) \cite{piani:prl2008a}.

Classicality of subsystems is a much stronger notion than separability, i.e., absence of entanglement, as recalled below.

\begin{defn}[Separable and entangled states] A state $\rho_{AB}$ is \emph{separable} in the bipartition $A$ versus $B$, also indicated as the $A:B$ bipartition,  if it can be represented as \cite{horodecki2009quantum}
\beq
\label{eq:entangled}
\rho_{AB}=\sum_j p_j \tau_A^j\otimes \sigma_B^j,
\eeq
with $(p_j)_j$ a probability distribution and $\tau_A^j$, $\sigma_B^j$ quantum states for $A$ and $B$, respectively. A state $\rho_{AB}$ is $A:B$ \emph{entagled} if it is not $A:B$ separable.
\end{defn}
It is clear from \eqref{eq:classical} and \eqref{eq:entangled} that every state $\rho_{AB}$ that is classical on $A$ is $A:B$ separable, while the opposite does not hold.

Given a state which is not classical on some subsystem $i$, i.e., such that the state would get disturbed by any possible complete projective measurement on subsystem $i$, it is natural to try to quantify the amount of non-classicality of correlations in that state, in particular from an operational perspective.
As mentioned in~Sec.~\ref{sec:introduction}, there are different approaches to quantify nonclassicality that we briefly review.


\subsection{Non-classicality by system-apparatus entanglement}

We now recall more in detail  how a measure---or rather, a family of measures---of nonclassical correlations can be based on~\ref{itm:activation}. The definition goes through the consideration of a particular set of states produced during the measurement of a (sub)system, when such process is modeled by a particular unitary interaction---which we call \emph{measurement interaction}---between the system and a measurement apparatus. We call the states that are the result of such an interaction \emph{pre-measurement states}~\cite{zurek2003decoherence}.

\begin{defn}[Measurement interaction]
A \emph{measurement interaction} $V_{A\mapsto AA'}$ on system $A$ is a linear isometry from $A$ to a bipartite system $AA'$ where $A'$ has the same dimension as $A$.  This isometry is defined by the following mapping of an orthonormal basis of A, $\{\ket{a_k}\}$:
\[
V_{A\mapsto AA'}\ket{a_k}_A=\ket{a_k}_A\ket{k}_{A'},\quad \forall k,
\]
where $\{\ket{k}_{A'}\}$ is the computational basis of system $A'$. This operation is defined in a basis-dependent way, i.e., the choice of different local orthonormal basis of A results in distinct measurement interactions. Therefore the operation is sometimes denoted as $V_{A\mapsto AA'}^{\{\ket{a_k}\}}$ when the basis needs to be specified.
\end{defn}

\begin{defn}[Pre-measurement state]
Let $\rho_{\upto{n}}$ be an $n$-partite quantum state. For some choice of subsystems $\Sigma\subseteq \upto{n}$ and of measurement bases  $\{\ket{a_k}\}$ (one for each subsystem $i\in\Sigma$), the corresponding pre-measurement state for $\rho$ is
\[
\tilde{\rho}_{\Xi}:=\left(\bigotimes_{i\in\Sigma}\mint{i}\right)\rho\left(\bigotimes_{i\in\Sigma}\mint{i}\right)^\dagger,
\]
 where $\Xi=\upto{n}\cup\Sigma'$, i.e, $\tilde{\rho}_{\Xi}$ is the joint state of the initial quantum systems and their measurement apparatuses.
\end{defn}

As already mentioned, measurement interactions depend on the choice of a specific basis for each subsystem in $\Sigma$. The action of measurement interactions defined for distinct bases results---in general---in distinct final pre-measurement states. This leads to the consideration of the entire class of potential pre-measurement states.

The next theorem asserts that the system-apparatus separability of the pre-measurement states characterizes the classicality of the correlations of the original system.

\begin{thm}[Activation protocol~\cite{piani:prl2011a,streltsov:prl2011b, doi:10.1142/S0219749911008258, piani:pr2012a}]\label{thm: activation protocol}
A $n$-partite quantum state $\rho_{\upto{n}}$ is classical on its subsystems $\Sigma\subseteq \upto{n}$ if and only if there exists a corresponding ${\upto{n}}:\Sigma'$ separable pre-measurement state $\tilde{\rho}_{\upto{n}\cup\Sigma'}$.
\end{thm}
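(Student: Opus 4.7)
The plan is to prove the two implications separately, with the substantive work going into the converse. For the easy ``if'' direction, I would assume $\rho_{\upto{n}}$ is classical on $\Sigma$; since dephasings on distinct subsystems commute, this is equivalent to the existence of a product basis $\{\ket{a_{\vec j}}\}_{\vec j}$ on the joint system $\Sigma$ for which $\rho=\sum_{\vec j}\ket{a_{\vec j}}\bra{a_{\vec j}}_\Sigma\otimes\sigma^{\vec j}_{\Sigma^c}$. Applying the measurement interactions in these bases, the action $V\ket{a_{j_i}}_i=\ket{a_{j_i}}_i\ket{j_i}_{i'}$ on each $i\in\Sigma$ yields immediately
\[
\tilde{\rho}_\Xi=\sum_{\vec j}\ket{a_{\vec j}}\bra{a_{\vec j}}_\Sigma\otimes\sigma^{\vec j}_{\Sigma^c}\otimes\ket{\vec j}\bra{\vec j}_{\Sigma'},
\]
which is fully separable and in particular $\upto{n}:\Sigma'$-separable.

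For the converse I would reduce to the bipartite picture by grouping $A:=\Sigma$ and $B:=\Sigma^c$ (the multipartite statement then follows either by running the same computation with composite indices, or by iterating on one subsystem in $\Sigma$ at a time). Expanding $\rho_{AB}=\sum_{j,k}\ket{a_j}\bra{a_k}_A\otimes\rho^{jk}_B$ in the chosen basis of $A$, the pre-measurement state takes the form
\[
\tilde{\rho}_{ABA'}=\sum_{j,k}\ket{a_j,j}\bra{a_k,k}_{AA'}\otimes\rho^{jk}_B,
\]
and the task is to show $\rho^{jk}_B=0$ whenever $j\neq k$ from the hypothesis that $\tilde{\rho}$ is $AB:A'$-separable. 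The key step is to \emph{probe} $B$ with an arbitrary $E\geq 0$: the map $X\mapsto\Tr_B[(I_A\otimes E^{1/2})X(I_A\otimes E^{1/2})]$ is completely positive on the $B$ factor, hence preserves $AB:A'$-separability and outputs an $A:A'$-separable operator on $AA'$. Explicitly this output is
\[
\sigma^E_{AA'}=\sum_{j,k}\Tr(E\rho^{jk}_B)\,\ket{a_j,j}\bra{a_k,k}_{AA'},
\]
a positive operator whose support lies in the ``maximally correlated'' subspace $\operatorname{span}\{\ket{a_k,k}_{AA'}\}_k$. A short auxiliary lemma, which I would prove first, says that any pure product vector $\ket{\psi}_A\otimes\ket{\phi}_{A'}$ lying in this subspace must equal $\ket{a_k}_A\otimes\ket{k}_{A'}$ for a single $k$; consequently any separable positive operator supported there is diagonal in $\{\ket{a_k,k}\}_k$. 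Applied to $\sigma^E_{AA'}$ this forces $\Tr(E\rho^{jk}_B)=0$ for every $E\geq 0$ and every $j\neq k$, and since the cone of positive operators spans the full operator space on $B$ we conclude $\rho^{jk}_B=0$ for $j\neq k$, i.e.\ $\rho_{AB}$ is classical on $A$.

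The main obstacle is really just the auxiliary lemma on maximally correlated subspaces; it is elementary, but it is the conceptual heart of the argument, isolating precisely why the measurement interaction promotes classicality of correlations into separability of a tripartite state. Once it is in hand, the rest is linear-algebraic bookkeeping, and the multipartite extension introduces no new ideas.
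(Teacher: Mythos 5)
Your argument is correct, but note that the paper itself does not prove this theorem: it is imported verbatim from the cited references (Piani \emph{et al.}, Streltsov \emph{et al.}), so there is no in-paper proof to compare against line by line. Judged on its own, your proof is sound and complete. The ``if'' direction is the standard computation. For the ``only if'' direction, the probe map $X\mapsto\Tr_B\bigl[(I\otimes E^{1/2})X(I\otimes E^{1/2})\bigr]$ is a legitimate local (to the $AB$ side) completely positive operation, so it indeed maps the $AB:\Sigma'$-separable cone into the $A:A'$-separable cone, and your auxiliary lemma --- that every product vector in $\operatorname{span}\{\ket{a_k}_A\otimes\ket{k}_{A'}\}_k$ is one of the basis vectors $\ket{a_k}\otimes\ket{k}$, whence any separable positive operator supported there is diagonal in that basis --- is elementary and correctly identified as the crux. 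Combined with the fact that positive operators span all of $\lin{\mathcal{H}_B}$, this forces $\rho^{jk}_B=0$ for $j\neq k$, as you say. This is essentially the same mechanism used in the original references (separability of a state supported on a maximally correlated subspace forces diagonality), though your reduction via probing $B$ with an arbitrary $E\geq 0$ is a clean and somewhat more modular way to strip off the $B$ factor than arguing directly about product decompositions of the tripartite state; it also makes the multipartite extension transparent, since the composite-index bookkeeping is unaffected. One cosmetic point: in the easy direction you should say explicitly that the weights in the separable decomposition are $\Tr\sigma^{\vec j}_{\Sigma^c}$ after normalizing each $\sigma^{\vec j}_{\Sigma^c}$, but this is trivial.
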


One can introduce a family of quantifiers of the nonclassical correlations present in a quantum state exploting this intimate connection between the nonclassicality of a quantum state and the entanglement properties of the corresponding pre-measurement states.

\begin{defn}[Non-classicality by system-apparatus entanglement~\cite{piani:prl2011a,streltsov:prl2011b, doi:10.1142/S0219749911008258, piani:pr2012a}]\label{def:quanumnessmeasure}
Let $\rho_{\upto{n}}$ be a $n$-partite quantum state and $\Sigma\subseteq \upto{n}$ be a set of its subsystems. Then the measure of nonclassicality of correlation (or quantumness, in short) present in ${\rho}_{\upto{n}}$ as revealed on subsystems $\Sigma$ with respect to an entanglement measure $E$ is defined as
\begin{equation}
\label{eq:defquantent}
Q_E^\Sigma(\rho_{\upto{n}}):= \min_{\bigotimes_{i\in\Sigma}\bas{i}} E_{\upto{n}:\Sigma'}(\tilde{\rho}_{\upto{n}\cup\Sigma'}),
\end{equation}
where $\bas{i}$ denotes a local orthonormal basis of subsystem $i\in\Sigma$, the bipartite entanglement is measured across the bipartite cut $\upto{n}:\Sigma'$, and the minimum is taken over different pre-measurement states for different measurement interaction defined by $\bigotimes_{i\in\Sigma}\bas{i}$.
\end{defn}



\subsection{Other measures of nonclassicality}
In this susbsection, we will briefly summarize other approaches to define measures of nonclassicality of quantum states, namely \ref{itm:distance} and \ref{itm:decoherence}. The approach based on \ref{itm:localunitary} and its relevance to our other results will be discussed in~Sec.~\ref{sec: qubit system}.

\subsubsection{Measures of nonclassicality based on disturbance}
Since the nonclassicality of a quantum state is defined by the unavoidable disturbance caused by any local projective measurement, perhaps the most immediate way to study how nonclassical a quantum state is, is to quantify the difference between the state before and after a measurement (\ref{itm:decoherence}).

\begin{defn}[Non-classicality by measurement disturbance~\cite{luo:pr2008a,horodecki2005local}]\label{def:measurementdisturbance}
Let $d(\cdot,\cdot)$ be a distance function. The nonclassicality of a $n$-partite quantum state $\rho_{\upto{n}}$ revealed on its subsystems $\Sigma\subseteq\upto{n}$, measured by the minimum disturbance caused by local projective measurements on each subsystem in $\Sigma$, with respect to the distance function $d(\cdot,\cdot)$, is defined as
\begin{equation}
\mathcal{D}_{d(\cdot,\cdot)}^\Sigma(\rho_{\upto{n}}):=
\min_{\bigotimes_{i\in\Sigma}\bas{i}}d\left(\rho_{\upto{n}},\bigotimes_{i\in\Sigma}\Pi_{\bas{i}}[\rho_{\upto{n}}]\right),
\end{equation}
where $\Pi_{\bas{i}}$ denotes a complete projective measurement on subsystem $i$ on a complete orthonormal basis $\bas{i}$ and the minimum is taken over different choices of local bases for these projective measurements. If the distance function $d(\cdot,\cdot)$ is derived from a norm $\|\cdot\|$, we will use the shorthand notation $\mathcal{D}_{\|\cdot\|}^\Sigma(\rho_{\upto{n}})$.
\end{defn}
\begin{exmp}
We list a few measures of quantumness of correlations for bipartite states $\rho_{AB}$ based on this notion:
\begin{itemize}
	\item \emph{Zero-way deficit} \cite{horodecki2005local}:
	\begin{equation}
	\label{eq:zeroway}
	\begin{split}
	\Delta^\emptyset(\rho_{AB}):&= \mathcal{D}_{\rel{\cdot}{\cdot}}^{AB}(\rho_{AB})\\
	&=\min_{\bas{A}\otimes\bas{B}} S(\rho_{AB}\|(\Pi_{\bas{A}}\otimes\Pi_{\bas{B}})[\rho_{AB}]).
	\end{split}
	\end{equation}

 	The distance function here is the quantum relative entropy (see Section \ref{sec:notation}).
 	\item \emph{One-way deficit} \cite{horodecki2005local}:
	\begin{align*}
	\Delta^\rightarrow(\rho_{AB}):&=\mathcal{D}_{\rel{\cdot}{\cdot}}^{A}(\rho_{AB})\\
	&=\min_{\bas{A}} S(\rho_{AB}\|\Pi_{\bas{A}}[\rho_{AB}]).
	\end{align*}
	Again the distance function is the quantum relative entropy. Notice that the one-way deficit vanishes on $\CQ$ states while the zero-way deficit is a symmetric measure vanishing only on $\CC$ states.
	\item \emph{Geometric discord} \label{itm:geometricdiscord}\cite{dakic:prl2010a,luo2010geometric}:
	\begin{equation}
	\label{eq:geometricdiscord}
	\begin{split}
	 D_G^A(\rho_{AB}):&=\mathcal{D}_{\|\cdot\|_2^2}^{A}(\rho_{AB})\\
	&=\min_{\bas{A}}||\rho_{AB}-\Pi_{\bas{A}}[\rho_{AB}]||_2^2.
	\end{split}
	\end{equation}
 	The distance function here is the square of the Hilbert-Schmidt distance (see Section \ref{sec:notation}). For a discussion of some conceptual issues with the use of the geometric discord as quantumness measure, see~\cite{problemgeometric}.
\end{itemize}
\end{exmp}

\subsubsection{Distance-based measures of the nonclassicality of correlations}
The quantification of nonclassicality based on \ref{itm:distance}  follows a very common approach in quantum information theory. Here the quantumness of correlations  is defined as the minimum distance from the (suitably chosen) set of classical states. 

\begin{defn}[Non-classicality by distance from classical states]
Let $d(\cdot,\cdot)$ be any distance function. The nonclassicality of a $n$-partite quantum state $\rho_{\upto{n}}$ on its subsystems $\Sigma\subseteq\upto{n}$, measured by the distance from the set $\mathcal{C}(\Sigma)$ of states which are classical on each subsystem in $\Sigma$, with respect to the distance function $d(\cdot,\cdot)$, is defined as
\begin{equation}
\mathbb{D}_{d(\cdot,\cdot)}^\Sigma(\rho_{\upto{n}}):=
\min_{\eta\in\mathcal{C}(\Sigma)}d(\rho_{\upto{n}},\eta).
\end{equation}
 If the distance function $d(\cdot,\cdot)$ is derived from a norm $\|\cdot\|$, we will use the shorthand notation $\mathbb{D}_{\|\cdot\|}^\Sigma(\rho_{\upto{n}})$.
\end{defn}
\begin{exmp} Here we provide two examples of nonclassicality measures for a bipartite state $\rho_{AB}$ based on this notion that differ in  the choice of distance measures and relevant set of classical states (recall that $\CC$ stands for the set of states which are classical both on system $A$ and $B$ and $\CQ$ is the set of states which are classical on $A$):
\begin{itemize}
	\item \emph{Relative entropy of discord}~\cite{modi:prl2010a}:
	\begin{align*}
		D_R(\rho_{AB}):&=\mathbb{D}_{\rel{\cdot}{\cdot}}^{AB}(\rho_{AB})\\
		&=\min_{\eta\in\CC} \rel{\rho_{AB}}{\eta}.
	\end{align*}
	Here the distance function is the quantum relative entropy. It can be proven that the relative entropy of discord is equivalent to the zero-way deficit~\eqref{eq:zeroway}~\cite{modi:prl2010a}.
	\item \emph{Geometric discord}~\cite{dakic:prl2010a}:
	\begin{equation}
	\label{eq:geometricdiscord2}
	\begin{split}
		D_G(\rho_{AB}):&=\mathbb{D}_{\|\cdot\|_2^2}^{A}(\rho_{AB})\\
		&=\min_{\eta\in\CQ}||\rho_{AB}-\eta||_2^2.
	\end{split}
	\end{equation}
	Here the distance function is (the square of) the Hilbert-Schmidt distance. It is easily verified that the two definitions~\eqref{eq:geometricdiscord} and \eqref{eq:geometricdiscord2} for the geometric discord are equivalent \cite{luo2010geometric}.
\end{itemize}  
\end{exmp}


\section{Negativity of Quantumness}\label{sec:NoQ}

The activation protocol (Definition \ref{def:quanumnessmeasure}) allows us to define a measure of nonclassical correlations $Q_E$ for each entanglement measure $E$ we may want to consider. Through this mapping, some entanglement measures generate known nonclassicality measures and others generates new ones. Negativity \cite{negativity} is a widely used entanglement measure with a very appealing property: it is easily computable\footnote{Unlike other operationally more meaningful entanglement measures such as entanglement cost, distillable entanglement and entanglement of formation, which typically involve a nontrivial optimization over a large set of parameters~\cite{horodecki2009quantum}.}. For the rest of the paper, we will study the measure of nonclassicality of quantum states based on the activation protocol (Definition~\ref{def:quanumnessmeasure}) and on the choice of negativity as entanglement measure, $E \equiv {\cal N}$.

The negativity (of entanglement) is defined as follows.
\begin{defn}[Negativity (of entanglement)~\cite{negativity}]
Let $\rho_{AB}$ be a bipartite quantum state. The \emph{negativity (of entanglement)} of $\rho_{AB}$ is defined as
\[
\neg_{A:B}(\rho_{AB}):=\frac{||\rho_{AB}^\Gamma||_1-1}{2},
\]
where the subscript of $\neg$ denotes the bipartition with respect to which the entanglement is being measured, the superscript~$^\Gamma$ on $\rho_{AB}$ denotes its partial transpose and $\|\cdot\|_1$ is the Schatten 1-norm. This definition assumes, as we do in the rest of the paper, that we are dealing with normalized states.
\end{defn}
It can be immediately verified that the negativity of entanglement is independent both of the choice of the party on which the partial transposition is considered and of the choice of local basis in which the local transposition is taken. The negativity of quantumness can then be defined as follows.
\begin{defn}[Negativity of quantumness (NoQ)~\cite{piani:prl2011a}]
The \emph{negativity of quantumness} (NoQ) of a $n$-partite quantum state $\rho_{\upto{n}}$ on  subsystems $\Sigma\subseteq\upto{n}$ is defined as
\begin{equation}\label{eq:defNoQ}
Q_\neg^\Sigma(\rho_{\upto{n}}):=\min\neg_{\upto{n}:\Sigma'}(\tilde{\rho}_{\Xi}),
\end{equation}
where the minimum is taken over all pre-measurement states $\tilde{\rho}_{\Xi}$, $\Xi=\{1,2,\ldots,n\}\cup\Sigma'$, of the quantum systems $\{1,2,\ldots,n\}$ and the measurement apparatuses $\Sigma'$ associated with the individually measured systems $\Sigma\subseteq \{1,2,\ldots,n\}$.
\end{defn}
Notice that the NoQ has been sometimes referred to as minimum entanglement potential \cite{chavesito}.


\subsection{Total negativity of quantumness}

The total (or two-sided, in the case there are only two subsystems) quantumness (of correlations) of a quantum state is quantified by the amount of apparatus-system entanglement in a pre-measurement state when every subsystem is measured individually, i.e., $\Sigma= \{1,2,\ldots,n\}$. Adopting the NoQ as a measure, the corresponding explicit expression for the total NoQ of arbitrary multipartite states was given in \cite{piani:prl2011a}. Here we will derive this expression again as it will be useful for the analysis made later in this paper.

First, we observe that the pre-measurement states for the case of studying total quantumness have a very particular form.

\begin{defn}[Maximally correlated state (MCS)]
A bipartite state $\tau_{AB}$ of systems $A$ and $B$ is said to have the \emph{maximally correlated form} if it can be expressed as $\tau_{AB}=\sum_{i,j=1}^n\tau_{ij}\state{a_i}{a_j}_A\otimes \state{b_i}{b_j}_B$ for some $\tau_{ij}\in\complex$ with respect to some orthonormal basis of each system $\{|a_i\rangle\}$ and $\{|b_j\rangle\}$ and $n=\min{\{\dim{A},\dim{B}\}}$. Note that Hermiticity implies $\tau_{ij}=\tau_{ji}^*$. A state that can be represented in a maximally correlated form is called \emph{maximally correlated state (MCS)}.
\end{defn}

MCSs have some remarkable properties. E.g.,  if a MCS  has negative partial transpose the entanglement contained in the state is distillable and, moreover, distillable entanglement and relative entropy of entanglement coincide for this set of states~\cite{rains:itit2001a}. Moreover, for any quantumness measure, the quantumness of a MCS is the same as its corresponding entanglement as proven in Appendix~\ref{sec:equality of ent and quant}\footnote{Recently it has been proven that some entanglement measures and some nonclassical correlation measures coincide for a more general class of states \cite{coles:2012a}.}. At the same time, there is a simple analytic expression for both the  eigenvalues and the eigenvectors of its partial transpose.

\begin{lem}
\label{lem:spectrumMCS}
The partial transpose\footnote{Here taken with respect to subsystem $B$, in the maximally correlated basis $\{|b_j\rangle\}$; while the eigenvalues of the partially transposed state do not depend on this details, the eigenvectors do.} $\tau_{AB}^\Gamma=\sum_{i,j=1}^n\tau_{ij}\state{a_i}{a_j}_A\otimes \state{b_j}{b_i}_B$ of a MCS $\tau=\sum_{i,j=1}^n\tau_{ij}\state{a_i}{a_j}_A\otimes \state{b_i}{b_j}_B$ has the following eigenvalues and corresponding eigenvectors:
\begin{align*}
\tau_{ii} \qquad &\text{for}\quad \ket{a_i}\otimes \ket{b_i},
\\
\pm|\tau_{ij}| \quad &\text{for}\quad \frac{1}{\sqrt{2}}\left(\ket{a_i}\otimes \ket{b_j}\pm \frac{\tau_{ji}}{|\tau_{ji}|}\ket{a_j}\otimes \ket{b_i}\right), \quad\text{with }  i> j
\end{align*}
\end{lem}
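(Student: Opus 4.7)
The plan is to exploit the near-permutation structure of the partial transpose on the product basis: once one checks that $\tau_{AB}^\Gamma$ decomposes the Hilbert space into invariant subspaces of dimension at most two, the eigenproblem reduces to diagonalizing a collection of elementary $2\times 2$ blocks.

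\textbf{Step 1: Action on the product basis.} I would begin by applying $\tau_{AB}^\Gamma=\sum_{i,j}\tau_{ij}\state{a_i}{a_j}\otimes\state{b_j}{b_i}$ to an arbitrary product basis vector $\ket{a_k}\otimes\ket{b_l}$; orthonormality of the $\{\ket{a_i}\}$ and $\{\ket{b_j}\}$ collapses the double sum to $\tau_{lk}\,\ket{a_l}\otimes\ket{b_k}$. This single identity immediately yields two structural facts: (i) each one-dimensional subspace $\mathrm{span}\{\ket{a_i}\otimes\ket{b_i}\}$ is invariant and is an eigenspace of $\tau_{AB}^\Gamma$ with eigenvalue $\tau_{ii}$; (ii) for every pair $i\neq j$ the two-dimensional subspace $V_{ij}:=\mathrm{span}\{\ket{a_i}\otimes\ket{b_j},\,\ket{a_j}\otimes\ket{b_i}\}$ is invariant under $\tau_{AB}^\Gamma$.

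\textbf{Step 2: Diagonalize each $2\times 2$ block.} For each pair $i>j$ I would write the restriction of $\tau_{AB}^\Gamma$ to $V_{ij}$ in the ordered basis $(\ket{a_i}\otimes\ket{b_j},\,\ket{a_j}\otimes\ket{b_i})$; by Step 1 this is the Hermitian matrix with zero diagonal and off-diagonal entries $\tau_{ij},\tau_{ji}$, where Hermiticity $\tau_{ji}=\tau_{ij}^{*}$ is inherited from $\tau=\tau^\dagger$. Its characteristic polynomial is $\lambda^{2}-|\tau_{ij}|^{2}$, so the two eigenvalues in the block are $\pm|\tau_{ij}|$, and solving $\tau_{ji}\,x=\pm|\tau_{ij}|\,y$ produces the normalized eigenvectors $\tfrac{1}{\sqrt{2}}\bigl(\ket{a_i}\otimes\ket{b_j}\pm\tfrac{\tau_{ji}}{|\tau_{ji}|}\ket{a_j}\otimes\ket{b_i}\bigr)$, matching the stated formula.

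Finally, I would close the argument with a dimension count: the diagonal eigenvectors and the pairs from each $V_{ij}$ give $n+2\binom{n}{2}=n^{2}$ mutually orthogonal eigenvectors, hence an orthonormal eigenbasis of $\tau_{AB}^\Gamma$. There is essentially no obstacle here---the argument is pure $2\times 2$ linear algebra once the invariant subspace decomposition is spotted. The only cosmetic caveat is the degenerate case $\tau_{ij}=0$, in which the phase $\tau_{ji}/|\tau_{ji}|$ is undefined; there both block eigenvalues collapse to zero and any orthonormal basis of $V_{ij}$ serves as eigenbasis, so the statement is to be read modulo this standard convention.
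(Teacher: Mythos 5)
Your proposal is correct and follows essentially the same route as the paper: both exploit the fact that $\tau_{AB}^\Gamma$ is a generalized permutation matrix on the product basis, decompose the space into the one-dimensional subspaces $\mathrm{span}\{\ket{a_i}\otimes\ket{b_i}\}$ and the two-dimensional invariant subspaces spanned by $\ket{a_i}\otimes\ket{b_j}$ and $\ket{a_j}\otimes\ket{b_i}$, and read off the spectrum from the resulting $2\times 2$ blocks $\bigl(\begin{smallmatrix}0 & \tau_{ij}\\ \tau_{ji} & 0\end{smallmatrix}\bigr)$. Your version is in fact slightly more explicit than the paper's (which phrases the decomposition as conjugation by a permutation matrix and omits the elementary $2\times2$ diagonalization), and your remark on the degenerate case $\tau_{ij}=0$ is a welcome clarification.
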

\begin{proof}
Notice that $\rho_{AB}^{\Gamma}$ with respect to the basis $\{\ket{a_i}\otimes\ket{b_j}\}$ 
 is a generalized permutation matrix, i.e., there is only one nonzero entry on each row and column. Therefore by the action of an appropriate permutation matrix $P$, it can be transformed into the following form (the off-diagonal blocks have all vanishing entries; we do not indicate all the zero entries):
\[
P\microspace\rho_{AB}^{\Gamma} \microspace P^{-1}=
\left(
\begin{BMAT}(rc){c.c.c.c}{c.c.c.c}
 \begin{BMAT}(rc){ccc}{ccc}
        \tau_{11} & &\\
        & \ddots &\\
        & & \tau_{nn}
    \end{BMAT} & & &\\
 & \begin{BMAT}(rc){cc}{cc}
       0& \tau_{12}\\
         \tau_{21}& 0
    \end{BMAT}&&\\
    &&\ddots&\\
    &&&\begin{BMAT}(rc){cc}{cc}
       0& \tau_{n-1,n}\\
         \tau_{n,n-1}& 0
    \end{BMAT}
\end{BMAT}
\right).
\]
Each diagonal block corresponds to an invariant subspace (under the action of $P\microspace\rho_{AB}^{\Gamma} \microspace P^{-1}$). The rows and columns of the upper-left block corresponds to the vectors $\ket{a_i}\otimes \ket{b_i}$'s, while the entries of the other two-by-two on-diagonal blocks each correspond to each $\ket{a_i}\otimes \ket{b_j}$ and $\ket{a_j}\otimes \ket{b_i}$, for $i\neq j$. Hence $P\microspace\rho_{AB}^{\Gamma} \microspace P^{-1}$ has the eigenvalues and eigenvectors above.
\end{proof}

From this we get immediately the following.

\begin{cor}[Negativity of a MCS]\label{cor:negMCS}
The negativity of a MCS $\tau_{AB}=\sum_{i,j=1}^n\tau_{ij}\state{a_i}{a_j}_A\otimes \state{b_i}{b_j}_B$ is
\begin{equation}
\neg(\rho_{AB})=\frac{\sum_{i,j}|\tau_{ij}|-1}{2}.
\end{equation}
\end{cor}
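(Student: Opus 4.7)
My plan is to compute the negativity directly from the spectral data of the partial transpose provided by Lemma~\ref{lem:spectrumMCS}. Since negativity is defined via the trace norm of $\tau_{AB}^{\Gamma}$, and the trace norm of a Hermitian operator equals the sum of the absolute values of its eigenvalues, the problem reduces to summing $|\lambda|$ over the eigenvalues listed in the lemma.

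Concretely, the lemma supplies $n$ eigenvalues $\tau_{ii}$ (one for each diagonal entry), and, for every pair $i>j$, a pair $\pm|\tau_{ij}|$. Summing absolute values gives
\[
\|\tau_{AB}^{\Gamma}\|_1 \;=\; \sum_{i}|\tau_{ii}| \;+\; 2\sum_{i>j}|\tau_{ij}|.
\]
At this point I would invoke two elementary facts: (i) the diagonal entries $\tau_{ii}$ of a density matrix are non-negative, so $|\tau_{ii}|=\tau_{ii}$; and (ii) Hermiticity of $\tau_{AB}$ enforces $\tau_{ij}=\tau_{ji}^{*}$, hence $|\tau_{ij}|=|\tau_{ji}|$, which lets me fold the factor of $2$ into a sum over all off-diagonal pairs: $2\sum_{i>j}|\tau_{ij}|=\sum_{i\neq j}|\tau_{ij}|$. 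Combining the two contributions yields $\|\tau_{AB}^{\Gamma}\|_1=\sum_{i,j}|\tau_{ij}|$.

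Plugging this into the definition $\neg(\tau_{AB})=(\|\tau_{AB}^{\Gamma}\|_1-1)/2$ delivers the claimed formula. I do not anticipate any real obstacle: the whole argument is a one-line bookkeeping exercise once Lemma~\ref{lem:spectrumMCS} is in hand. The only subtlety worth flagging is checking the eigenvalue count ($n + 2\binom{n}{2}=n^2$) to make sure no eigenvalue has been double-counted or missed, and to note that $\sum_i \tau_{ii} = \Tr(\tau_{AB}) = 1$ is consistent with the $-1$ in the definition of negativity.
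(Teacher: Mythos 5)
Your proposal is correct and follows exactly the paper's route: the corollary is stated as an immediate consequence of Lemma~\ref{lem:spectrumMCS}, obtained by summing the absolute values of the listed eigenvalues of $\tau_{AB}^{\Gamma}$ to get $\|\tau_{AB}^{\Gamma}\|_1=\sum_{i,j}|\tau_{ij}|$ and inserting this into the definition of negativity. The bookkeeping details you flag (non-negativity of the $\tau_{ii}$, the eigenvalue count, and $\sum_i\tau_{ii}=1$) are exactly the implicit steps the paper leaves to the reader.
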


The quantity $\sum_{i,j}|\tau_{ij}|$ in the above corollary  is the sum of the absolute value of entries of the matrix $[\tau_{ij}]$ and it can be seen as the $l_1$-norm of the matrix in the maximally correlated basis (see Section~\ref{sec:notation} and~Appendix~\ref{sec: l1-norm}).  The negativity of a MCS can then be expressed as
\[
\neg(\rho_{AB})=\frac{\lnorm{\tau_{AB}}^{\{|a_i\rangle\otimes|b_j\rangle\}}-1}{2},
\]
where the superscript indicates that the $l_1$-norm is taken for the matrix representation of $\tau_{AB}$ with respect to the basis $\{|a_i\rangle\otimes|b_j\rangle\}$.

We are now ready to prove the following.

\begin{thm}[Total Negativity of Quantumness~\cite{piani:prl2011a}]\label{thm:overallNoQ}
Let $\rho_{\upto{n}}$ be an $n$-partite quantum state. Then the total negativity of quantumness, i.e., the minimum amount of entanglement w.r.t. negativity between system-apparatuses when each subsystem is measured independently, is
\[
Q_\neg^{\upto{n}}(\rho_{\upto{n}})=\min_{\bigotimes_{i\in\upto{n}}\bas{i}}\frac{\lnorm{\rho_{\upto{n}}}^{\bigotimes_{i\in\upto{n}}\bas{i}}-1}{2},
\]
where the minimum is taken over different choices of factorized basis $\bigotimes_{i\in\upto{n}}\bas{i}$ for the (local) measurement interaction.
\end{thm}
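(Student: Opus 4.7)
The plan is to exploit the particularly rigid structure of the pre-measurement state when every subsystem is measured, and then reduce the negativity computation to Corollary~\ref{cor:negMCS}. First I would fix a choice of local bases $\bas{i}=\{\ket{a^{(i)}_{k_i}}\}$ for each $i\in\upto{n}$ and write the input state in the factorized basis as
\[
\rho_{\upto{n}}=\sum_{\vec k,\vec l}\rho_{\vec k,\vec l}\,|a^{(1)}_{k_1}\cdots a^{(n)}_{k_n}\rangle\langle a^{(1)}_{l_1}\cdots a^{(n)}_{l_n}|.
\]
Applying the measurement interaction $V_i=V^{\bas{i}}_{i\mapsto ii'}$ on each system sends $\ket{a^{(i)}_{k_i}}_i\mapsto \ket{a^{(i)}_{k_i}}_i\ket{k_i}_{i'}$, so the resulting pre-measurement state reads
\[
\tilde\rho_{\Xi}=\sum_{\vec k,\vec l}\rho_{\vec k,\vec l}\,|\vec a_{\vec k}\rangle\langle \vec a_{\vec l}|\otimes|\vec k\rangle\langle \vec l|,
\]
with $|\vec a_{\vec k}\rangle=\bigotimes_i\ket{a^{(i)}_{k_i}}$ on $\upto{n}$ and $|\vec k\rangle=\bigotimes_i\ket{k_i}$ on $\Sigma'=\{1',\dots,n'\}$.

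Next I would observe that, with respect to the orthonormal bases $\{|\vec a_{\vec k}\rangle\}$ and $\{|\vec k\rangle\}$ (indexed by the composite label $\vec k$), the state $\tilde\rho_{\Xi}$ is in the maximally correlated form across the bipartition $\upto{n}:\Sigma'$, with coefficient matrix exactly $[\rho_{\vec k,\vec l}]$. Corollary~\ref{cor:negMCS} then gives
\[
\neg_{\upto{n}:\Sigma'}(\tilde\rho_{\Xi})=\frac{\sum_{\vec k,\vec l}|\rho_{\vec k,\vec l}|-1}{2}=\frac{\lnorm{\rho_{\upto{n}}}^{\bigotimes_{i}\bas{i}}-1}{2},
\]
since the $l_1$-norm of $\rho_{\upto{n}}$ in the product basis $\bigotimes_i\bas{i}$ is by definition the sum of the moduli of its entries in that basis.

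Finally, because Definition~\ref{def:quanumnessmeasure} minimises over all choices of local bases $\bigotimes_i\bas{i}$ entering the measurement interactions, and every such pre-measurement state is of the form above, minimising both sides over $\bigotimes_i\bas{i}$ yields the claimed expression. The only nontrivial step is the identification of $\tilde\rho_{\Xi}$ as a MCS across the system--apparatus cut with coefficient matrix $[\rho_{\vec k,\vec l}]$; once this is in hand, Corollary~\ref{cor:negMCS} does the heavy lifting and the rest is bookkeeping. No obstacle beyond making sure the matched orthonormal bases chosen on the two sides of the cut correctly reproduce the original matrix elements of $\rho_{\upto{n}}$ is expected.
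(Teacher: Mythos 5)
Your proposal is correct and follows essentially the same route as the paper: identify the pre-measurement state as a maximally correlated state across the system--apparatus cut with coefficient matrix given by the entries of $\rho_{\upto{n}}$ in the chosen product basis, invoke Corollary~\ref{cor:negMCS}, and minimize over local bases. The only difference is cosmetic: you carry out the argument directly for $n$ parties with a composite index $\vec k$, whereas the paper writes out the bipartite case explicitly and notes the multipartite extension is straightforward.
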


\begin{proof}

This result is a direct consequence of every pre-measurement state being a MCS (between system and apparatus) and it is true regardless of the number of systems. For the sake of concreteness we will only prove it for the bipartite case, the extension to the multiparty case being straightforward.

Let $\rho_{AB}$ be a bipartite state and let $\tilde{\rho}_{ABA'B'}$ be its pre-measurement state produced by measurement interactions $V_{A\mapsto AA'}^{\{\ket{a_i}_A\}}$ and $U_{B\mapsto BB'}^{\{\ket{b_i}_B\}}$, where $\bas{A}=\{|a_i\rangle\}$ and $\bas{B}=\{|b_i\rangle\}$ are some orthonormal bases of subsystem $A$ and $B$:
\[
\tilde{\rho}_{ABA'B'}=\sum_{ijkl}\rho_{ijkl}\state{a_i}{a_j}_A\otimes \state{b_k}{b_l}_B\otimes \state{i}{j}_{A'}\otimes \state{k}{l}_{B'},
\]
where $\rho_{ijkl}=\bra{a_i}\bra{b_k}\rho_{AB}\ket{a_j}\ket{b_l}$. Observe that this state has indeed the maximally correlated  form in the $(AB):(A'B')$ cut. Therefore Corollary \autoref{cor:negMCS} implies
\begin{align}
Q_\neg^{AB}(\rho_{AB})&=\min_{\bas{A}\otimes\bas{B}}\frac{\sum_{ijkl}|\rho_{ijkl}|-1}{2}
\\
&=\min_{\bas{A}\otimes\bas{B}}\frac{\lnorm{\rho_{AB}}^{\bas{A}\otimes\bas{B}}-1}{2}.
\end{align}
\end{proof}
\noindent This relation can be rewritten as
\beq
\label{eq:twosidedNoQ}
Q_\neg^{AB}(\rho_{AB})=\min_{\bas{A}\otimes\bas{B}}\frac{1}{2}\lnorm{\rho_{AB}-(\Pi_{\bas{A}}\otimes\Pi_{\bas{B}})[\rho_{AB}]}^{\bas{A}\otimes\bas{B}},
\eeq
where $\Pi_{\bas{A}}$ and $\Pi_{\bas{B}}$ represent complete projective measurements on systems $A$ and $B$, respectively, on the orthonormal basis $\bas{A}$ and $\bas{B}$. The superscript in the norm expression indicates that the $l_1$-norm is taken in the same local basis as for the projective measurement. Therefore the total NoQ can be given the following interpretation.

\begin{cor}[Decoherence interpretation of total NoQ]\label{cor: decoherence interpretation}
The total NoQ of $n$-partite quantum state $\rho_{\upto{n}}$ is the minimum disturbance caused by a complete projective measurement on every system $\upto{n}$ as quantified by the $l_1$-norm in the basis of the measurement.
\end{cor}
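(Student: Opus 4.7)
The plan is to show that the minimization problem in Theorem~\ref{thm:overallNoQ} is precisely the one appearing in Definition~\ref{def:measurementdisturbance} when the distance is taken as the (basis-dependent) $l_1$-norm in the product basis used for the measurement. Concretely, I will argue that for any product basis $\bigotimes_{i\in\upto{n}}\bas{i}$ one has the identity
\[
\lnorm{\rho_{\upto{n}}}^{\bigotimes_i\bas{i}} - 1 \;=\; \lnorm{\rho_{\upto{n}} - \bigl(\bigotimes_i \Pi_{\bas{i}}\bigr)[\rho_{\upto{n}}]}^{\bigotimes_i\bas{i}},
\]
whence dividing by $2$ and minimizing over product bases gives exactly the statement of the corollary in light of Theorem~\ref{thm:overallNoQ}.

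The key observation is a decomposition of the $l_1$-norm by diagonal versus off-diagonal entries in the fixed product basis. First, writing $\rho_{\upto{n}}$ in components with respect to $\bigotimes_i\bas{i}$, the complete local projective measurement $\bigotimes_i \Pi_{\bas{i}}$ acts as \emph{complete dephasing} in this basis, preserving the diagonal entries and annihilating all off-diagonal entries. Therefore $\rho_{\upto{n}} - \bigl(\bigotimes_i \Pi_{\bas{i}}\bigr)[\rho_{\upto{n}}]$ has exactly the off-diagonal entries of $\rho_{\upto{n}}$ and vanishing diagonal; its $l_1$-norm is the sum of moduli of the off-diagonal entries of $\rho_{\upto{n}}$ in that basis.

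Second, since $\rho_{\upto{n}}$ is a positive semidefinite normalized state, its diagonal entries in any orthonormal basis are non-negative and sum to $\Tr(\rho_{\upto{n}}) = 1$. Hence the sum of moduli of the diagonal entries equals $1$, which combined with the previous step yields the displayed identity.

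No real obstacle arises: the argument is essentially bookkeeping on matrix entries, and the only thing one has to invoke beyond Theorem~\ref{thm:overallNoQ} is the positivity of diagonal entries of a density matrix in an orthonormal basis. The conceptual content is that the $l_1$-norm in a fixed product basis decouples the diagonal (trace) part from the coherences, so that the ``excess'' of $\lnorm{\rho}^{\bigotimes_i\bas{i}}$ over $1$ is literally the $l_1$-cost of the coherences destroyed by the local projective measurement. Minimizing over the choice of product bases then reinterprets $Q_{\neg}^{\upto{n}}$ as the minimum measurement-induced disturbance with respect to the $l_1$-distance evaluated in the basis of the measurement, yielding the corollary.
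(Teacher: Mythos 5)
Your proof is correct and follows essentially the same route as the paper: the paper obtains the corollary by rewriting Theorem~\ref{thm:overallNoQ} as Eq.~\eqref{eq:twosidedNoQ}, using precisely the observation that in a fixed product basis the local dephasing kills the off-diagonal entries while the diagonal entries of a density matrix are nonnegative and sum to one, so that $\lnorm{\rho}^{\bigotimes_i\bas{i}}-1$ equals the $l_1$-norm of the coherences removed by the measurement. Your write-up simply makes this bookkeeping explicit, which the paper leaves implicit.
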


Notice that the total NoQ corresponds to the absolute sum of the off-diagonal entries ({\it coherences}) of the density matrix, minimized over all local product bases \cite{piani:prl2011a}.


\subsection{Partial negativity of quantumness}

Here we study the quantumness of correlations due to the nonclassical nature of single subsystems. Notice that this notion of partial (or one-sided, in the case there are only two subsystems) quantumness is well defined since the activation protocol Theorem \ref{thm: activation protocol} applies.

\begin{thm}[Partial negativity of quantumness]
Let $\rho_{\upto{n}}$ be an $n$-partite quantum state and  $\Sigma\subseteq\upto{n}$ a subset of subsystems, and denote the elements of $\Sigma$ as $\{k_1,k_2,\cdots\}$. For each subsystem $k\in\Sigma$, let $\bas{k}=\left\{\ket{a_{i_k}^{(k)}}\right\}_{i_k}$ be one orthonormal basis of $k$.   Then the \emph{partial negativity of quantumness on subsystems $\Sigma\subset\upto{n}$} is
\[
Q_\neg^{\Sigma}(\rho_{\upto{n}})
=
\min_{\bigotimes_{k\in\Sigma}\bas{k}}
\frac{1}{2}
\left(\sum_{i_{k_1},i_{k_2},\ldots} \|\rho_{i_{k_1},i_{k_2},\ldots}\|_1-1\right),
\]
where $\rho_{i_1,i_2,\ldots}=\left(\bra{a^{(k_1)}_{i_1}}\bra{a^{(k_2)}_{i_2}}\cdots\right)\rho_{\upto{n}}\left(\ket{a^{(k_1)}_{i_1}}\ket{a^{(k_2)}_{i_2}}\cdots\right)$ and the minimum is taken over different choices of measurement interaction defined by $\bigotimes_{k\in\Sigma}\bas{k}$.
\end{thm}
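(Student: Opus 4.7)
The plan is to reduce the partial case to a block analogue of the argument used for the total NoQ in Theorem \ref{thm:overallNoQ}, so that the spectral computation of Lemma \ref{lem:spectrumMCS} lifts directly to a matrix-valued setting.

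First, I would write the pre-measurement state explicitly in block form. Grouping the indices on the measured subsystems $\Sigma=\{k_1,k_2,\ldots\}$ into a single multi-index $\vec i=(i_{k_1},i_{k_2},\ldots)$, and analogously grouping the ancillas in $\Sigma'$, a direct computation from the definitions gives
\[
\tilde\rho_{\Xi}=\sum_{\vec i,\vec j}\Bigl(\bigotimes_{k\in\Sigma}\ket{a^{(k)}_{i_k}}\bra{a^{(k)}_{j_k}}\Bigr)_{\Sigma}\otimes\rho_{\vec i,\vec j}\otimes\Bigl(\bigotimes_{k\in\Sigma}\ket{i_k}\bra{j_k}\Bigr)_{\Sigma'},
\]
where $\bar\Sigma:=\upto{n}\setminus\Sigma$ denotes the unmeasured subsystems and $\rho_{\vec i,\vec j}\in\lin{\bar\Sigma}$ is the block of $\rho_{\upto{n}}$ sandwiched between the product vectors $\bigotimes_{k\in\Sigma}\ket{a^{(k)}_{i_k}}$ and $\bigotimes_{k\in\Sigma}\ket{a^{(k)}_{j_k}}$. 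This is the natural block lift of the MCS form used in Theorem \ref{thm:overallNoQ}: the scalars $\tau_{ij}$ there become operators $\rho_{\vec i,\vec j}$ acting on $\bar\Sigma$, with diagonal blocks $\rho_{\vec i,\vec i}\ge 0$.

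Next, I would take the partial transpose on $\Sigma'$ (the partition across which the negativity in \eqref{eq:defNoQ} is computed), which swaps $\vec i\leftrightarrow\vec j$ on the ancilla factor. Mimicking the permutation argument of Lemma \ref{lem:spectrumMCS}, $\tilde\rho_{\Xi}^{\Gamma_{\Sigma'}}$ decomposes as a direct sum of sectors on $\Sigma\Sigma'$: a diagonal sector spanned by $\{\ket{a_{\vec i}}_{\Sigma}\ket{\vec i}_{\Sigma'}\}$, on which it acts as the positive operator $\rho_{\vec i,\vec i}$ on $\bar\Sigma$; and, for each unordered pair $\vec i\neq\vec j$, a two-dimensional $\Sigma\Sigma'$ subspace spanned by $\ket{a_{\vec i}}_{\Sigma}\ket{\vec j}_{\Sigma'}$ and $\ket{a_{\vec j}}_{\Sigma}\ket{\vec i}_{\Sigma'}$, on which it acts as the operator-valued $2\times 2$ block $\bigl(\begin{smallmatrix} 0 & \rho_{\vec i,\vec j}\\ \rho_{\vec j,\vec i} & 0 \end{smallmatrix}\bigr)$ on $\bar\Sigma$. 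Since $\rho_{\vec j,\vec i}=\rho_{\vec i,\vec j}^{\dagger}$, each off-diagonal block is Hermitian with singular values coming in $\pm$-pairs $\sigma_\ell(\rho_{\vec i,\vec j})$ and contributes $2\|\rho_{\vec i,\vec j}\|_1$ to the trace norm, while the diagonal sector contributes $\sum_{\vec i}\|\rho_{\vec i,\vec i}\|_1$. Summing yields $\|\tilde\rho_{\Xi}^{\Gamma_{\Sigma'}}\|_1=\sum_{\vec i,\vec j}\|\rho_{\vec i,\vec j}\|_1$; plugging this into $\neg_{\upto{n}:\Sigma'}(\tilde\rho_{\Xi})=\frac{1}{2}(\|\tilde\rho_{\Xi}^{\Gamma_{\Sigma'}}\|_1-1)$ and minimising over the product bases $\bigotimes_{k\in\Sigma}\bas{k}$ produces the stated formula.

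The one step requiring more care than in the scalar MCS of Lemma \ref{lem:spectrumMCS} is verifying that each off-diagonal pair $(\vec i,\vec j)\leftrightarrow(\vec j,\vec i)$ genuinely decouples as an operator-valued $2\times 2$ block on $\bar\Sigma$ independently of all other pairs. This rests on the observation that the elementary terms $(\ket{a_{\vec i}}\bra{a_{\vec j}})_{\Sigma}\otimes(\ket{\vec j}\bra{\vec i})_{\Sigma'}$ and its Hermitian conjugate have joint $\Sigma\Sigma'$ support precisely on the two-dimensional subspace above and act as the identity on the $\bar\Sigma$ slot; a permutation on $\Sigma\Sigma'$ analogous to the matrix $P$ of Lemma \ref{lem:spectrumMCS} then reduces $\tilde\rho_{\Xi}^{\Gamma_{\Sigma'}}$ to a direct sum of operators on $\lin{\bar\Sigma}$ whose trace norms can be read off immediately.
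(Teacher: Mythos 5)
Your proof is correct and follows essentially the same route as the paper's: write the pre-measurement state in block form, partially transpose on the apparatus, permute into a block structure whose off-diagonal sectors are operator-valued anti-diagonal $2\times 2$ blocks with trace norm $\|\rho_{\vec i,\vec j}\|_1+\|\rho_{\vec j,\vec i}\|_1$, and sum. The only difference is that you carry out the general multipartite case with multi-indices explicitly, whereas the paper restricts to the bipartite case for concreteness and notes the extension is straightforward.
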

\begin{proof}
The same proof applies to a system with an arbitrary number of subsystems, but for the sake of clarity and concreteness only the bipartite case is explicitly treated here. Let $\rho_{AB}$ be a bipartite quantum state; we want to calculate the QoN on subsystem $A$, $Q_\neg^A(\rho_{AB})$. Let $\{a_i\}_{i\in\upto{m}}$ be an orthonormal basis for subsystem $A$, which we assume to have finite dimension $\dim{A}=m$. Suppose that the pre-measurement state
\[
\tilde{\rho}_{ABA'}=\sum_{i,j=1}^m\state{a_i}{a_j}_A\otimes \rho_{ij}\otimes \state{i}{j}_{A'},
\]
where $\rho_{ij}=\rho_{ij}^B=\bra{a_i}\rho_{AB}\ket{a_j}$, is created by a measurement interaction $V_{A\mapsto AA'}^{\{\ket{a_i}_A\}}$. After the action by an appropriate permutation matrix, the partially transposed $\tilde{\rho}_{ABA'}$ on $A'$ can be written in a block-diagonal form as we did for the maximally correlated states in the proof of Lemma~\ref{lem:spectrumMCS}:
\[
P\microspace\tilde{\rho}_{ABA'}^{\Gamma} \microspace P^{-1}=
\left(
\begin{BMAT}(rc){c.c.c.c}{c.c.c.c}
 \begin{BMAT}(rc){ccc}{ccc}
        \rho_{11} & &\\
        & \ddots &\\
        & & \rho_{mm}
    \end{BMAT} & & &\\
 & \begin{BMAT}(rc){cc}{cc}
       0& \rho_{12}\\
         \rho_{21}& 0
    \end{BMAT}&&\\
    &&\ddots&\\
    &&&\begin{BMAT}(rc){cc}{cc}
       0& \rho_{m-1,m}\\
         \rho_{m,m-1}& 0
    \end{BMAT}
\end{BMAT}
\right).
\]
The only difference with the MCS case is that here the entries of the block matrices on the diagonal are matrices themselves rather than  scalars. It is well known that the eigenvalues of a Hermitian matrix  $\begin{pmatrix}
 0 & H\\
H^\dagger & 0 \\
\end{pmatrix}$ for an arbitrary matrix $H$ are given by the singular values of $H$, taken with both positive and negative sign. Therefore, while the Schatten $1$-norm of the upper-left block is clearly $\sum_{i=1}^{m}\|\rho_{ii}\|_1$,  that of the second block diagonal matrix is $\|\rho_{1,2}\|_1+\|\rho_{2,1}\|_1$, etc. Now, the Schatten $1$-norm of block diagonal matrices is simply the sum of Schatten $1$-norms of each sub-block. Therefore
\begin{equation}\label{eq:onesidedNoQ}
Q_\neg^A(\rho_{AB})=\min_{\bas{A}}\frac{1}{2}\left(\sum_{i,j}||\rho_{ij}||_1-1\right).
\end{equation}
\end{proof}


\section{Properties and interpretations of total and partial negativity of quantumness}\label{sec:interp}
\subsection{Properties of NoQ}
We list here some general properties of NoQ:
\begin{itemize}
\item
\emph{positivity}: NoQ is nonnegative for any quantum state;
\item
\emph{faithfulness}: NoQ  is faithful, i.e., it is zero if and only if the state is classical on the subsystems that are measured (all the subsystems in the case of the total NoQ);
\item
\emph{negativity of quantumness exceeds negativity of entanglement~\cite{piani:pr2012a}}: for any bipartite quantum state, the total NoQ exceeds the partial NoQ and both are always larger than the entanglement of the state as quantified by the negativity.
\end{itemize}
The first two properties are very natural requirements for a measure of nonclassical correlations. The first property follows directly from the definition; the second property follows from the activation protocol of Theorem \ref{thm: activation protocol} and the fact that for a MCS, negativity is nonzero if and only if the state is entangled~\cite{rains:itit2001a}. 
The third property follows from the hierarchy theorem of \cite{piani:pr2012a} and is exploited in the next subsection.

\subsection{Negativity of entanglement versus negativity of quantumness}\label{sec:NvsQN}

In \cite{khasin:pr2007a}, Khasin {\it et al.} studied the negativity of an arbitrary MCS and made a similar observation as ours---that is to say, the negativity of a MCS can be geometrically interpreted as the $l_1$-norm distance from the separable state given by considering only the diagonal component of that MCS. Using our notation, their result is the following.
\begin{thm}[\cite{khasin:pr2007a}]
Let $\tau_{AB}=\sum_{i,j=1}^n\tau_{ij}\state{a_i}{a_j}_A\otimes \state{b_i}{b_j}_B$  be a MCS for some $\tau_{ij}\in\complex$ with respect to some orthonormal basis of each system $\bas{A}=\{|a_i\rangle\}$ and $\bas{B}=\{|b_j\rangle\}$. Then,
\begin{equation}\label{eq: khasin}
\neg_{A:B}(\tau_{AB})=\frac{1}{2}\lnorm{\tau_{AB}-\Pi_{\bas{A}}\otimes\Pi_{\bas{B}}[\tau_{AB}]}^{\bas{A}\otimes\bas{B}}
\end{equation}
where $\Pi_A,\Pi_B$ are complete projective measurements on the basis $\bas{A},\bas{B}$.
\end{thm}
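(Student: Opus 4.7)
The plan is to derive \eqref{eq: khasin} as an immediate consequence of Corollary~\ref{cor:negMCS}, together with two elementary positivity and normalization facts about the diagonal coefficients $\tau_{ii}$. The whole statement reduces to rewriting $\sum_{i,j}|\tau_{ij}|-1$ as $\sum_{i\neq j}|\tau_{ij}|$ and recognizing the latter as an $l_1$-norm computed in the product basis $\bas{A}\otimes\bas{B}$.

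First, I would invoke Corollary~\ref{cor:negMCS} to write $\neg_{A:B}(\tau_{AB})=\tfrac{1}{2}\bigl(\sum_{i,j}|\tau_{ij}|-1\bigr)$. Next I would extract the diagonal coefficients using positivity of $\tau_{AB}$: since $\tau_{ii}=\langle a_i b_i|\tau_{AB}|a_i b_i\rangle\geq 0$, one has $|\tau_{ii}|=\tau_{ii}$, and since $\Tr(\tau_{AB})=\sum_i\tau_{ii}=1$ (the cross terms vanish against the trace), the sum can be split as
\begin{equation*}
\sum_{i,j}|\tau_{ij}|-1=\sum_{i\neq j}|\tau_{ij}|+\sum_i\tau_{ii}-1=\sum_{i\neq j}|\tau_{ij}|.
\end{equation*}

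Then I would compute the right-hand side of \eqref{eq: khasin} directly. The dephasing map $\Pi_{\bas{A}}\otimes\Pi_{\bas{B}}$ in the product basis kills every matrix element of $\tau_{AB}$ off the product-basis diagonal, and since $\tau_{AB}$ has nonzero entries only at the ``doubly diagonal'' positions of the form $(a_ib_i,a_jb_j)$, the surviving part is exactly $\sum_i \tau_{ii}\state{a_ib_i}{a_ib_i}$. Hence $\tau_{AB}-\Pi_{\bas{A}}\otimes\Pi_{\bas{B}}[\tau_{AB}]=\sum_{i\neq j}\tau_{ij}\state{a_i}{a_j}\otimes\state{b_i}{b_j}$, and its $l_1$-norm in the basis $\bas{A}\otimes\bas{B}$ is simply $\sum_{i\neq j}|\tau_{ij}|$. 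Comparing with the rewrite above yields \eqref{eq: khasin}.

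There is no real obstacle here: the argument is a two-line bookkeeping that the ``$-1$'' in Corollary~\ref{cor:negMCS} precisely cancels the product-basis diagonal of a MCS. The only subtle point worth being explicit about is that the $l_1$-norm is basis-dependent, so one must be careful that the norm in the statement is evaluated in the \emph{same} basis $\bas{A}\otimes\bas{B}$ that defines the MCS form of $\tau_{AB}$ and the complete projective measurements $\Pi_{\bas{A}},\Pi_{\bas{B}}$; this is exactly what the superscript on $\|\cdot\|_{l_1}$ ensures.
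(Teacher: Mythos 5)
Your proof is correct. The paper does not give a standalone argument for this theorem: it merely remarks that the identity follows by combining Eq.~\eqref{eq:twosidedNoQ} (the minimized $l_1$-norm formula for the total NoQ) with the equality $E_{A:B}=Q_E^{AB}$ for maximally correlated states proved in Appendix~\ref{sec:equality of ent and quant}, which in turn relies on the hierarchy theorem of Ref.~\cite{piani:pr2012a} and on the observation that the measurement interaction taken in the MCS basis reproduces the state up to local unitaries. Your route is more elementary and self-contained: you go straight from Corollary~\ref{cor:negMCS} (itself a consequence of the spectral Lemma~\ref{lem:spectrumMCS}), cancel the ``$-1$'' against $\sum_i\tau_{ii}=1$ using positivity of the diagonal coefficients, and verify by direct computation that the dephasing $\Pi_{\bas{A}}\otimes\Pi_{\bas{B}}$ removes exactly the doubly-diagonal entries $\tau_{ii}$ of the MCS. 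What the paper's detour buys is generality: the same two ingredients give the analogous statement for an arbitrary quantumness measure and simultaneously identify the MCS basis as optimal in the minimization of Eq.~\eqref{eq:twosidedNoQ}. What your argument buys is transparency: the identity \eqref{eq: khasin} is exposed as a two-line bookkeeping fact needing none of the activation machinery. Your closing caveat about the basis-dependence of the $l_1$-norm is exactly the right point to make explicit.
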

\noindent Observe that we have already obtained the same result, thanks to Eq.\eqref{eq:twosidedNoQ}  and the equality of entanglement and quantumness for a MCS (see Appendix~\ref{sec:equality of ent and quant} or Ref.~\cite{coles:2012a}).

Besides such a result, Khasin {\it et al.}~conjectured that the negativity of any bipartite state is upper-bounded by the above quantity minimized over the choice of local bases.  In this section, we prove their conjecture and moreover prove that for a MCS $\tau_{AB}$ the state $(\Pi_A\otimes\Pi_B)[\tau_{AB}]$  in Eq.\eqref{eq: khasin} is indeed the closest separable state of $\tau_{AB}$ with respect to the $l_1$-norm in the ${\bas{A}\otimes\bas{B}}$ maximally correlated basis.

With respect to the conjecture of Khasin {\it et al.}, we find the following.
\begin{thm}
For any bipartite state $\rho_{AB}$ it holds
\[
\neg(\rho_{AB})\leq \min_{\bas{A}\otimes\bas{B}}\frac{1}{2}\lnorm{\rho_{AB}-\Pi_{\bas{A}}\otimes \Pi_{\bas{B}}[\rho_{AB}]}^{\bas{A}\otimes\bas{B}},
\]
where the minimization is taken over the choice of local product bases $\bas{A}\otimes\bas{B}$, and $\Pi_A$ is a complete projection onto $\bas{A}$ (similarly for $B$).
\end{thm}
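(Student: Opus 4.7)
The plan is to recognize that the right-hand side of the conjectured inequality is nothing but the total negativity of quantumness $Q_\neg^{AB}(\rho_{AB})$ we have already characterized, and then to invoke the hierarchy between quantumness of correlations and entanglement. So the whole argument should reduce to combining two results already available in the excerpt.

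First, I would appeal to Eq.~\eqref{eq:twosidedNoQ}, which rewrites the total negativity of quantumness as
\[
Q_\neg^{AB}(\rho_{AB})=\min_{\bas{A}\otimes\bas{B}}\frac{1}{2}\lnorm{\rho_{AB}-(\Pi_{\bas{A}}\otimes\Pi_{\bas{B}})[\rho_{AB}]}^{\bas{A}\otimes\bas{B}}.
\]
This is precisely the quantity appearing on the right-hand side of the statement of the theorem, so the claim to prove becomes $\neg(\rho_{AB})\leq Q_\neg^{AB}(\rho_{AB})$.

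Second, I would invoke the third listed property of NoQ (``negativity of quantumness exceeds negativity of entanglement''), which is stated as a consequence of the hierarchy theorem of~\cite{piani:pr2012a}: for any bipartite state, the total NoQ is at least the negativity of entanglement. Unfolding the definition via the activation protocol, this is the statement that for every pre-measurement state $\tilde\rho_{ABA'B'}$ produced by measurement interactions $V_{A\mapsto AA'}\otimes V_{B\mapsto BB'}$, the entanglement (as measured by negativity) across the $AB:A'B'$ cut is at least the entanglement across the $A:B$ cut of the original $\rho_{AB}$. This follows because partial tracing over the apparatus systems $A'B'$ returns $\rho_{AB}$, and negativity is monotone under such partial tracing (indeed monotone under any LOCC, and in particular under discarding subsystems). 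Minimizing over all measurement bases then yields $\neg(\rho_{AB})\leq Q_\neg^{AB}(\rho_{AB})$, which combined with the first step is exactly the desired inequality.

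There is really no hard step here: the theorem is a packaging of two facts already established earlier in the paper. The only thing worth being careful about is to make sure the direction of the monotonicity of negativity under discarding subsystems matches the way the pre-measurement state is related to $\rho_{AB}$, i.e.\ that tracing out $A'B'$ from $\tilde\rho_{ABA'B'}$ indeed recovers $\rho_{AB}$ (which follows immediately from the isometric nature of the measurement interactions $V_{A\mapsto AA'}$ and $V_{B\mapsto BB'}$). With this noted, the proof is essentially a one-liner.
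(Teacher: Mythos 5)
Your proposal is correct and follows exactly the paper's own route: identify the right-hand side with the total negativity of quantumness via Eq.~\eqref{eq:twosidedNoQ} and then apply the hierarchy $\neg(\rho_{AB})\leq Q_{\neg}(\rho_{AB})$ from Ref.~\cite{piani:pr2012a}. Your added remark that the hierarchy ultimately rests on the monotonicity of negativity under discarding the apparatus systems $A'B'$ is a correct (and welcome) unpacking of the cited result, but it does not change the substance of the argument.
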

\begin{proof}
We know from \cite{piani:pr2012a} that $\neg(\rho_{AB})\leq Q_{\neg}(\rho_{AB})$ holds for all $\rho_{AB}$. The claim is then obtained combining this with Eq.~\eqref{eq:twosidedNoQ}.
\end{proof}

We are also able to prove the following.
\begin{thm}\label{thm: closestclassicalMCS}
Consider a MCS $\tau_{AB}=\sum_{ij}\tau_{ij}|a_i\rangle\langle a_j|\otimes |b_i\rangle\langle b_j|$ where $\bas{A}=\{\ket{a_i}\}$ and $\bas{B}=\{\ket{b_j}\}$ are any orthonormal bases of subsystem $A$ and $B$. Let us define a state $\sigma_{AB}=\sum_{i}\tau_{ii}|a_i\rangle\langle a_i|\otimes |b_i\rangle\langle b_i|$, i.e., $\sigma$ only has the diagonal components of $\tau_{AB}$ in the chosen local bases.  Then one of the closest separable states (or, more precisely, one of the closest PPT states) to $\tau_{AB}$ with respect to the $||\cdot||_{l_1}$ norm in the $\bas{A}\otimes\bas{B}$ basis is $\sigma_{AB}$:
\begin{align*}
 \lnorm{\tau_{AB}-\sigma_{AB}}^{\bas{A}\otimes\bas{B}}&=\min_{\xi\in\ppt}\lnorm{\tau_{AB}-\xi}^{\bas{A}\otimes\bas{B}}\\
&=\min_{\xi\in\sep}\lnorm{\tau_{AB}-\xi}^{\bas{A}\otimes\bas{B}},
\end{align*}
where $\ppt$ is the set of all $AB$ states with positive partial transpose, $\sep$($\subset\ppt$) is the set of all separable state in $AB$,  and the $l_1$-norm is taken with respect to the basis $\bas{A}\otimes\bas{B}$ in which $\tau_{AB}$ has the maximally correlated form.
\end{thm}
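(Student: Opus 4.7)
The plan is to exhibit an explicit lower bound on $\|\tau_{AB}-\xi\|_{l_1}$ valid for every PPT state $\xi$ and then verify that $\sigma_{AB}$ saturates it. The starting observation is that, by Corollary~\ref{cor:negMCS}, the negativity of the MCS $\tau_{AB}$ equals $\mathcal{N}(\tau_{AB})=\tfrac12(\sum_{ij}|\tau_{ij}|-1)$. Since $\tau_{AB}$ is a normalized state, the diagonal entries $\tau_{ii}\ge 0$ sum to $1$; hence $2\mathcal{N}(\tau_{AB})=\sum_{i\ne j}|\tau_{ij}|$, which is precisely $\lnorm{\tau_{AB}-\sigma_{AB}}^{\bas{A}\otimes\bas{B}}$. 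So the target equality reduces to showing
\[
\lnorm{\tau_{AB}-\xi}^{\bas{A}\otimes\bas{B}}\ \geq\ 2\,\mathcal{N}(\tau_{AB})
\]
for every PPT state $\xi$, with the case of separable $\xi$ following automatically since $\sigma_{AB}$ is itself classical-classical (and $\sep\subset\ppt$).

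For the lower bound I will chain three standard inequalities. First, for any matrix $M$ and any orthonormal basis, the Schatten $1$-norm is dominated by the basis-dependent $l_1$-norm: $\|M\|_1\le \|M\|_{l_1}$ (this is a one-line consequence of the duality $\|M\|_1=\max_{\|N\|_\infty\le 1}|\Tr(NM)|$ together with $|N_{ij}|\le 1$, and is recalled in Appendix~\ref{sec: l1-norm}). Second, the partial transposition in the maximally correlated basis $\bas{A}\otimes\bas{B}$ is merely a permutation of matrix entries, hence it leaves the $l_1$-norm invariant: $\lnorm{(\tau_{AB}-\xi)^\Gamma}^{\bas{A}\otimes\bas{B}}=\lnorm{\tau_{AB}-\xi}^{\bas{A}\otimes\bas{B}}$. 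Third, by the reverse triangle inequality for $\|\cdot\|_1$, and using that $\xi\in\ppt$ gives $\|\xi^\Gamma\|_1=\Tr(\xi^\Gamma)=1$,
\[
\|(\tau_{AB}-\xi)^\Gamma\|_1\ \geq\ \|\tau_{AB}^\Gamma\|_1-\|\xi^\Gamma\|_1=2\,\mathcal{N}(\tau_{AB}).
\]
Combining these three inequalities yields $\lnorm{\tau_{AB}-\xi}^{\bas{A}\otimes\bas{B}}\ge 2\mathcal{N}(\tau_{AB})$, matching the value attained by $\sigma_{AB}$.

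I expect the argument to be short and to require essentially no optimization: the main structural input is Corollary~\ref{cor:negMCS}, which turns the problem of bounding a distance-to-PPT quantity into a bound on negativity. The only small subtlety is the invariance of the $l_1$-norm under partial transposition, which is specific to the maximally correlated basis and is the reason the statement is formulated in that basis. Once the chain of inequalities is set up, saturation by $\sigma_{AB}$ (which is diagonal, hence trivially PPT and separable) is immediate, and the separable and PPT minima coincide because the PPT minimum, which is a priori smaller, is already achieved inside $\sep$.
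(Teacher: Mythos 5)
Your proof is correct, but it follows a genuinely different route from the one in the paper. The paper's argument (Appendix~\ref{sec: proof}) is local and explicit: it writes $\xi=\tau+\eta$ with $\eta$ traceless Hermitian, evaluates $\langle\phi_{ij}|\xi^\Gamma|\phi_{ij}\rangle$ on the negative eigenvectors $\ket{\phi_{ij}}$ of $\tau^\Gamma$ supplied by Lemma~\ref{lem:spectrumMCS}, and imposes positivity of each such expectation value to extract, entry by entry, the bound $\sum_{i\neq j}|\tau_{ij}|\leq\lnorm{\eta}$ (with a two-case analysis depending on whether the diagonal of $\xi$ matches that of $\tau$). You instead chain three structural facts: invariance of the $l_1$-norm under partial transposition taken in the same product basis (a permutation of entries), the domination $\lVert\cdot\rVert_1\leq\lnorm{\cdot}$ of Lemma~\ref{lem:John}, and the reverse triangle inequality together with $\|\xi^\Gamma\|_1=1$ for $\xi\in\ppt$; each step is sound, and the identification $\lnorm{\tau_{AB}-\sigma_{AB}}^{\bas{A}\otimes\bas{B}}=\sum_{i\neq j}|\tau_{ij}|=2\neg(\tau_{AB})$ via Corollary~\ref{cor:negMCS} (using $\tau_{ii}\geq 0$, $\sum_i\tau_{ii}=1$) shows saturation, with $\sigma_{AB}$ manifestly a $\CC$ state and hence in $\sep\subset\ppt$. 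What your route buys is brevity, no case analysis, and extra generality: the lower bound $\lnorm{\rho-\xi}^{\bas{A}\otimes\bas{B}}\geq 2\neg(\rho)$ holds for \emph{any} state $\rho$, any PPT $\xi$, and any product basis---the MCS structure enters only to show the bound is attained. What the paper's route buys is an explicit witness-style certificate (the vectors $\ket{\phi_{ij}}$) that pinpoints exactly which coherences of $\eta$ must be large, which is the ingredient reused in the surrounding discussion of the activation mechanism. One small imprecision in your write-up: the $l_1$-invariance under partial transposition holds in any product basis, not specifically the maximally correlated one; the maximally correlated basis matters only for the saturation step.
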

\begin{proof} See Appendix~\ref{sec: proof}.
\end{proof}

\subsection{Interpretations of the negativity of quantumness}
Any measure of quantumness defined through the activation protocol naturally possesses an operational meaning---the least amount of system-apparatus entanglement which will be created by any measurement interaction. Namely, NoQ quantifies such minimum entanglement in terms of negativity. However it turns out that NoQ has some more possible interpretations: a geometric interpretation as the minimum distance from classical states (\ref{itm:distance}), and an operational interpretation in terms of disturbance induced by a measurement (\ref{itm:decoherence}).
Here we restrict our attention to the study of bipartite states $\rho_{AB}$. In the bipartite case, we often refer to he total (partial) NoQ  as to the two-sided (one-sided) NoQ.

With respect to the operational interpretation in terms of measurement disturbance, when the decohered quantum system $A$ is a qubit the one-sided NoQ can be interpreted as the distinguishability of a quantum state from its partially decohered state.


\begin{thm}
Let $\rho_{AB}$ be a bipartite quantum state with A a qubit. Then the partial NoQ on subsystem $A$ of quantum state $\rho_{AB}$ is equivalent to the minimum trace distance between $\rho_{AB}$ and its decohered state on subsystem $A$:
\begin{equation}\label{eq:onesideinter}
Q_\neg^A(\rho_{AB})=\min_{\bas{A}}\frac{1}{2}||\rho_{AB}-\Pi_{\bas{A}}[\rho_{AB}]||_1,
\end{equation}
where $\Pi_{\bas{A}}$ is the complete projective measurement in the basis $\bas{A}$ and the minimum is taken for different choices of basis $\bas{A}$.
\end{thm}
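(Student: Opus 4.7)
The plan is to reduce both sides of the desired equality to the same expression in the off-diagonal block $\rho_{01} = \langle a_0|\rho_{AB}|a_1\rangle$ of $\rho_{AB}$ relative to a qubit basis $\bas{A}=\{|a_0\rangle,|a_1\rangle\}$, and then minimize over $\bas{A}$.

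First I would apply the already-proved formula~\eqref{eq:onesidedNoQ}. For a qubit $A$ this sum over $i,j\in\{0,1\}$ has only four terms. The diagonal blocks $\rho_{00}$ and $\rho_{11}$ are positive operators on $B$, so $\|\rho_{00}\|_1+\|\rho_{11}\|_1=\Tr(\rho_{00})+\Tr(\rho_{11})=\Tr(\rho_{AB})=1$. Hermiticity gives $\rho_{10}=\rho_{01}^\dagger$, hence $\|\rho_{10}\|_1=\|\rho_{01}\|_1$. Plugging into~\eqref{eq:onesidedNoQ} yields
\[
Q_\neg^A(\rho_{AB})=\min_{\bas{A}}\|\rho_{01}\|_1,
\]
which is the clean ``off-diagonal block'' form of the one-sided NoQ in the qubit case.

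Next I would evaluate the right-hand side. In the chosen basis, $\Pi_{\bas{A}}[\rho_{AB}]$ keeps only the diagonal blocks of $\rho_{AB}$, so
\[
\rho_{AB}-\Pi_{\bas{A}}[\rho_{AB}]=\begin{pmatrix} 0 & \rho_{01} \\ \rho_{01}^\dagger & 0 \end{pmatrix}.
\]
This is precisely the block Hermitian matrix whose spectrum I already used in the proof of Theorem~\ref{thm:overallNoQ} (the $2\times 2$ blocks in the block-diagonalization of the partial transpose): its eigenvalues are the singular values of $\rho_{01}$ taken with both signs. Hence
\[
\tfrac{1}{2}\|\rho_{AB}-\Pi_{\bas{A}}[\rho_{AB}]\|_1=\|\rho_{01}\|_1.
\]
Minimizing over $\bas{A}$ on both sides and comparing with the expression for $Q_\neg^A(\rho_{AB})$ above completes the proof.

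I do not expect a genuine obstacle here: the argument is essentially a two-line computation that crucially uses $\dim A=2$ (so the diagonal blocks exhaust the trace and there is a single off-diagonal block, up to adjoint). The only subtlety worth flagging is the use of the block-matrix singular-value identity $\|\bigl(\begin{smallmatrix}0&H\\H^\dagger&0\end{smallmatrix}\bigr)\|_1=2\|H\|_1$, which I would justify by noting that such a matrix is unitarily similar, via the swap of the two halves of the Hilbert space, to its negative, so its spectrum is symmetric about zero, and its squared eigenvalues are those of $HH^\dagger$.
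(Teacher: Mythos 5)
Your proof is correct and follows essentially the same route as the paper's: both start from Eq.~\eqref{eq:onesidedNoQ}, use positivity of the diagonal blocks to get $\|\rho_{00}\|_1+\|\rho_{11}\|_1=1$, and identify $\rho_{AB}-\Pi_{\bas{A}}[\rho_{AB}]$ with the block anti-diagonal matrix whose trace norm is $\|\rho_{01}\|_1+\|\rho_{10}\|_1=2\|\rho_{01}\|_1$. One small quibble with your closing aside: the spectrum of $\bigl(\begin{smallmatrix}0&H\\H^\dagger&0\end{smallmatrix}\bigr)$ is symmetric about zero because conjugation by $\operatorname{diag}(\I,-\I)$ sends it to its negative, whereas the swap of the two halves merely exchanges $H$ and $H^\dagger$; this does not affect the validity of the argument.
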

\begin{proof}
Suppose the subsystem $A$ is a two-level system, so that the sum in Eq.\eqref{eq:onesidedNoQ} is limited to $i,j\in\{0,1\}$. Then,
\begin{align*}
\sum_{i,j=0}^1||\rho_{ij}||_1&= \|\rho_{00}\|_1 + \|\rho_{11}\|_1 + \|\rho_{01}\|_1 + \|\rho_{10}\|_1\\
					&= 1 + \left\| 	\begin{pmatrix}
 									0 & \rho_{01}\\
									\rho_{10} & 0 \\
								\end{pmatrix}
						\right\|_1\\
					&=1 + \| \rho_{AB} -  \Pi_{\bas{A}}[\rho_{AB}]| \|_1.
\end{align*}
Here, the second equality is due on one side to the fact that, $\rho_{AB}$ being a normalized state, $\|\rho_{00}\|_1 + \|\rho_{11}\|_1 = 1$; on the on the other side, for a block anti-diagonal matrix one has $ \left\| \begin{pmatrix}
 0 & X\\
 Y & 0 \\
\end{pmatrix} \right\|_1 = \|X\|_1 +\|Y\|_1$.

\end{proof}

Furthermore, we  find the following equivalence.

\begin{thm}
Let $\rho_{AB}$ be a bipartite quantum state where the quantum system A is two-dimensional. Then
\beq
\label{eq:onesideequiv}
\min_{\Pi_{\bas{A}}}\|\rho_{AB}-\Pi_{\bas{A}}[\rho_{AB}]\|_1=\min_{\sigma\in\CQ}\|\rho_{AB}-\sigma\|_1.
\eeq
\end{thm}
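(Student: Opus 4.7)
The plan is to establish the two inequalities separately. The direction $\min_{\Pi_{\bas{A}}}\|\rho_{AB}-\Pi_{\bas{A}}[\rho_{AB}]\|_1 \geq \min_{\sigma\in\CQ}\|\rho_{AB}-\sigma\|_1$ is immediate, since for any orthonormal basis $\bas{A}$ of $A$ the state $\Pi_{\bas{A}}[\rho_{AB}]$ is a feasible element of $\CQ$ (it has exactly the structure of Eq.~\eqref{eq:classical} for a bipartition with $i$ being the measured qubit).

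For the nontrivial direction, I would show something stronger: for any $\sigma\in\CQ$, writing $\sigma=\sum_i \state{a_i}{a_i}_A\otimes \sigma_B^i$ in its classical basis $\bas{A}^\sigma=\{\ket{a_0},\ket{a_1}\}$, the trace distance $\|\rho_{AB}-\sigma\|_1$ is at least $\|\rho_{AB}-\Pi_{\bas{A}^\sigma}[\rho_{AB}]\|_1$. The equality of the two minima then follows by minimizing both sides over $\bas{A}^\sigma$.

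The key tool, and where the qubit assumption on $A$ enters decisively, is the unitary $U_A=\state{a_0}{a_0}-\state{a_1}{a_1}$ (a Pauli-$z$ in the chosen basis), acting as $U_A\otimes I_B$. In $2\times 2$ block form with respect to $\bas{A}^\sigma$ one has
\[
U_A\rho_{AB}U_A^\dagger=\begin{pmatrix}\rho_{00}&-\rho_{01}\\-\rho_{10}&\rho_{11}\end{pmatrix},
\]
so that $\rho_{AB}-U_A\rho_{AB}U_A^\dagger=2(\rho_{AB}-\Pi_{\bas{A}^\sigma}[\rho_{AB}])$, while $U_A\sigma U_A^\dagger=\sigma$ because $\sigma$ is block-diagonal in $\bas{A}^\sigma$. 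Subtracting $\sigma$ from both terms, applying the triangle inequality, and using the unitary invariance of $\|\cdot\|_1$ yields
\[
2\|\rho_{AB}-\sigma\|_1\geq 2\bigl\|\rho_{AB}-\Pi_{\bas{A}^\sigma}[\rho_{AB}]\bigr\|_1,
\]
which is exactly the desired bound. Together with the trivial direction and the previously established formula \eqref{eq:onesideinter} for the one-sided NoQ, this yields the claimed equivalence.

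The only delicate point is the qubit hypothesis on $A$: the argument crucially uses a single sign-flipping unitary that fixes every CQ state in a chosen basis while changing the sign of \emph{all} the off-diagonal blocks. For $\dim A\geq 3$ no single unitary has this property, so the pinching identity has to be replaced by an average over several unitaries (e.g.\ the Heisenberg--Weyl diagonal phases), which costs a factor and breaks the equality---consistent with the fact that the theorem is stated only for qubit $A$. I expect this to be the only conceptually nontrivial step; the remaining bookkeeping (ordering of minimizations, equality of $\|\rho_{AB}-\Pi_{\bas A^\sigma}[\rho_{AB}]\|_1$ with the quantity in Eq.~\eqref{eq:onesideinter}) is routine.
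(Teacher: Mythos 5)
Your proof is correct, and it takes a genuinely different and arguably cleaner route than the paper's. The paper fixes the optimal $\sigma\in\CQ$, writes $\rho_{AB}-\sigma$ in block form, performs a singular value decomposition of the off-diagonal block $\rho_{01}=UDV$, applies a pinching on $B$ that preserves $D$ (invoking monotonicity of the trace norm under channels), reduces the problem to a direct sum of $2\times 2$ Hermitian matrices $\left(\begin{smallmatrix} a_i & c_i \\ c_i & b_i\end{smallmatrix}\right)$, and lower-bounds each trace norm by $2c_i$ via the variational formula $\|X\|_1=\max_U|\Tr(UX)|$, finally observing that the bound is saturated at $a_i=b_i=0$, i.e.\ $\sigma=\Pi_{\bas{A}}[\rho_{AB}]$. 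Your symmetrization argument replaces all of this with one line: the unitary $U_A=\state{a_0}{a_0}-\state{a_1}{a_1}$ fixes every state classical in $\bas{A}^\sigma$ and negates all off-diagonal blocks of $\rho_{AB}$, so $\rho_{AB}-U_A\rho_{AB}U_A^\dagger=2(\rho_{AB}-\Pi_{\bas{A}^\sigma}[\rho_{AB}])=(\rho_{AB}-\sigma)-U_A(\rho_{AB}-\sigma)U_A^\dagger$, and the triangle inequality plus unitary invariance of $\|\cdot\|_1$ gives the pointwise bound $\|\rho_{AB}-\sigma\|_1\geq\|\rho_{AB}-\Pi_{\bas{A}^\sigma}[\rho_{AB}]\|_1$; I checked the algebra and it is sound. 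What your approach buys is transparency about where the qubit hypothesis enters (a single unitary that is the identity on all $\CQ$ states in a given basis while killing all coherences exists only for $\dim A=2$), and it connects directly to the root-of-unity identity $\rho-\Pi_A^{\ket{\phi}}[\rho]=\tfrac{1}{2}(\rho-V\rho V^\dagger)$ that the paper derives separately in Section~\ref{sec: qubit system}; what the paper's computational route buys is an explicit identification of the closest $\CQ$ state in a fixed basis (via the condition $a_i=b_i=0$), though this also follows from your bound being attained at $\sigma=\Pi_{\bas{A}^\sigma}[\rho_{AB}]$. One small remark: the reference to Eq.~\eqref{eq:onesideinter} at the end of your write-up is not needed for this theorem itself, only for the subsequent identification of both sides with $2Q_\neg^A$.
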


\begin{proof}
The inequality
\[
\min_{\Pi_{\bas{A}}}\|\rho_{AB}-{\Pi_{\bas{A}}}[\rho_{AB}]\|_1 \ge \min_{\sigma\in\CQ}\|\rho_{AB}-\sigma\|_1
\]
 holds trivially. Therefore we will prove the other direction.

Let $\{\ket{i}\}_{\{i=0,1\}}$ be the orthonormal basis of $A$ in which the minimization of right-hand side of Eq.~\eqref{eq:onesideequiv} is achieved (i.e., in which $\sigma$ is explicitly classical on $A$, for an optimal classical-quantum $\sigma$). Now, $\rho-\sigma$ in the chosen basis has the following block-matrix form:
\[
\begin{pmatrix} A & \rho_{01} \\ \rho_{10} & B \end{pmatrix},
\]
with $\rho_{AB}=\begin{pmatrix} \rho_{00} & \rho_{01} \\ \rho_{10} & \rho_{11} \end{pmatrix}$,  $\sigma_{AB}=\begin{pmatrix} \sigma_{00} & 0 \\ 0 & \sigma_{11} \end{pmatrix}$, $A= \rho_{00} -\sigma_{00}$, and $B=\rho_{11} -\sigma_{11}$. 
Let the singular value decomposition of $\rho_{01}$ be $\rho_{01}=UDV$ and note that
\[
\begin{split}
\left\|\begin{pmatrix} A & \rho_{12} \\ \rho_{21} & B \end{pmatrix}\right\|_1
&= \left\| \begin{pmatrix} U & 0\\ 0 & V^\dagger \end{pmatrix}\begin{pmatrix} U^\dagger AU & D \\ D & VBV^\dagger \end{pmatrix}\begin{pmatrix} U^\dagger & 0\\ 0 & V \end{pmatrix}\right\|_1\\	
&=\left\| \begin{pmatrix} U^\dagger AU & D \\ D & VBV^\dagger \end{pmatrix}\right\|_1\\
&\geq \left\| \begin{pmatrix} \Pi[U^\dagger AU] & D \\ D & \Pi[VBV^\dagger] \end{pmatrix}\right\|_1 \\
&= \left\| \begin{pmatrix} D_1 & D \\ D & D_2 \end{pmatrix}\right\|_1
\end{split}
\]
where $\Pi$ is a complete projective measurement on $B$ that leaves $D$ invariant, $D_1=\Pi[U^\dagger AU] $, $D_2=\Pi[VBV^\dagger]$, and we have used that the trace distance is invariant under unitaries and monotone under general quantum operations.


Let $D_1=\diag{\{a_1,\cdots,a_n\}}$, $D_2=\diag{\{b_1,\cdots,b_n\}}$, and $D=\diag{\{c_1,\cdots,c_n\}}$. Notice that $D$ depends only on the state $\rho$, not on $\sigma$ nor on the choice of basis.  Then
\begin{align*}
\|\rho_{AB}-\sigma\|_1 &\geq \left\| \begin{pmatrix} D_1 & D \\ D & D_2 \end{pmatrix}\right\|_1
\\
&=\sum_{i=1}^n \left\| \begin{pmatrix} a_i & c_i \\ c_i & b_i \end{pmatrix}\right\|_1.
\end{align*}
Now
\[ 
\left\| \begin{pmatrix} a & c \\ c & b \end{pmatrix}\right\|_1=\max_{U}\left|\Tr\left( U \begin{pmatrix} a & c \\ c & b \end{pmatrix}\right)\right|\geq \left|\Tr\left(  \begin{pmatrix} 0 & 1 \\ 1 & 0 \end{pmatrix} \begin{pmatrix} a & c \\ c & b \end{pmatrix}\right)\right|= 2 c ,
\] 
hence
\[
\min_{\sigma\in\CQ}\|\rho_{AB}-\sigma\|_1
%
\ge \sum_i 2 c_i,
\]
and such lower bound can be achieved when $a_i=b_i=0$ for all $i$, i.e., when $A=B=0$. The latter case corresponds to $\sigma_{00}=\rho_{00} $ and  $\sigma_{11}=\rho_{11} $, i.e., $\sigma= {\Pi_{\bas{A}}}[\rho_{AB}]$.
\end{proof}

Hence, when the single measured system is a qubit, \ref{itm:distance} and \ref{itm:decoherence} are equivalent also in the case in which the distance adopted is the {\it trace distance} rather than the Hilbert-Schmidt distance used in the case of the `standard' geometric discord~\cite{dakic:prl2010a,luo2010geometric}. Combining Eqs. \eqref{eq:onesideinter} and \eqref{eq:onesideequiv} with the hierarchy of \cite{piani:pr2012a}, we have that the trace-distance based quantumness measure always exceeds the negativity (of entanglement) in bipartite states $\rho_{AB}$ where $A$ is a qubit. We have numerical evidence that the equivalence between the three approaches---activation (with negativity), geometric (with trace distance), and disturbance (with trace distance)---breaks down when the measured system is not a qubit.

For the two-sided case, we have already seen in Corollary \ref{cor: decoherence interpretation} that in general the total NoQ can be interpreted as the minimum disturbance caused by local complete projective measurements on $A$ and $B$, as quantified by the $l_1$-norm in the basis in which the projection is taken. Moreover, by choosing properly the distance, it can be also interpreted from the perspective of \ref{itm:distance}.

\begin{thm}\label{thm:twoNoQinter}
Let $\rho_{AB}$ be a bipartite state. Then its two-sided NoQ is equivalent to the distance in $l_1$-norm from its closest $\CC$-state where the norm is with respect to the eigenbasis of the classical state.
\begin{equation}
Q_\neg^{AB}(\rho_{AB})=\min_{\sigma\in\CC}\frac{1}{2}\lnorm{\rho_{AB}-\sigma}^{\bas{\sigma}},
\end{equation}
where $\bas{\sigma}$ denotes the eigenbasis\footnote{An eigenbasis is uniquely defined if there are no degeneracies in the spectrum; if there are, it is implicit that the basis $\bas{\sigma}$ is chosen optimally to minimize the distance.} of $\sigma$.
\end{thm}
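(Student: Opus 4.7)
The plan is to reduce the theorem directly to the decoherence-interpretation formula for the total NoQ in Eq.~\eqref{eq:twosidedNoQ},
\[
Q_\neg^{AB}(\rho_{AB})=\min_{\bas{A}\otimes\bas{B}}\frac{1}{2}\lnorm{\rho_{AB}-(\Pi_{\bas{A}}\otimes\Pi_{\bas{B}})[\rho_{AB}]}^{\bas{A}\otimes\bas{B}},
\]
so it suffices to establish
\[
\min_{\bas{A}\otimes\bas{B}}\lnorm{\rho_{AB}-(\Pi_{\bas{A}}\otimes\Pi_{\bas{B}})[\rho_{AB}]}^{\bas{A}\otimes\bas{B}}=\min_{\sigma\in\CC}\lnorm{\rho_{AB}-\sigma}^{\bas{\sigma}}.
\]

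The direction $\ge$ is immediate, because for any product basis $\bas{A}\otimes\bas{B}$ the dephased state $(\Pi_{\bas{A}}\otimes\Pi_{\bas{B}})[\rho_{AB}]$ is itself a $\CC$ state whose eigenbasis can be taken to be $\bas{A}\otimes\bas{B}$, providing a valid competitor on the right-hand side with identical cost.

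For the reverse inequality, I would fix an arbitrary $\sigma\in\CC$, written as $\sigma=\sum_{i,j}p_{ij}\state{a_i}{a_i}_A\otimes\state{b_j}{b_j}_B$, and use the product eigenbasis $\bas{\sigma}=\{\ket{a_i}\otimes\ket{b_j}\}$. In this basis, $\rho_{AB}-\sigma$ has the same off-diagonal entries as $\rho_{AB}$ and diagonal entries $\rho_{(ij),(ij)}-p_{ij}$, where $\rho_{(ij),(kl)}=\bra{a_i}\bra{b_j}\rho_{AB}\ket{a_k}\ket{b_l}$. Therefore
\[
\lnorm{\rho_{AB}-\sigma}^{\bas{\sigma}}=\sum_{(i,j)\neq(k,l)}|\rho_{(ij),(kl)}|+\sum_{i,j}|\rho_{(ij),(ij)}-p_{ij}|,
\]
and the second sum can only increase the first. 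The choice $p_{ij}=\rho_{(ij),(ij)}$ kills the second sum and constitutes a genuine probability distribution (since the $\rho_{(ij),(ij)}$ are diagonal entries of a density matrix), producing precisely $(\Pi_{\bas{A}}\otimes\Pi_{\bas{B}})[\rho_{AB}]$ with $\bas{A}=\{\ket{a_i}\}$, $\bas{B}=\{\ket{b_j}\}$. This shows that for every $\sigma\in\CC$ there is a local product basis whose dephasing of $\rho_{AB}$ is at least as close to $\rho_{AB}$ as $\sigma$ is, proving $\le$.

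The one delicate point, and the main obstacle, is the treatment of degenerate spectra: when $\sigma$ has repeated eigenvalues, its eigenbasis is not uniquely a product basis, and the basis-dependent $l_1$-norm in the definition of $\mathbb{D}_{\lnorm{\cdot}}$ becomes ambiguous. The footnote in the theorem statement addresses this by stipulating that $\bas{\sigma}$ is chosen so as to minimize $\lnorm{\rho_{AB}-\sigma}^{\bas{\sigma}}$; with this convention the argument above goes through verbatim, since any product basis compatible with $\sigma$ is an admissible $\bas{\sigma}$, and at least one such basis reproduces the structure $\sigma=(\Pi_{\bas{A}}\otimes\Pi_{\bas{B}})[\rho_{AB}]$ needed for the reduction.
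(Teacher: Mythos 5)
Your proof is correct and takes essentially the same route as the paper's: fix the product eigenbasis of the $\CC$ state, observe that for fixed basis the optimal eigenvalues are the diagonal entries of $\rho_{AB}$ (so the optimal $\sigma$ is the locally dephased state), and then reduce the remaining basis optimization to Eq.~\eqref{eq:twosidedNoQ}. Your explicit handling of the trivial direction and of the degeneracy caveat merely spells out what the paper's terser argument leaves implicit.
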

\begin{proof}
Consider having fixed the local bases $\bas{A}$ and $\bas{B}$ for the $\CC$-state $\sigma$, so that $\bas{\sigma}=\bas{A}\otimes\bas{B}$.  Optimize now over the eigenvalues of $\sigma$, for fixed $\bas{\sigma}$. It is clear that within such a class, the  $\CC$-state $\sigma$ that is optimal for the sake of $\lnorm{\rho_{AB}-\sigma}^{\bas{\sigma}}$ is the one that has the same diagonal (in the fixed basis) entries as $\rho$. For such $\sigma$ it holds $\lnorm{\rho_{AB}-\sigma}^{\bas{\sigma}}=\lnorm{\rho_{AB}-(\Pi_{\bas{A}}\otimes\Pi_{\bas{B}})[\rho_{AB}]}^{\bas{A}\otimes\bas{B}}$.  The remaining minimization over the choice of local bases is the same as in Eq.\eqref{eq:twosidedNoQ}.
\end{proof}
 
\subsection{The mechanism of the activation protocol for negativity}

In this subsection, we will see that the (total) activation protocol using as entanglement measure the negativity can be described in terms of an isometric mapping due to the measurement interaction: 
a bipartite quantum state and its closest $\CC$-state are mapped to a pre-measurement state and the separable state closest to the latter, respectively. 

The following lemma makes it clear what we mean by the fact that the  measurement interaction is isometric with respect to the $l_1$-norm.

\begin{lem}
The $l_1$-norm is invariant under a measurement interaction provided that the $l_1$-norm is taken in the basis in which the measurement interaction is defined.
\end{lem}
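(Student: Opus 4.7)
The plan is to compute the $l_1$-norm of $V_{A\mapsto AA'} M V_{A\mapsto AA'}^\dagger$ directly by evaluating its matrix elements in the relevant product basis. First I would expand an arbitrary operator $M$ on $A$ in the measurement basis as $M=\sum_{i,j} M_{ij}\ket{a_i}\bra{a_j}$, with $M_{ij}=\bra{a_i}M\ket{a_j}$, so that by the definition of the measurement interaction one has
\[
V_{A\mapsto AA'}\, M\, V_{A\mapsto AA'}^\dagger=\sum_{i,j} M_{ij}\,\ket{a_i}\bra{a_j}_A\otimes \ket{i}\bra{j}_{A'}.
\]

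Next I would read off the matrix elements of this operator in the product basis $\{\ket{a_i}\otimes\ket{k}_{A'}\}$: the entry in the row indexed by $(i,k)$ and the column indexed by $(j,l)$ equals $M_{ij}\,\delta_{ik}\delta_{jl}$. Hence the only nonvanishing matrix elements occur at positions where the system label and the apparatus label coincide both for the row and for the column, and in these positions the entries are exactly the $M_{ij}$'s. Summing absolute values over all rows and columns yields
\[
\lnorm{V_{A\mapsto AA'}\, M\, V_{A\mapsto AA'}^\dagger}^{\{\ket{a_i}\otimes\ket{k}\}}=\sum_{i,j}|M_{ij}|=\lnorm{M}^{\{\ket{a_i}\}},
\]
which is the claimed invariance.

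The case actually used in the body of the paper is when $M$ acts on a bipartite space $AB$ and the interaction $V_{A\mapsto AA'}\otimes\I_B$ is applied on $A$ only; this is handled by the identical computation. After expanding $M=\sum_{ijkl}M_{ij,kl}\,\ket{a_i}\bra{a_j}_A\otimes\ket{b_k}\bra{b_l}_B$ in any product basis on $AB$, one obtains
\[
(V_{A\mapsto AA'}\otimes\I_B)\,M\,(V_{A\mapsto AA'}\otimes\I_B)^\dagger=\sum_{ijkl}M_{ij,kl}\,\ket{a_i}\bra{a_j}_A\otimes\ket{b_k}\bra{b_l}_B\otimes\ket{i}\bra{j}_{A'},
\]
and each entry $M_{ij,kl}$ is again relocated to a single position of the enlarged matrix, so the sum of absolute values in the enlarged product basis still equals $\lnorm{M}^{\{\ket{a_i}\otimes\ket{b_k}\}}$.

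The only real obstacle here is notational bookkeeping of the indices; conceptually the lemma reduces to the trivial observation that the measurement interaction merely \emph{embeds} the entries of $M$---without combining or cancelling any of them---into a higher-dimensional matrix, so the sum of their moduli cannot change. The only subtle point, worth emphasising in the write-up, is that invariance crucially requires choosing the output $l_1$-norm with respect to a product basis that contains $\{\ket{a_i}\}$ as its factor on the measured system and the computational basis $\{\ket{k}_{A'}\}$ on the apparatus: an arbitrary change of basis on $AA'$ would in general mix the nonzero entries with the (many) vanishing ones and destroy the equality.
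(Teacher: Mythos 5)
Your proof is correct and follows essentially the same route as the paper's: both arguments observe that the measurement interaction relocates each matrix entry $M_{ij}$ to a single position of the enlarged matrix (the paper phrases this as the nonzero entries being "in one-to-one correspondence"), so the sum of moduli is unchanged. Your version merely makes the index bookkeeping and the bipartite extension explicit where the paper leaves them implicit.
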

\begin{proof} The proof is immediate and for the sake of clarity and concreteness we will consider only the case of the measurement of a single system.
Consider an operator $X$ on system $A$ and the measurement interaction $V_{A\mapsto AA'}^{\{\ket{a_i}_A\}}$ that acts on an orthonormal basis $\{\ket{a_i}\}$of $A$  as
\[
\ket{a_i}_A\mapsto\ket{a_i}_A\ket{i}_{A'}.
\]
Then the matrix representations of $X$ in the basis $\{\ket{a_i}\} $ and that of $V_{A\mapsto AA'}^{\{\ket{a_i}_A\}}X(V_{A\mapsto AA'}^{\{\ket{a_i}_A\}})^\dagger$ in the basis $\{\ket{a_i}_A\otimes\ket{i}_{A'}\}$ have nonzero terms that are in one-to-one correspondence, so that
\[
\lnorm{X}^{\{\ket{a_i}_A\}}=\lnorm{V_{A\mapsto AA'}^{\{\ket{a_i}_A\}}X(V_{A\mapsto AA'}^{\{\ket{a_i}_A\}})^\dagger}^{\{\ket{a_i}_A\otimes\ket{i}_{A'}\}}
\]
holds.
\end{proof}

This isometric property of the measurement interaction and the following two observations---actually, restatements of results we obtained in the previous sections---allow us to draw a clear picture of the activation mechanism.

The first observation is that, according to Theorem \autoref{thm:twoNoQinter}, the total NoQ of a bipartite quantum state can be interpreted as the $l_1$-norm distance from its closest classical state:
\begin{align}
Q_\neg^{AB}(\rho_{AB})&=\min_{\sigma\in\CC}\frac{1}{2}\lnorm{\rho_{AB}-\sigma}^{\bas{\sigma}}\nonumber\\
&=\min_{\bas{A}\otimes\bas{B}}\frac{1}{2}\lnorm{\rho_{AB}-(\Pi_{\bas{A}}\otimes\Pi_{\bas{B}})[\rho_{AB}]}^{\bas{A}\otimes\bas{B}}. \nonumber
\end{align}

The second observation is that, according to Theorem \autoref{thm: closestclassicalMCS}, the closest separable state to a MCS $\tau_{AB}$ is the separable state corresponding to the diagonal---in the maximally correlated basis---entries of $\tau_{AB}$:
\[
\min_{\xi\in\ppt}\lnorm{\tau_{AB}-\xi}^{\bas{A}\otimes\bas{B}}=\lnorm{\tau_{AB}-\sigma}^{\bas{A}\otimes\bas{B}}.
\]

Now, the measurement interaction acting with respect to the basis defined by $\Pi_{\bas{A}}$ and $\Pi_{\bas{B}}$ acts on the state and on its closest classical state as follows:
\begin{align}
\rho_{AB}&\mapsto\tilde{\rho}_{ABA'B'}\nonumber\\
(\Pi_A\otimes\Pi_B)[\rho_{AB}]&\mapsto (\Pi_A\otimes\Pi_B)[\tilde{\rho}_{ABA'B'}].\nonumber
\end{align}
Due to Theorem~\ref{thm: closestclassicalMCS}, the state $(\Pi_A\otimes\Pi_B)[\tilde{\rho}_{ABA'B'}]$ is indeed the closest separable state to the pre-measurement state $\tilde{\rho}_{ABA'B'}$. In Appendix~\ref{sec:isometric for rel} we argue that that same isometric mapping ``state~$\mapsto$~pre-measurement state'' and ``closest classical state~$\mapsto$~closest (to the pre-measurement state) separable state'' holds also for the case of relative entropy used as a distance function.

The following diagram shows the isometric mapping for the NoQ.
\begin{widetext}
\begin{equation}\label{diagrammauno}
\begin{diagram}[height=2em,width=0.3em]
$${\displaystyle{\min_{{\Pi_{A}\otimes\Pi_{B}}}}\lnorm{\tilde{\rho}_{{ABA'B'}}-(\Pi_A\otimes\Pi_B)[\tilde{\rho}_{{ABA'B'}}]}^{{\Pi_{A}\otimes\Pi_{B}}}}$$ &=^{(\romannumeral 2)} & $${\displaystyle{\min_{{\eta\in\sep}}}\lnorm{\tilde{\rho}_{ABA'B'}-\eta}^{{\MCS}}}$$\\
=_{(\romannumeral 1)} & &=\neg(\tilde{\rho}_{{ABA'B'}})\\
$${\displaystyle{\min_{{\Pi_A\otimes\Pi_B}}}\lnorm{\rho_{AB}-(\Pi_A\otimes\Pi_B)[\rho_{AB}]}^{{\Pi_A\otimes\Pi_B}}}$$ & =_{(\romannumeral 3)}&  $$\>{\displaystyle{\min_{\sigma\in\CC}}\lnorm{\rho_{AB}-\sigma}^{{\CC}}}$$\\
&&=Q_{\neg}(\rho_{AB})
\end{diagram}
\end{equation}
\end{widetext}
In Eq.~\eqref{diagrammauno}: Equality (\rmnum{1}) holds because the measurement interaction for the basis $\Pi_A\otimes\Pi_B$ is isometric; Relation (\rmnum{2}) corresponds to the fact that the closest separable state to a MCS is given by its diagonal part; Relation (\rmnum{3}) holds because the closest $\CC$ state is again its diagonal part. With $\lnorm{\cdot}^{\MCS}$ we indicate that the $l_1$ norm is taken in the MCS basis of $\tilde{\rho}_{ABA'B'}$; similarly, with $\lnorm{\cdot}^{{\CC}}$ we indicate that the $l_1$ norm is taken in the basis in which $\sigma$ is excplicitly $\CC$.


\section{The equivalence of nonclassicality measures for bipartite systems with a two-level subsystem}\label{sec: qubit system}

Recently, S. Gharibian has shown in~\cite{gharibian:2012a} that the classicality of a bipartite quantum state $\rho_{AB}$ on subsystem $A$ can be tested by the verifying the invariance of the state under some special local unitary operations. Similar results have been obtained independently by Giampaolo {\it et al.} \cite{faberdagmar}. Both works, which define the \ref{itm:localunitary} in~Sec.~\ref{sec:introduction}, are based on a generalization to mixed states~\cite{gerrylu} of an approach to the quantification of pure-state entanglement via local unitaries \cite{giampaolo-illuminati}.
In~\cite{gharibian:2012a,faberdagmar} it was proven that a quantum state is classical on one subsystem $A$ if and only if there exists some operation from a particular set of local unitaries acting on the subsystem $A$, called {\em Root-of-Unity} operations, that leaves the state invariant. Therefore the minimum disturbance caused by a local unitary from this nontrivial set was suggested as a measure of nonclassicality of correlations. We will show that if the subsystem $A$ under investigation is a two-level system (qubit), then the corresponding measure of nonclassicality defined by this approach (\ref{itm:localunitary}), and those defined by \ref{itm:activation} and \ref{itm:decoherence}, are all related\footnote{The analytic equivalence of geometric discord and nonclassicality measure by local unitary invariance for a $2\times N$-dimensional system was shown in~\cite{gharibian:2012a,faberdagmar} for the special case of Hilbert-Schmidt distance, but not from our general operational viewpoint.}.

\begin{defn}[Root-of-Unity operation~\cite{gharibian:2012a,faberdagmar,gerrylu}]
Consider a $n$-dimensional quantum system $A$. Then the set of all unitary operators on $A$ with spectrum $\{\omega_n^j\}_{j\in\{0,1,\cdots,n-1\}}$ for $\omega_n=e^{2\pi i/n}$ is called the set of {\em Root-of-Unity (RU)} operations~\footnote{Note that this set correspond to  $\{UZU^\dagger\,|\,U \textrm{ unitary}, Z=\sum_j\omega^j_n \ket{j}\bra{j} \}$ for $\{\ket{j}\}$ fixed to be, e.g., the computational basis.}. We indicate such a set by $RU(A)$.
\end{defn}

\ref{itm:localunitary} is then expressed by the following theorem.

\begin{thm}[Non-classicality by local unitary disturbance~\cite{gharibian:2012a,faberdagmar}]
A bipartite quantum state $\rho_{AB}$ is classical on subsystem $A$ if and only if there exists a local unitary operation $V_A\in RU(A)$ which leaves the quantum state invariant, i.e.,
\[
(V_A\otimes \I_B)\rho_{AB}(V_A\otimes \I_B)^\dagger=\rho_{AB}.
\]
\end{thm}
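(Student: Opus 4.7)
The plan is to exploit the very rigid spectral structure of an RU operation: it has $n$ distinct eigenvalues (the $n$-th roots of unity), so its spectral decomposition is $V_A = \sum_{j=0}^{n-1}\omega_n^j\,|a_j\rangle\langle a_j|$ for a unique orthonormal basis $\{|a_j\rangle\}$ (unique up to phases on the eigenvectors). This basis will play the role of the preferred classical basis on $A$ in both directions of the equivalence.

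For the forward implication, I would start with $\rho_{AB}=\sum_i |a_i\rangle\langle a_i|\otimes \sigma^i_B$ from Definition \ref{def:classicality of a quantum state (\romannumeral 2)}-type form (classicality of $\rho_{AB}$ on $A$ with respect to some orthonormal basis $\{|a_i\rangle\}$ of $A$), then construct an explicit witness $V_A:=\sum_{j=0}^{n-1}\omega_n^j|a_j\rangle\langle a_j|$. By construction $V_A\in RU(A)$, and each term $|a_i\rangle\langle a_i|$ is invariant under conjugation by $V_A$ because the two phases $\omega_n^i$ and $\overline{\omega_n^i}$ cancel, which immediately gives $(V_A\otimes \I_B)\rho_{AB}(V_A\otimes \I_B)^\dagger=\rho_{AB}$.

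For the converse, suppose some $V_A\in RU(A)$ satisfies $(V_A\otimes \I_B)\rho_{AB}(V_A\otimes \I_B)^\dagger=\rho_{AB}$. Since the spectrum $\{\omega_n^j\}_{j=0}^{n-1}$ consists of $n$ distinct eigenvalues of a unitary on an $n$-dimensional space, $V_A$ is non-degenerate and hence has a unique (up to phases) orthonormal eigenbasis $\{|a_j\rangle\}$ with $V_A|a_j\rangle=\omega_n^j|a_j\rangle$. Expand $\rho_{AB}=\sum_{i,j}|a_i\rangle\langle a_j|\otimes \rho_{ij}$ with $\rho_{ij}=\langle a_i|\rho_{AB}|a_j\rangle$; then conjugation by $V_A\otimes \I_B$ multiplies each block by $\omega_n^{i-j}$, so invariance forces $(\omega_n^{i-j}-1)\rho_{ij}=0$ for every $i,j$.

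The only real step requiring care is concluding $\rho_{ij}=0$ whenever $i\ne j$: this follows from the fact that $\omega_n$ is a primitive $n$-th root of unity, so $\omega_n^{i-j}=1$ with $i,j\in\{0,1,\ldots,n-1\}$ implies $i=j$. Consequently $\rho_{AB}=\sum_i |a_i\rangle\langle a_i|\otimes \rho_{ii}$, which matches Definition~\ref{def:classicality of a quantum state (\romannumeral 2)} and proves classicality of $\rho_{AB}$ on $A$. The argument is entirely spectral and does not use any special feature of $B$, so it works for arbitrary bipartite dimensions; the only subtlety is recognizing that all $n$ roots of unity must be present in the spectrum of $V_A$ (rather than, e.g., a subset), which is exactly what the defining set $\{\omega_n^j\}_{j=0}^{n-1}$ enforces and is what prevents the eigenspaces of $V_A$ from being higher-dimensional.
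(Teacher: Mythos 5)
Your proof is correct. The paper itself does not supply a proof of this theorem---it is imported directly from~\cite{gharibian:2012a,faberdagmar}---but your spectral argument (the non-degeneracy of the RU spectrum forces a unique eigenbasis, conjugation multiplies each block $\rho_{ij}$ by $\omega_n^{i-j}$, and $\omega_n^{i-j}=1$ for $i,j\in\{0,\dots,n-1\}$ only when $i=j$) is precisely the standard argument used in those references, and both directions are complete.
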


Now consider the case in which $A$ is a qubit. By definition, a root-of-unitary $V_A\equiv V$ of the qubit $A$ has eigenvalues $\pm1$ and thus its spectral decomposition can be expressed as \[
V=\state{\phi}{\phi}-\state{\phi^\perp}{\phi^\perp},
\]
where $\{\ket{\phi},|\phi^\perp\rangle\}$ is an orthonormal basis of a qubit. Then it is easy to see that the mapping
\[
\rho\mapsto \frac{1}{2}(\rho+V\rho V^\dagger)
\]
corresponds the totally dephasing operation in the basis $\{\ket{\phi},|\phi^\perp\rangle\}$.
This immediately implies
\[
\rho-\Pi_A^{\ket{\phi}}[\rho]=\frac{1}{2}(\rho-V\rho V^\dagger)
\]
where $\Pi_A^{\ket{\phi}}$ denotes the totally dephasing operation on system $A$ in the basis $\{\ket{\phi},|\phi^\perp\rangle\}$. Therefore when $A$ is a qubit, the quantification of nonclassicality of correlations by the two different approaches~\ref{itm:decoherence} and~\ref{itm:localunitary} is equivalent up to a constant.
In the original papers~\cite{gharibian:2012a,faberdagmar,gerrylu}, the minimum disturbance caused by a RU operation was measured by the Hilbert-Schmidt norm, but in principle any norm-based distance can be considered, with the equivalence staying true.


\section{Analytic examples}\label{sec:examples}

In this section, we will look at some special classes of states and obtain an analytic expression for the NoQ of these states. Indeed, as we have seen, the expression of NoQ includes an optimization over local bases and finding a closed analytic expression for general states appears to be challenging. The classes of states we study here, two-qubit states with a maximally mixed marginal, Werner states and isotropic states, all have properties---in particular, symmetries---that allow us to simplify the optimization. In \cite{piani:prl2011a} it was already proven that for two qudits
\[
Q_{\mathcal{N}}^{A}\left((1-p)\frac{\openone}{d^2}+p \proj{\psi}\right)=p\mathcal{N}(\proj{\psi}),
\]
where $p$ is a probability, $\openone/d^2$ is the maximally mixed state of the two qudits and $\ket{\psi}$ and arbitrary pure state. This previous result encompasses also two-qubit states not considered here, as well as isotropic states in arbitrary dimensions. In the latter case, the proof provided below is simpler and based on symmetry considerations.


\subsection{Two-qubit states with one maximally mixed marginal}
 
\subsubsection{Bell diagonal states}

Bell states are maximally entangled two-qubit states of the following form:
\begin{align*}
\ket{\phi^\pm} &= \frac{1}{\sqrt{2}}(\ket{0}\ket{0}\pm\ket{1}\ket{1}),
\\
\ket{\psi^\pm} &= \frac{1}{\sqrt{2}}(\ket{0}\ket{1}\pm\ket{1}\ket{0}).
\end{align*}
Bell diagonal states are two-qubit states such that all of their eigenvectors are Bell states, i.e., of the form
\[
\tau=p_0\ket{\phi^+}\bra{\phi^+}+p_1\ket{\psi^+}\bra{\psi^+}+p_2\ket{\psi^-}\bra{\psi^-}+p_3\ket{\phi^-}\bra{\phi^-},
\]
with $\{p_i\}$ a probability vector.  The properties of Bell diagonal states that are used for the proof of the theorems in this section are summarized in Appendix~\ref{sec: Bell and Pauli}. Most importantly, Bell diagonal states can be completely characterized by three alternative---i.e., besides the above probability vector---real parameters, namely by the three elements $R_{11}$, $R_{22}$ and $R_{33}$ of the the \emph{correlation matrix}
$\bold{R}=[R_{\mu\nu}]$, with
\[
R_{\mu\nu}=\tr[(\sigma_\mu\otimes\sigma_\nu)\rho], \quad\mu,\nu=0,1,\ldots,3.
\]

We are able to compute explicitly the one-sided NoQ for Bell diagonal states\footnote{Similar results have been obtained independently by V. Giovannetti with a different method \cite{giovannettiprep}.}. Our result is summarized in the following theorem.
\begin{thm}[One-sided NoQ for Bell diagonal states]
\label{thm:onesidedbell}
Let $R_{00},R_{11},R_{22},R_{33}$ be the correlation matrix elements of a Bell diagonal state $\rho_{AB}$. Rename and reorder  $|R_{11}|,|R_{22}|,|R_{33}|$ according to their size as $\lambda_1, \lambda_2,\lambda_3$ with $1\geq \lambda_3\geq \lambda_2\geq \lambda_1\geq 0$. Then 
\begin{equation}\label{eq:noqindovino}
Q_{\mathcal{N}}^{A}(\rho_{AB})=\frac{\lambda_2}{2}.\end{equation}
\end{thm}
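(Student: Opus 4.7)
My plan is to exploit the trace-distance interpretation from Eq.~\eqref{eq:onesideinter}, then parametrize the local basis on $A$ by a rotation of the Bloch sphere and reduce the minimization to a Courant--Fischer extremal characterization. Writing the Bell diagonal state in its Pauli form $\rho_{AB}=\tfrac{1}{4}(I\otimes I+\sum_{i=1}^{3}R_{ii}\,\sigma_i\otimes\sigma_i)$, I observe that every orthonormal basis of the qubit $A$ is of the form $\{U|0\rangle,U|1\rangle\}$ for some unitary $U$, and that $U^{\dagger}\sigma_iU=\sum_{j}O_{ij}\sigma_j$ for an associated $O\in SO(3)$. Since $\rho_A=I/2$, it follows that $\|\rho_{00}\|_1+\|\rho_{11}\|_1=1$, so by Eq.~\eqref{eq:onesidedNoQ} the one-sided NoQ reduces to $Q_{\mathcal{N}}^A(\rho_{AB})=\min_{O}\|\rho_{01}\|_1$, where $\rho_{01}=\langle a_0|\rho_{AB}|a_1\rangle$.

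Using $\langle 0|\sigma_1|1\rangle=1$, $\langle 0|\sigma_2|1\rangle=-i$, $\langle 0|\sigma_3|1\rangle=0$, and $\langle a_0|a_1\rangle=0$, I get the explicit form $\rho_{01}=\tfrac{1}{4}\sum_i R_{ii}(O_{i1}-iO_{i2})\sigma_i=\tfrac{1}{4}(\vec\alpha\cdot\vec\sigma-i\vec\beta\cdot\vec\sigma)$ with $\vec\alpha=D\vec O_1$, $\vec\beta=D\vec O_2$, $D=\operatorname{diag}(R_{11},R_{22},R_{33})$, and $\vec O_1,\vec O_2$ the first two columns of $O$. Using $(\vec a\cdot\vec\sigma)(\vec b\cdot\vec\sigma)=(\vec a\cdot\vec b)I+i(\vec a\times\vec b)\cdot\vec\sigma$, a short computation gives
\[
\rho_{01}^{\dagger}\rho_{01}=\tfrac{1}{16}\bigl[(|\vec\alpha|^2+|\vec\beta|^2)I+2(\vec\alpha\times\vec\beta)\cdot\vec\sigma\bigr],
\]
so the singular values of $\rho_{01}$ are $\tfrac{1}{4}\sqrt{(|\vec\alpha|^2+|\vec\beta|^2)\pm 2|\vec\alpha\times\vec\beta|}$. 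Miraculously, those two square roots can be regrouped: the eigenvalues $\mu_{\pm}$ of the $2\times 2$ block $\bigl(\begin{smallmatrix}|\vec\alpha|^2 & \vec\alpha\cdot\vec\beta\\ \vec\alpha\cdot\vec\beta & |\vec\beta|^2\end{smallmatrix}\bigr)$ satisfy $\mu_++\mu_-=|\vec\alpha|^2+|\vec\beta|^2$ and $\mu_+-\mu_-=\sqrt{(|\vec\alpha|^2-|\vec\beta|^2)^2+4(\vec\alpha\cdot\vec\beta)^2}=2|\vec\alpha\times\vec\beta|$ (the latter being Lagrange's identity), yielding the clean identity
\[
\|\rho_{01}\|_1=\tfrac{1}{4}\Bigl(\sqrt{2\mu_++2(\mu_+-\mu_-) }+\sqrt{2\mu_+-2(\mu_+-\mu_-)}\Bigr)^{1/2}\!\!\cdot\!(\ldots)\;\longrightarrow\;\tfrac{\sqrt{\mu_+}}{2}.
\]
(More directly, $\sigma_1+\sigma_2=\sqrt{\sigma_1^2+\sigma_2^2+2\sigma_1\sigma_2}$ and the piece under the outer root simplifies to $4\mu_+$.)

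Finally, the $2\times 2$ matrix whose largest eigenvalue is $\mu_+$ is precisely the block $P_V D^2 P_V$ of $D^2=\operatorname{diag}(R_{11}^2,R_{22}^2,R_{33}^2)$ restricted to the plane $V=\mathrm{span}(\vec O_1,\vec O_2)\subset\mathbb{R}^3$. Thus
\[
Q_{\mathcal{N}}^A(\rho_{AB})=\tfrac{1}{2}\min_{V,\;\dim V=2}\sqrt{\lambda_{\max}\bigl(P_V D^2 P_V\bigr)}=\tfrac{1}{2}\min_{V,\;\dim V=2}\max_{\vec x\in V,\,|\vec x|=1}\sqrt{\vec x^{\mathsf{T}}D^2\vec x},
\]
and the Courant--Fischer min-max principle identifies the right-hand side with $\tfrac{1}{2}\sqrt{\lambda_2^{\!\!2}(D^2)}=\lambda_2/2$, since the middle eigenvalue of $D^2$ is $\lambda_2^{\,2}$ under the chosen ordering. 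An optimal $V$ is the span of the two eigenvectors of $D^2$ corresponding to $\lambda_1^{\,2}$ and $\lambda_2^{\,2}$.

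The main technical obstacle I foresee is the algebraic collapse of $\|\rho_{01}\|_1$ into $\tfrac12\sqrt{\mu_+}$: without recognizing Lagrange's identity to rewrite $|\vec\alpha\times\vec\beta|$ in terms of $\mu_+-\mu_-$, the expression looks irreducibly messy and the Courant--Fischer reduction is not visible. Once this simplification is achieved, the rest is geometry.
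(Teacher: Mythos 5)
Your proof is correct in substance and reaches the result by a genuinely different route from the paper's. The paper represents the Bell diagonal state as $(\I\otimes\Lambda)[\phi_{AB}]$ for a Pauli channel $\Lambda$, uses the variational form $\|X\|_1=\max_U|\langle U,X\rangle|$ together with self-duality of $\Lambda$ to turn the problem into a min--max over local unitaries and pure states, and then argues geometrically about the projection of the image ellipsoid of the Bloch sphere onto the $xy$-plane, obtaining the lower bound from the inclusion of the cross-section in the projection plus the same Courant--Fischer corollary you invoke. You instead compute $\rho_{01}$ explicitly and establish the exact identity $\|\rho_{01}\|_1=\tfrac12\sqrt{\lambda_{\max}(P_VD^2P_V)}$ with $V=\mathrm{span}(\vec{O}_1,\vec{O}_2)$, so that the upper and lower bounds drop out simultaneously from a single application of Courant--Fischer. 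This is more elementary (no channel duality, no ellipsoid projection argument) and arguably cleaner, since the paper needs separate arguments for the two bounds.

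One slip to fix: the claimed identity $\mu_+-\mu_-=\sqrt{(|\vec\alpha|^2-|\vec\beta|^2)^2+4(\vec\alpha\cdot\vec\beta)^2}=2|\vec\alpha\times\vec\beta|$ is false in general (take $\vec\alpha=(1,0,0)$, $\vec\beta=(0,2,0)$: the eigenvalue gap is $3$ while $2|\vec\alpha\times\vec\beta|=4$), and if it were true your regrouping would yield $\tfrac14\bigl(\sqrt{2\mu_+}+\sqrt{2\mu_-}\bigr)$ rather than $\tfrac12\sqrt{\mu_+}$. What Lagrange's identity actually gives is $\mu_+\mu_-=|\vec\alpha|^2|\vec\beta|^2-(\vec\alpha\cdot\vec\beta)^2=|\vec\alpha\times\vec\beta|^2$, whence, with $s=|\vec\alpha|^2+|\vec\beta|^2$ and $c=|\vec\alpha\times\vec\beta|$,
\[
\bigl(\sqrt{s+2c}+\sqrt{s-2c}\bigr)^2=2s+2\sqrt{s^2-4c^2}=2(\mu_++\mu_-)+2(\mu_+-\mu_-)=4\mu_+ .
\]
This is exactly the computation in your ``more directly'' parenthetical, so the conclusion $\|\rho_{01}\|_1=\sqrt{\mu_+}/2$ and everything downstream of it stand; only the stated justification needs replacing.
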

\begin{proof}
Let $\ket{\phi}=(\ket{0}_A\ket{0}_B+\ket{1}_A\ket{1}_B)/\sqrt{2}$ and $\phi_{AB}=|\phi\rangle\langle\phi|$, where throughout the proof $\ket{i}$ denotes the computational basis. Then, as explained in the Appendix~\ref{sec: Bell and Pauli}, a Bell diagonal state can be expressed as $(\mathbb{I}_A\otimes\Lambda_B)\lbrack\phi_{AB}\rbrack$ where $\Lambda(X)=\sum_{\mu=0}^3p_\mu \sigma_\mu X\sigma_\mu$ is a Pauli channel with $\{p_\mu\}_{\mu\in\{0,1,2,3\}}$ a probability vector. Now let $\{\ket{a_i}:i=0,1\}$ be a basis for the qubit space. Then the expression of one-sided NoQ in Eq.~\eqref{eq:onesidedNoQ} implies
\begin{equation}
\label{eq:bellfirststeps}
\begin{split}
Q_{\mathcal{N}}^{A}(\rho_{AB})& = \min_{\lbrace\ket{a_0},\ket{a_1}\rbrace} \big(\|\langle a_0|(\mathbb{I}_A\otimes\Lambda_B\lbrack\phi_{AB}\rbrack)|a_1\rangle\|_1\\ & \quad +\|\langle a_1|(\mathbb{I}_A\otimes\Lambda_B\lbrack\phi_{AB}\rbrack)|a_0\rangle\|_1\big)/2
\\
&\stackrel{(i)}{=} \min_{\lbrace\ket{a_0},\ket{a_1}\rbrace}\max_U |\langle U,\, \langle a_0|(\mathbb{I}_A\otimes\Lambda_B\lbrack\phi_{AB}\rbrack)|a_1\rangle\rangle|
\\
& \stackrel{(ii)}{=} \min_{\lbrace\ket{a_0},\ket{a_1}\rbrace}\max_U |\langle U,\, \Lambda_B\lbrack\langle a_0|\phi_{AB}|a_1\rangle\rbrack\rangle|
\\
& \stackrel{(iii)}{=} \min_{\lbrace\ket{a_0},\ket{a_1}\rbrace}\max_U |\langle \Lambda_B\lbrack U\rbrack,\, \langle a_0|\phi_{AB}|a_1\rangle\rangle|
\\
&\stackrel{(iv)}{=} \min_{\lbrace\ket{a_0},\ket{a_1}\}}\max_U |\langle \Lambda_B\lbrack U\rbrack,\, (\ket{a_0^*}\bra{a_1^*}_B/2)\rangle|
\\
&\stackrel{(v)}{=} \min_{V}\max_U |\bra{1}_BV^\dagger\Lambda_B\lbrack U\rbrack V\ket{0}_B|/2
\end{split}
\end{equation}
where: $(i)$ holds because of the 1-norm invariance under Hermitian conjugation, so that the two terms on the right-hand side of the previous line are equal; $(ii)$ holds because tracing on $A$ commutes with operations on $B$; $(iii)$ holds because a Pauli channel is self-dual; $(iv)$ holds because for the maximally entangled state $\ket{\phi}$ one has $\bra{\gamma}_A\ket{\phi}_{AB}=\ket{\gamma^*}_B/\sqrt{2}$, where $\ket{\gamma^*}=\gamma_0^*\ket{0} + \gamma_1^*\ket{1}$ if $\ket{\gamma}=\gamma_0\ket{0} + \gamma_1\ket{1}$; finally, $(v)$ holds because $\{\ket{a_i^*}:i=0,1\}$ is still an orthonormal basis and we can write $\ket{a_i^*}=V\ket{i}$ for $V$ a unitary.
Since $\bra{1}_BV^\dagger\Lambda_B\lbrack U\rbrack V\ket{0}_B|/2$ is an expression only on subsystem $B$, the subscripts will be omitted in the following.

Now, the spectral decomposition of the unitary $U$ can be expressed as
\begin{align*}
U &= e^{i\theta_1}|\psi\rangle\langle\psi|+e^{i\theta_2}|\psi^\perp\rangle\langle\psi^\perp|
\\
&= e^{i\theta_1}(|\psi\rangle\langle\psi|+e^{i(\theta_2-\theta_1)}(\mathbb{I}-|\psi\rangle\langle\psi|))
\\
&= e^{i\theta_1}(e^{i\theta}\mathbb{I}+(1-e^{i\theta})|\psi\rangle\langle\psi|)
\end{align*}
where $|\psi\rangle,|\psi^\perp\rangle$ are eigenvectors of $U$ and $\theta=\theta_2-\theta_1$.
Thus, ignoring the irrelevant global phase, we can rewrite the NoQ as
\begin{equation}
\label{eq:bellsecondsteps}
\begin{split}
Q_{\mathcal{N}}^{A}(\rho_{AB})&=  \min_{V}\max_{\theta,|\psi\rangle} |\langle1|V^\dagger\Lambda( e^{i\theta}\mathbb{I}+(1-e^{i\theta})|\psi\rangle\langle\psi|) V|0\rangle|/2
\\
&\stackrel{(i)}{=} \min_{V}\max_{|\psi\rangle} |\langle1|V^\dagger\Lambda\lbrack|\psi\rangle\langle\psi|\rbrack V|0\rangle|
\\
&\stackrel{(ii)}{=} \min_{V}\max_{|\psi\rangle}  |\langle \frac{\sigma_x+i\sigma_y}{2},\, V^\dagger\Lambda\lbrack|\psi\rangle\langle\psi|\rbrack V\rangle|
\\
&= \min_{V}\max_{|\psi\rangle}  \sqrt{\langle\sigma_x\rangle^2+\langle\sigma_y\rangle^2}/2
\end{split}
\end{equation}
where $\langle\sigma_x\rangle$ is the expectation value of $\sigma_x$ for the state $V^\dagger\Lambda(|\psi\rangle\langle\psi|)V$, i.e., $\langle \sigma_x,\, V^\dagger\Lambda(|\psi\rangle\langle\psi|)V\rangle$ and similarly $\langle\sigma_y\rangle$ is its expectation value of $\sigma_y$. For the equalities, in $(i)$ we used the fact that any Pauli channel is unital and that $\max_\theta |1-e^{i\theta}|=2$, while  $(ii)$ is due to the relation $|0\rangle\langle1|=(\sigma_x+i\sigma_y)/2$.

As already mentioned, there is a  one-to-one correspondence between Pauli channels and Bell diagonal states, and a Pauli channel with corresponding Bell diagonal state described by $\{R_{ii}\}$ acts on a Bloch vector of components $n_i = Tr(\sigma_i \rho)/\sqrt{2}$, $i=x,y,z$ (see also Appendix \ref{sec: Bell and Pauli}), according to
\[
(n_x,n_y,n_z)\mapsto (R_{11}n_x,-R_{22}n_y,R_{33}n_z).
\]
Therefore its action on the Bloch sphere $S^2$ results in an ellipsoid of equatorial radii $(R_{11},R_{22},R_{33})$. 
Now rename $|R_{11}|,|R_{22}|,|R_{33}|$ according to their size as $\lambda_1, \lambda_2,\lambda_3$ with $1\geq \lambda_3\geq \lambda_2\geq \lambda_1\geq 0$.  Considering that $\sqrt{\langle\sigma_x\rangle^2+\langle\sigma_y\rangle^2}$ represents the Euclidean distance from the origin of the projection of the state $V^\dagger\Lambda(|\psi\rangle\langle\psi|)V$ on $xy$-plane, one can easily imagine that $Q_{\mathcal{N}}^{A}(\rho_{AB})$ is achieved when $V$ aligns the largest equatorial radius with the $z$-axis and choosing $\ket{\psi}$ to be the state mapped in the direction of the second largest equatorial radius (see Figure~\ref{fig:ellipsoid}). It is clear that this choice provides an upper bound  $Q_{\mathcal{N}}^{A}(\rho_{AB})\leq \lambda_2/2$. In the following we prove that this bound is saturated.\\

\begin{figure*}[th]
\includegraphics[width=14cm]{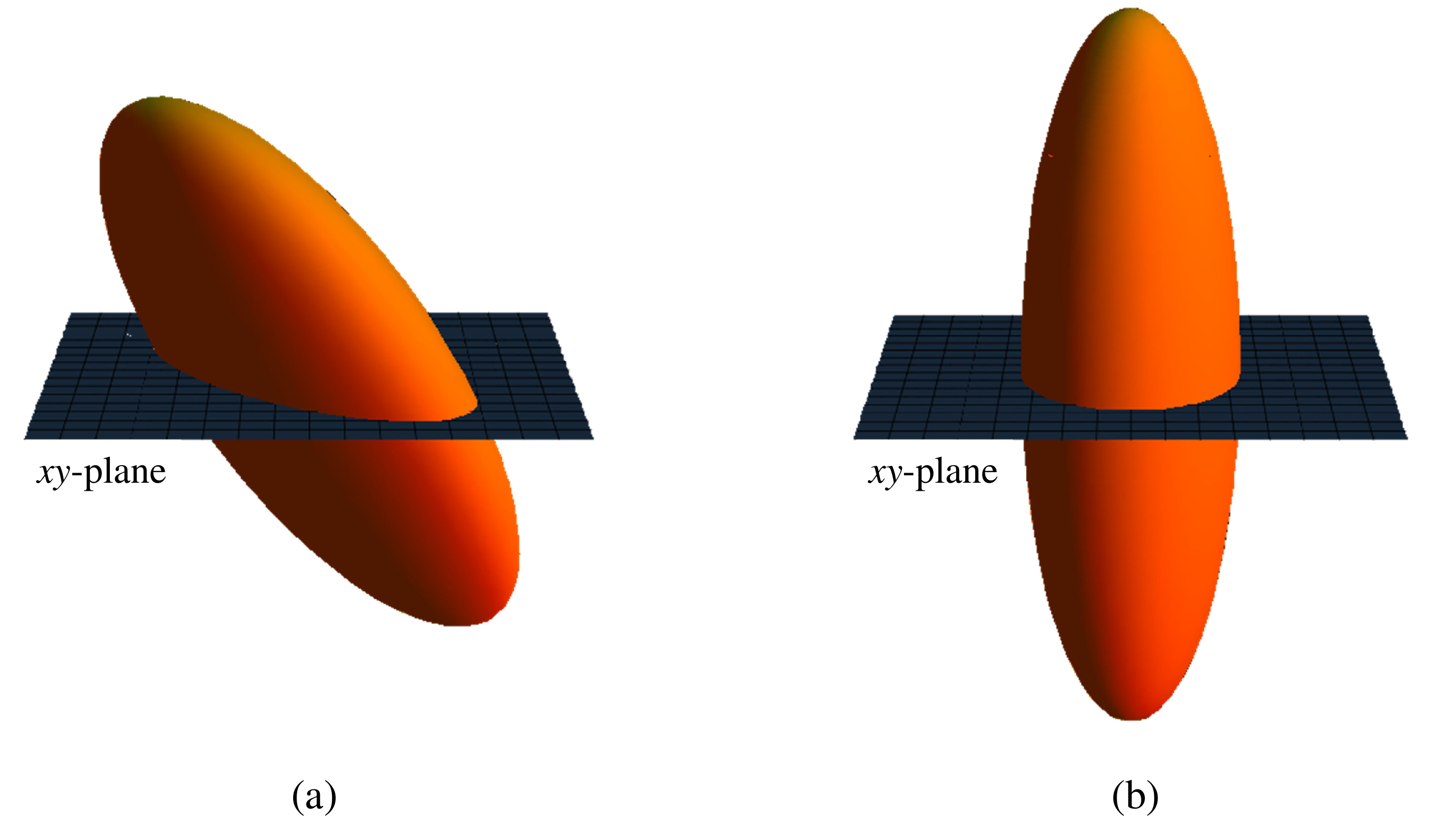}
\caption{(Color online) The problem of calculating the one-sided negativity of quantumness $Q^A_{\cal N}$ of Bell diagonal states cast in geometrical terms. (a) The surface of the ellipsoid is the set of states of the form $V^\dagger\Lambda(|\psi\rangle\langle\psi|)V$ with some unitary $V$ for the set of all pure qubit states $|\psi\rangle$. Here we want to find the optimal unitary $V$ such that it minimizes the maximum distance from the origin to its projection onto the $xy$-plane. (b) Clearly choosing the unitary which rotates the ellipsoid such that it aligns the longest equatorial radii along $z$-axis minimizes the maximum distance from the origin to its projection on $xy$-plane.}
\label{fig:ellipsoid}
\end{figure*}


First observe that an ellipsoid with equatorial radii $(\lambda_1,\lambda_2,\lambda_3)$ that are aligned with the coordinate axes can be expressed as
 \[
 \lbrace \underline{x}=(x_1,x_2,x_3)\microspace:\microspace\underline{x}^T A\underline{x}=1\rbrace,
 \]
 where $A=\diag{1/\lambda_1^2, 1/\lambda_2^2, 1/\lambda_3^2}$.
 Therefore an arbitrary ellipsoid with equatorial radii $(\lambda_1,\lambda_2,\lambda_3)$ can be expressed as
 \[
  \lbrace \underline{y}=(y_1,y_2,y_3):\underline{y}^T B\underline{y}=1\rbrace
  \]
with  $B=R^TAR$  for some rotation $R\in O(3)$. In particular, $B$ has the same---positive---eigenvalues as $A$. We then find
%
\begin{align*}
\max_{\underline{y}:\,\underline{y}^T B\underline{y}=1}&\sqrt{y_1^2 +y_2^2}\\
&\stackrel{(i)}{\geq}
\max\lbrace ||\underline{y}|| \microspace:\microspace\underline{y}=(y_1,y_2,0)\microspace,\microspace\microspace\underline{y}^TB\underline{y}=1\rbrace\\
&\stackrel{(ii)}{=} \max\lbrace (\underline{y}'^T B\underline{y}')^{-1/2}\microspace:\microspace \underline{y}'=(y'_1,y'_2,0)\microspace,\microspace\microspace\|\underline{y}'\|=1\}\\
&=1/\sqrt{ \min \lbrace (\underline{y}'^T B\underline{y}')\microspace:\microspace \underline{y}'=(y_1,y_2,0)\microspace,\microspace\microspace\|\underline{y}'\|=1\}}\\
&\stackrel{(iii)}{\geq}1/ \sqrt{1/\lambda_2^2}\\
&=\lambda_2,
\end{align*}
 where: $(i)$ holds because the cross-section of an arbitrary ellipsoid with the $xy$-plane is a subset of its projection onto the $xy$-plane; $(ii)$ holds because for any $\underline{y}$ satisfying $\underline{y}^TB\underline{y}=1$ we can consider a normalized $\underline{y}' = \underline{y} / \|\underline{y} \|$, and for any normalized $\underline{y}'$ we can consider $\underline{y}/\sqrt{ \underline{y}^TB\underline{y}}$ that satisfies by construction $\underline{y}^TB\underline{y}=1$; finally, $(iii)$ holds because Corollary  \MakeUppercase{\romannumeral 3}.1.2 of \cite{bhatia1997matrix} implies
 \begin{equation}\label{eq: bell-1}
 \frac{1}{\lambda_2^2}=\max_{\substack{\mathcal{M}\subset\real^3\\\dim{\mathcal{M}=2}}}\min_{\substack{\underline{x}\in\mathcal{M}\\||\underline{x}||=1}}\underline{x}^TB\underline{x},
 \end{equation}
 where $\mathcal{M}$ is a two-dimensional subspace of $\real^3$.

%

Therefore $Q_{\mathcal{N}}^{A}(\rho_{AB})=\frac{\lambda_2}{2}$.
\end{proof}


With this result for the one-sided NoQ we can also easily solve the two-sided case.
\begin{cor}[Two-sided NoQ for Bell diagonal states]

For a Bell diagonal state $\rho_{AB}$, its total NoQ is $Q_{\mathcal{N}}(\rho_A)=\frac{\lambda_2}{2}$
\end{cor}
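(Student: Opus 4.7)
My plan is to squeeze $Q_{\mathcal{N}}^{AB}(\rho_{AB})$ between matching upper and lower bounds. The lower bound is free: the hierarchy property recalled in Section~\ref{sec:interp} (from Ref.~\cite{piani:pr2012a}) gives $Q_\neg^{AB}(\rho_{AB}) \geq Q_\neg^{A}(\rho_{AB})$, and Theorem~\ref{thm:onesidedbell} evaluates the right-hand side to $\lambda_2/2$. So the task reduces to exhibiting a single product basis $\bas{A}\otimes\bas{B}$ whose $l_1$-norm in Theorem~\ref{thm:overallNoQ} achieves exactly $1+\lambda_2$.

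For the upper bound I would work with the standard form $\rho_{AB}=\tfrac{1}{4}\bigl(I\otimes I+\sum_{i=1}^{3}R_{ii}\,\sigma_i\otimes\sigma_i\bigr)$ and first use a preliminary local unitary of the form $U\otimes U$ (which induces a congruence $R\mapsto ORO^T$ by an $SO(3)$ rotation on the diagonal correlation matrix, and thus implements arbitrary signed permutations of the entries $R_{ii}$) in order to assume $|R_{33}|=\lambda_3$. Note that such a local unitary leaves the NoQ invariant. Next I would write out $\rho_{AB}$ in the computational basis of both qubits: the diagonal entries are $(1\pm R_{33})/4$, and the only other nonzero entries are $\rho_{00,11}=\rho_{11,00}=(R_{11}-R_{22})/4$ together with $\rho_{01,10}=\rho_{10,01}=(R_{11}+R_{22})/4$. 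Summing absolute values and invoking the elementary identity $|a-b|+|a+b|=2\max(|a|,|b|)$ gives
\begin{equation}
\lnorm{\rho_{AB}}^{\bas{A}\otimes\bas{B}} \;=\; 1 + \tfrac{1}{2}\bigl(|R_{11}-R_{22}|+|R_{11}+R_{22}|\bigr) \;=\; 1 + \max(|R_{11}|,|R_{22}|) \;=\; 1+\lambda_2,
\end{equation}
where the last equality uses that, having set $|R_{33}|=\lambda_3$, the remaining two absolute values are $\{\lambda_2,\lambda_1\}$. Theorem~\ref{thm:overallNoQ} then yields $Q_\neg^{AB}(\rho_{AB})\leq \lambda_2/2$, which combined with the lower bound closes the proof.

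There is no real obstacle. The only step deserving care is the axis-alignment argument, i.e.\ checking that a local unitary of the form $U\otimes U$ can always bring the largest $|R_{ii}|$ into the third slot; this is standard, since, e.g., a $\pi$-rotation around the axis $(\hat{x}+\hat{y})/\sqrt{2}$ implements the swap $R_{11}\leftrightarrow R_{22}$ (up to a sign flip of $R_{33}$ that is irrelevant to $|R_{33}|$), and similarly for the other transpositions. Equivalently, one could bypass the alignment step altogether by directly writing $Q_{\mathcal{N}}^{AB}(\rho_{AB})\leq\tfrac{1}{2}\min_{\{i,j\}\subset\{1,2,3\}}\max(|R_{ii}|,|R_{jj}|)$ from the computational-basis calculation and observing that this minimum over pairs equals $\lambda_2$.
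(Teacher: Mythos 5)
Your proof is correct and follows essentially the same route as the paper's: the lower bound comes from the hierarchy $Q_\neg^{AB}\geq Q_\neg^{A}=\lambda_2/2$ of Theorem~\ref{thm:onesidedbell}, and the upper bound from aligning the correlation matrix by local unitaries and computing the computational-basis $l_1$-norm to be $1+\lambda_2$. You merely spell out the ``simple calculation'' (via $|a-b|+|a+b|=2\max(|a|,|b|)$) that the paper leaves implicit.
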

\begin{proof}
Since local unitaries acts as $O(3)$ elements on the correlation matrix of $\rho_{AB}$, one can always transform it to $\begin{pmatrix}
1 & 0 & 0& 0\\
0 & \lambda_1 & 0 & 0\\
0 & 0 & \lambda_2 & 0\\
0 & 0 & 0 & \lambda_3 \\
\end{pmatrix}$ and a simple calculation shows that the $\|\cdot\|_{l_1}$-norm---in the computational basis---of a state with such correlation matrix is $1+\lambda_2$. Since $Q_{\mathcal{N}}^{A}(\rho_{AB})\leq Q_{\mathcal{N}}(\rho_{AB})$ \cite{piani:pr2012a}, this is the best we can achieve, i.e., $Q_{\mathcal{N}}(\rho_{AB})=\lambda_2/2$.
\end{proof}

\subsubsection{Extension of the analysis for Bell diagonal states}

For the analysis of Bell diagonal states, we made use of the fact that each Bell diagonal state can be expressed as a state generated by the action of a unique Pauli channel on the maximally entangled state. However there is no reason to restrict the channel to be a Pauli channel and NoQ can be computed in a similar manner for the state generated by the action of a general qubit channel (CPTP map) on the maximally entangled state.
\begin{thm}
Let $\rho_{AB}$ be a two-qubit state where $\rho_A=\tr_B[\rho_{AB}]$ is maximally mixed. Let $\bf{R}$ be the correlation matrix of the state $\rho_{AB}$ with elements $R_{\mu\nu}=\langle \sigma_\mu\otimes\sigma_\nu , \rho_{AB}\rangle$ and denote its $3\times 3-$submatrix as $\hat{\bf{R}}:=\lbrack R_{ij}\rbrack_{i,j\in\{1,2,3\}}$. Denote the singular values (including zeros) of $\hat{\bf{R}}$ in descending order as $s_1(\hat{\bf{R}})\geq s_2(\hat{\bf{R}}) \geq s_3(\hat{\bf{R}})$.
 Then $Q_{\mathcal{N}}^{A}(\rho_{AB})=s_2(\hat{\bf{R}})/2$.
\end{thm}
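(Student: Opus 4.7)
The plan is to run the proof of Theorem~\ref{thm:onesidedbell} for Bell diagonal states in essentially the same form, but taking as starting point the Choi--Jamio{\l}kowski representation of $\rho_{AB}$. The hypothesis that $\rho_A$ is maximally mixed is precisely what lets one write $\rho_{AB} = (\mathbb{I}_A\otimes\Lambda_B)[\phi_{AB}]$ for a bona fide CPTP map $\Lambda$ on qubit $B$, with $\ket{\phi} = (\ket{00}+\ket{11})/\sqrt{2}$. Since Bell diagonal states correspond to $\Lambda$ being a Pauli channel, the only steps of Eqs.~\eqref{eq:bellfirststeps}--\eqref{eq:bellsecondsteps} that need checking are those that used extra structure of Pauli channels. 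In step $(iii)$ of Eq.~\eqref{eq:bellfirststeps} self-duality was invoked only to move the channel onto $U$; Hilbert--Schmidt duality does the same for a general $\Lambda$ at the price of replacing $\Lambda$ by $\Lambda^\dagger$. In step $(i)$ of Eq.~\eqref{eq:bellsecondsteps} unitality of the channel is used, and this is still available because $\Lambda^\dagger$ is unital whenever $\Lambda$ is trace preserving. Carrying these replacements through gives
\[
Q_{\mathcal{N}}^A(\rho_{AB}) = \min_{V}\max_{\ket{\psi}}\bigl|\bra{1}V^\dagger \Lambda^\dagger(\proj{\psi}) V\ket{0}\bigr|,
\]
the analogue of the intermediate line of Eq.~\eqref{eq:bellsecondsteps}.

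Next, I would translate to Bloch-vector language. Writing $\proj{\psi} = (\mathbb{I}+\vec p\cdot\vec\sigma)/2$ with $|\vec p|=1$ and expanding $\Lambda^\dagger(\sigma_j) = \tilde T_{j0}\mathbb{I} + \sum_{k=1}^3\tilde T_{jk}\sigma_k$ for $j=1,2,3$, unitality gives $\Lambda^\dagger(\mathbb{I})=\mathbb{I}$, so the vector part of $\Lambda^\dagger(\proj{\psi})$ is $\hat T^T\vec p$ (up to the overall $1/2$), where $\hat T=[\tilde T_{jk}]_{j,k=1,2,3}$. Conjugation by $V$ acts on this vector as some $R\in SO(3)$, and a direct Pauli-basis computation yields
\[
\bigl|\bra{1} V^\dagger\Lambda^\dagger(\proj{\psi}) V\ket{0}\bigr| = \tfrac12\sqrt{n_x^2+n_y^2}, \qquad \vec n = R\,\hat T^T\vec p.
\]
Moreover, expanding $R_{\mu\nu}=\tr[(\sigma_\mu\otimes\sigma_\nu)\rho_{AB}]$ using $\rho_{AB}=(\mathbb{I}\otimes\Lambda)[\phi_{AB}]$ together with the identity $\bra{\phi}(A\otimes B)\ket{\phi}=\tr(A^T B)/2$ produces $\hat{\mathbf R}=D\hat T^T$ with $D=\operatorname{diag}(1,-1,1)$ (the signs coming from $\sigma_2^T=-\sigma_2$); in particular $\hat{\mathbf R}$ and $\hat T$ share singular values.

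With $P=\operatorname{diag}(1,1,0)$, the whole problem reduces to the linear-algebraic optimization
\[
Q_{\mathcal{N}}^A(\rho_{AB}) = \frac{1}{2}\min_{R\in SO(3)} s_1\bigl(P R\hat T^T\bigr),
\]
since $\max_{|\vec p|=1}\|PR\hat T^T\vec p\|=s_1(PR\hat T^T)$. This min--max is the main step I expect to require care, but it is a direct generalization of the ellipsoid argument of Theorem~\ref{thm:onesidedbell} (see Fig.~\ref{fig:ellipsoid}). Absorbing the right SVD factor of $\hat T^T$ into $R$, it is equivalent to $\min_R s_1(PR\Sigma)$ with $\Sigma=\operatorname{diag}(s_1(\hat{\mathbf R}),s_2(\hat{\mathbf R}),s_3(\hat{\mathbf R}))$. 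Choosing $R$ so that its third row is the eigenvector of $\Sigma^2$ for $s_1^2$ leaves a $2\times 3$ top block of $R\Sigma$ whose nonzero singular values are $\{s_2,s_3\}$, giving the upper bound $s_2$; the matching lower bound $s_1(PR\Sigma)\ge s_2$ is the Courant--Fischer min--max for $\Sigma^2$ on two-dimensional subspaces, used exactly as in Eq.~\eqref{eq: bell-1}. Dividing by two yields $Q_{\mathcal{N}}^A(\rho_{AB})=s_2(\hat{\mathbf R})/2$, and Theorem~\ref{thm:onesidedbell} is recovered as the Bell diagonal case in which $\hat{\mathbf R}$ is diagonal.
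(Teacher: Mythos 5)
Your proposal is correct and follows essentially the same route as the paper: reduce to $\rho_{AB}=(\mathbb{I}\otimes\Lambda)[\phi_{AB}]$, replace self-duality by Hilbert--Schmidt duality so that the unital map $\Lambda^\dagger$ appears, note that the affine (translation) part of $\Lambda^\dagger$ cannot contribute to the $\sigma_x,\sigma_y$ components, and finish with the ellipsoid-projection/Courant--Fischer argument of the Bell-diagonal case. The only cosmetic difference is that the paper diagonalizes the $3\times 3$ block up front via the Ruskai--Szarek--Werner canonical form of qubit channels, whereas you absorb the SVD factors of $\hat{T}^{\scriptscriptstyle\mathsf{T}}$ into the final optimization $\min_{R}s_1(PR\Sigma)$; both are equivalent.
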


\begin{proof}
To apply the techniques from the analysis for the case of Bell diagonal states (Theorem~\ref{thm:onesidedbell}), it is important to notice the relation between the matrix $\bf{T}\in\real^{4\times 4}$ representing a Hermiticity-preserving linear map $\Omega$ (see Appendix~\ref{sec: qubit channels}) and the correlation matrix $\bf{R}$ of the corresponding operator $\rho_{AB}=(\I\otimes\Omega)[\phi_{AB}]$, with $\phi_{AB}$ the standard maximally entangled state. This relation can be expressed as
$R_{\mu\nu}=\frac{1}{2}(-1)^{\delta_{\mu2}}T_{\mu\nu}$, with $\delta_{\mu2}$ a Kronecker delta. Moreover local unitaries acting on $\rho_{AB}$ corresponds to the action of unitaries on the input and output  of the channel. Namely let $\bf{R}'$ be the correlation matrix of $(U_A\otimes U_B)\rho_{AB}(U_A\otimes U_B)^\dagger$ where $U_A$ and $U_B$ are local unitaries on system $A$ and $B$. Then the matrix $\bf{T}'$ defined  via $T'_{\mu\nu}=2 (-1)^{\delta_{\mu2}} R'_{\mu\nu}$ represents the action of $W_{U_B}\circ\Omega\circ W_{U_A^T}$, where $W_U$ stands for the conjugation by $U$, i.e., $W_U[X]=UXU^\dagger$. Also it is easy to see that the singular values of the $3\times 3$ submatrix $\hat{\bf{R}}=\lbrack R_{ij} \rbrack_{i,j\in\{1,2,3\}}$ and those of the submatrix of $\lbrack T_{ij}\rbrack_{i,j\in\{1,2,3\}}$ of $\bf{T}$ are equal up to the constant $1/2$.

Now we consider a state $\rho_{AB}$ maximally mixed on $A$, which can always---and uniquely---be represented as $\rho_{AB}=(\I_A\otimes\Lambda)[\phi_{AB}]$, with $\Lambda$ a channel. Now as explained in Appendix~\ref{sec: qubit channels}, one can find unitaries $U_A$ and $U_B$ such that $\Lambda=W_{U_B}\circ\tilde{\Lambda}\circ W_{U_A}$ with the matrix representation $\bf{T}$ of $\tilde{\Lambda}$ in the canonical form
\[
{\bf T}=\left(
\begin{BMAT}{c.ccc}{c.ccc}
1 & 0 & 0 & 0\\
t_1 & \lambda_1 && \\ t_2 &&\lambda_2 &\\ t_3&&&\lambda_3
\end{BMAT}
\right).
\]
Following the same steps taken in Eq.~\eqref{eq:bellfirststeps}, one finds
\[
\begin{split}
Q_{\mathcal{N}}^{A}(\rho_{AB})
&= \min_{\lbrace\ket{a_0},\ket{a_1}\rbrace} \big(\|\langle a_0|(\mathbb{I}_A\otimes\Lambda_B\lbrack\phi_{AB}\rbrack)|a_1\rangle\|_1\\ &\quad+\|\langle a_1|(\mathbb{I}_A\otimes\Lambda_B\lbrack\phi_{AB}\rbrack)|a_0\rangle\|_1\big)/2\\
&= \min_{V}\max_U |\bra{1}_BV^\dagger\Lambda^\dagger_B\lbrack U\rbrack V\ket{0}_B|/2
\end{split}
\]
where we have taken into account that now in general the channel $\lambda$ is not self-dual, so that $\Lambda^\dagger\neq \Lambda$.

From $\Lambda=W_{U_B}\circ\tilde{\Lambda}\circ W_{U_A}$, so that $\Lambda^\dagger = W_{U_A^\dagger} \circ\tilde{\Lambda}^\dagger \circ W_{U_B^\dagger}$, and following step $(i)$ of Eq. \eqref{eq:bellsecondsteps}, we arrive to
\begin{equation}
\label{eq:dualneg}
Q_{\mathcal{N}}^{A}(\rho_{AB})
=\min_{V}\max_{|\psi\rangle}  |\langle |0\rangle\langle1|,\, V^\dagger\tilde{\Lambda}^\dagger\lbrack|\psi\rangle\langle\psi|\rbrack V\rangle|,
\end{equation}
having used the fact that for any channel $\Lambda$, the dual map $\Lambda^\dagger$ is unital, i.e., $\tilde\Lambda^\dagger[\I]=\I$.

Now, the action of $\tilde{\Lambda}^\dagger$ on a state with Bloch coordinates $(1,w_1,w_2,w_3)$  is
\begin{equation}
\label{eq:dualbloch}
\tilde{\Lambda}^\dagger\left[\frac{1}{2}\left(\I+\sum_i w_i\sigma_i\right)\right]=\frac{1}{2}\left(\left(1+\sum_i t_iw_i\right)\I+\sum_i\lambda_iw_i\sigma_i\right),
\end{equation}
since the matrix representation of $\tilde{\Lambda}^\dagger$ is ${\bf T}^T$ (see Appendix~\ref{sec: qubit channels}).
While as soon as some $t_i$ is different from zero the map $\tilde\Lambda^\dagger$ is not trace-preserving, we see that this is irrelevant, as the first term on the right-hand side of Eq.~\eqref{eq:dualbloch} effectively does not contribute to the right-hand side of Eq.~\eqref{eq:dualneg}. Hence we can follow the proof of Theorem~\ref{thm:onesidedbell} as if we were dealing with a Pauli channel fully characterized by $\lambda_1,\lambda_2,\lambda_3$.
\end{proof}

\subsection{Werner states}

Here we present the formula for the one-sided and two-sided NoQ for general Werner states \cite{werner}.
\begin{defn}[Werner states]
Let $A$ and $B$ be $d$-dimensional quantum systems.
Then a Werner state  $\rho_{AB}$ is a bipartite state of the following form \cite{werner}
\[
 \rho_{AB}=\frac{\I_{AB}+\beta W}{d^2+d\beta},
 \]
where $\beta\in\real$ satisfies $|\beta|\leq 1$, $W=\sum_{ij}\state{i}{j}\otimes\state{j}{i}$ is the swap operator and $\I_{AB}$ is the identity operator on $AB$. Werner states are $UU$-invariant, i.e., $U_A\otimes U_B \rho_{AB}U^\dagger_A\otimes U^\dagger_B=\rho_{AB}$, for all $U$.
\end{defn}


\begin{thm}[NoQ of Werner states]
For any Werner state $\rho_{AB}=\frac{\I_{AB}+\beta W}{d^2+d\beta}$ the partial and the  total NoQ are both equal to
\[
Q_{\mathcal{N}}^{A}(\rho_{AB})=Q_{\mathcal{N}}^{AB}(\rho_{AB})=\frac{|\beta|(d-1)}{2(d+\beta)}.
\]
\end{thm}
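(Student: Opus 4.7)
The plan is to combine a symmetry argument with a direct computation in the computational basis and sandwich the two NoQs via the hierarchy $Q^{A}_{\mathcal N}\le Q^{AB}_{\mathcal N}$ from Section~\ref{sec:interp}.

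First I would show that the one-sided NoQ of a Werner state does not depend on the choice of local basis on $A$. Indeed, for any basis $\{\ket{a_i}=U\ket{i}\}$ of $A$, the blocks entering Eq.~\eqref{eq:onesidedNoQ} are $\rho_{ij}=\bra{i}(U_A^\dagger\otimes \I_B)\rho_{AB}(U_A\otimes \I_B)\ket{j}$. Using the $U\otimes U$ invariance of Werner states in the form $(U^\dagger\otimes U^\dagger)\rho_{AB}(U\otimes U)=\rho_{AB}$, one can trade the unitary on $A$ for one on $B$: $(U_A^\dagger\otimes \I_B)\rho_{AB}(U_A\otimes \I_B)=(\I_A\otimes U_B)\rho_{AB}(\I_A\otimes U_B^\dagger)$. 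Hence $\rho_{ij}=U_B\rho_{ij}^{(0)}U_B^\dagger$ with $\rho_{ij}^{(0)}$ computed in the computational basis, and unitary invariance of the trace norm gives $\|\rho_{ij}\|_1=\|\rho_{ij}^{(0)}\|_1$.

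Next I would directly compute in the computational basis. From $W=\sum_{mn}\ket{m}\bra{n}\otimes\ket{n}\bra{m}$ one gets
\[
\rho_{ij}=\bra{i}\rho_{AB}\ket{j}=\frac{\delta_{ij}\I_B+\beta\ket{j}\bra{i}_B}{d^2+d\beta}.
\]
For $i=j$, $\rho_{ii}$ is positive semidefinite with eigenvalues $(1+\beta)/(d^2+d\beta)$ and $1/(d^2+d\beta)$ (with multiplicity $d-1$), so $\|\rho_{ii}\|_1=(d+\beta)/(d^2+d\beta)=1/d$; for $i\neq j$, $\rho_{ij}$ is rank one with $\|\rho_{ij}\|_1=|\beta|/(d^2+d\beta)$. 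Summing,
\[
\sum_{i,j}\|\rho_{ij}\|_1-1=\frac{d(d-1)|\beta|}{d^2+d\beta}=\frac{(d-1)|\beta|}{d+\beta},
\]
which yields $Q_{\mathcal N}^A(\rho_{AB})=(d-1)|\beta|/(2(d+\beta))$ by Eq.~\eqref{eq:onesidedNoQ}.

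Finally I would obtain the matching upper bound on the total NoQ by evaluating Theorem~\ref{thm:overallNoQ} in the computational basis on both sides. A short count of the entries $\rho_{ijkl}=\bra{i,k}\rho_{AB}\ket{j,l}=(\delta_{ij}\delta_{kl}+\beta\delta_{il}\delta_{jk})/(d^2+d\beta)$ yields
\[
\sum_{i,j,k,l}|\rho_{ijkl}|-1=\frac{(d-1)|\beta|}{d+\beta},
\]
so $Q_{\mathcal N}^{AB}(\rho_{AB})\le(d-1)|\beta|/(2(d+\beta))$. The hierarchy $Q_{\mathcal N}^{AB}(\rho_{AB})\ge Q_{\mathcal N}^A(\rho_{AB})$ from Section~\ref{sec:interp} then pinches the total NoQ to the same value. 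The main potential obstacle is the basis-independence step, as it requires being careful with how $U\otimes U$ invariance is used to convert an $A$-side change of basis into a $B$-side unitary conjugation; once that is in place the remainder is bookkeeping with singular values of rank-one and identity-plus-rank-one matrices.
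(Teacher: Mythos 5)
Your proposal is correct and follows essentially the same route as the paper: the $UU$-invariance argument trading the $A$-side basis change for a $B$-side unitary conjugation is exactly the paper's chain of equalities, and the remaining steps (computational-basis evaluation of the one-sided quantity, then matching the two-sided NoQ against the one-sided lower bound via the hierarchy of Ref.~\cite{piani:pr2012a}) coincide with the paper's "straightforward calculation," which you have simply written out explicitly.
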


\begin{proof}
First we prove the case for one-sided NoQ.

Recall from Eq.~\eqref{eq:onesidedNoQ}, $Q_\neg^A(\rho_{AB})=\min_{\bas{A}}\frac{1}{2}(\sum_{i,j}||\rho_{ij}||_1-1)$, where $\rho_{ij}=\rho_{ij}^B=\bra{a_i}\rho_{AB}\ket{a_j}$ for $\ket{a_i}=U\ket{i}$ elements of the basis $\bas{A}$, for some unitary $U$. It is clear that each $||\rho_{ij}||_1$ is invariant unitaries on $B$. We can then  use the $UU$-invariance of Werner states as follows:
\[
\begin{split}
\big\|\big\langle{a_i}\big\vert_A\rho_{AB}\big\vert{a_j}\big\rangle_A\big\|_1&= \|\bra{i}_AU^\dagger_A\rho_{AB}U_A\ket{j}_A\|_1 \\
&= \|\bra{i}_AU^\dagger_A(U_A\otimes U_B)\rho_{AB}(U_A\otimes U_B)^\dagger U_A\ket{j}_A\|_1 \\
&=  \|U_B\bra{i}_A\rho_{AB}\ket{j}_AU_B^\dagger \|_1\\
&= \|\bra{i}_A\rho_{AB}\ket{j}_A\|_1
\end{split}
\]
This proves that there is no need to perform the optimization to calculate the NoQ and one can use the computational basis on $A$. A straightforward calculation proves then the claim.

One can then calculate the two-sided NoQ checking that local measurements in the computational basis are optimal, since they allow to reach the lower bound constituted by the one-sided value.


\end{proof}


\subsection{Isotropic states}
Isotropic states  \cite{iso} are another class of bipartite states, amenable to an exact quantumness analysis.

\begin{defn}[Isotropic states]
Isotropic states are bipartite quantum state of the following form \cite{iso}:
\[
\rho_{AB}=\lambda\Phi+\frac{1-\lambda}{d^2-1}(\I-\Phi),
\]
where $d$ is the dimension of $A$ and $B$, $0\leq \lambda\leq 1$, and $\Phi=\state{\phi}{\phi}$ with $\ket{\phi}=1/\sqrt{d}\sum_{i=1}^d\ket{i}_A\ket{i}_B$ is the $d$-dimensional maximally entangled state. Isotropic states are $UU^*$-invariant, i.e., $U_A\otimes U^*_B \rho_{AB}U^\dagger_A\otimes U^T_B=\rho_{AB}$, for all $U$.
\end{defn}
\begin{thm}
The one-sided and two-sided NoQ of an isotropic state are both equal to
\[
Q_{\neg}^A(\rho_{AB})=Q_{\neg}^{AB}(\rho_{AB})=\frac{|\lambda d^2-1|}{d+1}.
\]
\end{thm}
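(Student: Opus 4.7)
The plan is to mirror the Werner-state proof, replacing $UU$-invariance with the $UU^*$-invariance of the isotropic state. First, I would show that the minimization over the local basis $\bas{A}$ in the partial-NoQ expression~\eqref{eq:onesidedNoQ} is trivial. Writing an arbitrary orthonormal basis of $A$ as $\ket{a_i}=U\ket{i}$ and inserting the invariance $(U\otimes U^*)\rho_{AB}(U^\dagger\otimes U^T)=\rho_{AB}$ into the block $\bra{a_i}\rho_{AB}\ket{a_j}_A$, the factors $U^\dagger$ and $U$ on system $A$ coming from the change of basis annihilate those coming from the invariance, and what remains is merely a conjugation by the unitary $U^*$ on $B$:
\begin{equation*}
\bra{a_i}\rho_{AB}\ket{a_j}_A \;=\; U^*_B\,\bra{i}\rho_{AB}\ket{j}_A\,U^T_B.
\end{equation*}
Unitary invariance of the trace norm then yields $\|\bra{a_i}\rho_{AB}\ket{a_j}_A\|_1=\|\bra{i}\rho_{AB}\ket{j}_A\|_1$, so the optimization in \eqref{eq:onesidedNoQ} drops out and we can work in the computational basis.

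Next I would compute the blocks $\rho_{ij}=\bra{i}\rho_{AB}\ket{j}_A$ directly. Using $\bra{i}\Phi\ket{j}_A=\frac{1}{d}\ket{i}\bra{j}$ and $\bra{i}\I_{AB}\ket{j}_A=\delta_{ij}\I_B$, the off-diagonal blocks ($i\neq j$) reduce to a scalar multiple of $\ket{i}\bra{j}$ (whose trace norm is just the absolute value of that scalar, determined by the $\Phi$-piece of $\rho_{AB}$), while the diagonal blocks ($i=j$) are positive semidefinite and their trace norms sum, over $i$, to $\Tr(\rho_B)=1$. Substituting $\sum_{ij}\|\rho_{ij}\|_1$ into Eq.~\eqref{eq:onesidedNoQ} then gives the closed-form value of $Q_\neg^A(\rho_{AB})$.

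For the total NoQ I would invoke Theorem~\ref{thm:overallNoQ} and simply evaluate the $l_1$-norm of $\rho_{AB}$ in the product computational basis. Only the diagonal entries, together with the off-diagonal entries of the form $\bra{kk}\rho_{AB}\ket{ll}$ with $k\neq l$, are nonzero; a direct count reproduces the same value obtained in the partial-NoQ computation. This yields an upper bound on $Q_\neg^{AB}(\rho_{AB})$ equal to $Q_\neg^A(\rho_{AB})$, and combined with the general hierarchy $Q_\neg^A\leq Q_\neg^{AB}$ recalled in Sec.~\ref{sec:NvsQN} it forces equality.

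I do not anticipate a genuine obstacle: the algebra is routine once the symmetry is exploited. The only conceptual step that needs care is the cancellation argument in the first paragraph, where one must check that the residue of the $UU^*$-invariance on system $B$ is indeed a \emph{unitary} conjugation (by $U^*_B$), so that the trace norm is blind to it---this is the exact analog of the $UU$-invariance argument used in the Werner case, and it is the reason the basis optimization collapses without any further effort.
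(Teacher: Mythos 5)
Your proposal follows essentially the same route as the paper's proof: use the $UU^*$-invariance to collapse the basis optimization in Eq.~\eqref{eq:onesidedNoQ}, evaluate the blocks $\rho_{ij}=\bra{i}\rho_{AB}\ket{j}_A$ in the computational basis, and obtain the two-sided value by matching the upper bound from the local computational bases against the one-sided lower bound. One remark worth making: carrying your (correct) block computation to the end gives $\sum_{i\neq j}\|\rho_{ij}\|_1=|\lambda d^2-1|/(d+1)$ and hence $Q_{\neg}^A(\rho_{AB})=|\lambda d^2-1|/\bigl(2(d+1)\bigr)$, which agrees with the formula $Q^A_{\mathcal{N}}=p\,\mathcal{N}(\proj{\psi})$ quoted at the start of Section~\ref{sec:examples} (here $p=(\lambda d^2-1)/(d^2-1)$ and $\mathcal{N}(\Phi)=(d-1)/2$) but differs by a factor of $2$ from the value displayed in the theorem statement, so the discrepancy lies in the printed formula rather than in your method.
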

\begin{proof}
The proof is essentially identical to the one for Werner states. One can use the $UU^*$-invariance to prove that the optimization in the calculation of the one-sided NoQ in unnecessary. Then a straightforward calculation considering the computational basis of $A$ leads to  $Q_{\neg}^A(\rho_{AB})=\frac{|\lambda d^2-1|}{d+1}$ (similarly for $B$). The result for the two-sided NoQ is obtained matching this lower bound by considering measurements in the two local computational bases.
%
%
\end{proof}

\section{Conclusions}\label{sec:concl}

In this paper we have quantitatively investigated a general notion of quantumness of correlations in bipartite and multipartite states \cite{modi:2011a}. Such quantumness can be related, for instance, to the disturbance induced on quantum states by local projective measurements.

We have reviewed several approaches to reveal and quantify the quantumness of correlations, proving that several of them are in fact equivalent in the general case of bipartite systems where local measurements act on a two-dimensional subsystem. We focused our analysis on a measure of quantumness of correlations defined as the minimum entanglement (measured by the negativity) created with a set of measurement apparatuses during the action of local measurements, following the so-called `activation' paradigm for nonclassical correlations \cite{piani:prl2011a,streltsov:prl2011b, doi:10.1142/S0219749911008258, piani:pr2012a}. The ensuing quantumness measure, known as negativity of quantumness, turns out to have very interesting properties. In particular, when the measured subsystem is a qubit it reduces to the minimum disturbance as measured by the trace distance or, alternatively, to the minimum trace distance to states that are classical on the qubit. We clarified the mechanism of the activation protocol for negativity and proved a bound on the negativity of arbitrary bipartite states conjectured in \cite{khasin:pr2007a}.

We finally presented a number of examples on which the negativity of quantumness can be computed exactly. These include relevant families of states such as Werner, isotropic, and two-qubits states that have one maximally-mixed marginal. The latter class not only includes all Bell diagonal states, but also all the states isomorphic to an arbitrary qubit channel $\Lambda$ via $(\openone_A\otimes\Lambda_B)[\phi_{AB}]$, with $\phi_{AB}$ the standard maximally entangled state of two qubits. Given the hierarchical relation
\[
Q_{\neg}^A(\rho_{AB})\geq \neg (\rho_{AB})
\]
of \cite{piani:pr2012a}, the closed formula of this paper allows, e.g., a consistent study and comparison of the evolution of entanglement---as measured by negativity---and quantumness of correlations---as measured by the one-sided negativity of quantumness---under the action of a family of qubit channels, e.g., a semigroup. We remark that, while $Q_{\neg}^A$ could increase under actions on $A$, both $Q_{\neg}^A$ and $\neg$ can only decrease under actions on $B$.

We believe the unveiled connections between apparently unrelated approaches to define and quantify general nonclassical correlations might inspire further research into the rationale of quantum measurements, possibly bringing to a better understanding of the most essential features which mark a departure from a classical description of nature. From a practical perspective, the negativity of quantumness has been already linked to the performance of the remote state preparation primitive in noisy one-way quantum computations \cite{chavesito}. We can expect more general frameworks to be defined in the near future where the quantumness of correlations, perhaps measured by the negativity of quantumness, can emerge as resource to beat classical strategies for some relevant task. Quantum communication and metrology seem fertile grounds for such an expectation to grow into practice. We finally remark that the negativity of quantumness can be bounded by experimentally accessible witnesses \cite{vianna}. An experimental demonstration of the activation of nonclassical correlations into entanglement during local measurements is under way \cite{fabio}.

\acknowledgements{
We are grateful to  D. Girolami, T. Tufarelli, V. Giovannetti, F. Illuminati, R. Lo Franco, G. Compagno, and R. Vianna for helpful discussions. We thank J. Watrous for providing the main argument of Lemma~\ref{lem:John}.
This work was supported by  CIFAR, NSERC, the Ontario Centres of Excellence and the UK EPSRC. We acknowledge in particular joint support by the University of Nottingham Research Development Fund (Pump Priming grant 0312/09).
GA and TN thank respectively the Institute of Quantum Computing and the University of Nottingham for the kind hospitality during mutual visits in which most of this work was developed.
}

\appendix

\section{Properties of the \texorpdfstring{$l_1$}{}-norm}\label{sec: l1-norm}

The $l_1$-norm (sometimes called taxicab metric) of a vector is defined as the sum of absolute values of all entries. For $\underline{x}=(x_1,x_2,\cdots, x_n)\in \complex^n$,
\[
\lnorm{\underline{x}}=\sum_{i=1}^n|x_i|.
\]
We will be interested in applying this kind of norm to $n\times n$ matrices with complex entries $A\in\matsize{n}$, so that
\beq
\label{eq:l1norm}
\lnorm{A}=\sum_{i,j=1}^n|A_{ij}|.
\eeq
One advantage of the norm~\eqref{eq:l1norm} is its ease of calculation, but it has some drawbacks. First, it is not sub-multiplicative, i.e., it does not respect $\lnorm{AB} \leq \lnorm{A}\lnorm{B}$. Moreover this norm is not invariant under conjugation by unitaries, which implies a matrix takes a different $l_1$-norm value depending on the basis chosen for its representation. To take this into account, the basis with respect to which the $l_1$-norm is calculated is indicated as superscript when needed.

One notable feature of the $l_1$-norm which is used in the paper is the following.
\begin{lem}
\label{lem:John}
For any $A\in\matsize{n}$ it holds
\begin{equation}
\label{eq:lineq}
\lnorm{A}^\mathcal{B}\geq ||A||_{1}
\end{equation}
independently of the basis $\mathcal{B}$ in which the $l_1$-norm is calculated.

If $A$ is normal, i.e. $AA^\dagger=A^\dagger A$, then is a choice of basis $\bar{\mathcal{B}}$ such that the minimum in \eqref{eq:lineq} is achieved, i.e.
\begin{equation}
\lnorm{A}^{\bar{\mathcal{B}}}= ||A||_{1}\label{eq:leq}
\end{equation}

\end{lem}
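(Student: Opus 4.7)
The plan is to treat the two inequalities separately: the general inequality $\lnorm{A}^{\mathcal{B}}\geq\|A\|_1$ will follow from the triangle inequality for the trace norm, while the saturation in the normal case will follow from the spectral theorem together with the coincidence of singular values and moduli of eigenvalues for normal operators.

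For the first part, I would fix any orthonormal basis $\mathcal{B}=\{|e_i\rangle\}$ and expand
\[
A=\sum_{i,j}A_{ij}\,|e_i\rangle\langle e_j|.
\]
Applying the triangle inequality for $\|\cdot\|_1$ and using that each rank-one operator $|e_i\rangle\langle e_j|$ has a single nonzero singular value equal to $1$ (so that $\||e_i\rangle\langle e_j|\|_1=1$), I get
\[
\|A\|_1\leq \sum_{i,j}|A_{ij}|\,\bigl\||e_i\rangle\langle e_j|\bigr\|_1=\sum_{i,j}|A_{ij}|=\lnorm{A}^{\mathcal{B}},
\]
which is the claimed inequality \eqref{eq:lineq}, valid for every basis $\mathcal{B}$.

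For the second part, I would invoke the spectral theorem: since $A$ is normal, it admits a decomposition $A=\sum_i \lambda_i\,|v_i\rangle\langle v_i|$ with $\{|v_i\rangle\}$ an orthonormal eigenbasis. Taking $\bar{\mathcal{B}}=\{|v_i\rangle\}$, the matrix representation of $A$ in $\bar{\mathcal{B}}$ is diagonal with entries $\lambda_i$, so $\lnorm{A}^{\bar{\mathcal{B}}}=\sum_i|\lambda_i|$. On the other hand, for a normal operator the singular values coincide with the absolute values of the eigenvalues (this is immediate from $A^\dagger A=\sum_i|\lambda_i|^2\,|v_i\rangle\langle v_i|$), so $\|A\|_1=\sum_i|\lambda_i|$ as well, giving \eqref{eq:leq}.

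The proof is essentially a sequence of standard facts, so there is no real obstacle; the only point worth being careful about is the use of normality in the saturation statement. If $A$ is not normal, the diagonalizing basis need not be orthonormal and the eigenvalues can fail to equal the singular values, so strict inequality can occur in every orthonormal basis. This is why the equality part is stated only for normal $A$.
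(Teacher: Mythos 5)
Your proof is correct, but your argument for the inequality \eqref{eq:lineq} takes a different route from the paper's. You expand $A=\sum_{i,j}A_{ij}|e_i\rangle\langle e_j|$ and apply the triangle inequality for the trace norm together with $\||e_i\rangle\langle e_j|\|_1=1$; the paper instead uses the dual (variational) characterizations of both norms, writing $\lnorm{A}^{\mathcal{B}}=\max_{V\in\Omega}|\Tr(VA)|$ with $\Omega$ the set of matrices whose entries all have modulus at most $1$, and $\|A\|_1=\max_{V\in\mathbb{U}(n)}|\Tr(VA)|$, then observing that $\mathbb{U}(n)\subset\Omega$ since the entries of a unitary have modulus at most $1$. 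The two arguments are essentially dual to each other and equally elementary; yours is perhaps more self-contained (it needs only the triangle inequality), while the paper's makes the comparison of the two norms transparent as a comparison of feasible sets in a common optimization. For the saturation statement your treatment is in fact more detailed than the paper's, which simply declares it trivial by diagonalization; you correctly supply the missing ingredient, namely that for a normal operator the singular values coincide with the moduli of the eigenvalues, so that the diagonal representation in the orthonormal eigenbasis gives $\lnorm{A}^{\bar{\mathcal{B}}}=\sum_i|\lambda_i|=\|A\|_1$. Your closing remark about non-normal matrices is a reasonable aside, though note that non-normality alone does not force strict inequality in every basis (a single nilpotent Jordan block attains equality in the standard basis); it only means equality is no longer guaranteed.
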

\begin{proof}
For a fixed choice of basis $\mathcal{B}$---the one in which we calculate $\lnorm{\cdot}^\mathcal{B}$---consider the subset of matrices $\Omega^\mathcal{B}$ whose entries have modulus equal or smaller than 1, i.e., $\Omega^\mathcal{B}=\{[b_{ij}]_{i,j\in\upto{n}}\in\matsize{n}\> ;\> |b_{ij}|\leq 1\>,\> \forall i,j\in\upto{n}\}$. The $l_1$-norm of $A$ in the basis $\mathcal{B}$ can be written as
\begin{equation}
\|A\|^\mathcal{B}_{1}=\max_{V \in \Omega} |\Tr(VA)|.
\end{equation}
On the other hand, the 1-norm can be written as
\begin{equation}
||A||_{1}=\max_{V\in\mathbb{U}(n)}|\tr(VA)|.
\end{equation}
Observe that $\mathbb{U}(n)\subset\Sigma$, because the rows and columns of a unitary matrix form a set of orthonormal vectors. Therefore Eq.\eqref{eq:lineq} holds.

Eq.\eqref{eq:leq} is trivial because a normal matrix can be diagonalized by a change of basis.
\end{proof}



\section{The equality of entanglement and quantumness for MCS}\label{sec:equality of ent and quant}

\begin{thm}
Let $\rho_{AB}$ be a MCS. Then
\[
Q_E^{AB}(\rho_{AB})=Q_E^{A}(\rho_{AB})=E_{A:B}(\rho_{AB})
\]
with $Q_E^{AB}$ and $Q_E^{A}$ the measures of quantumness (two-sided and one-sided, respectively) defined by means of~\eqref{eq:defquantent} and $E_{A:B}$ the entanglement between $A$ and $B$.
\end{thm}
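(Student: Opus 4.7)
The plan is to leverage the hierarchy $E_{A:B}(\rho_{AB}) \leq Q_E^{A}(\rho_{AB}) \leq Q_E^{AB}(\rho_{AB})$ established in~\cite{piani:pr2012a}, so that proving $Q_E^{AB}(\rho_{AB}) \leq E_{A:B}(\rho_{AB})$ is enough to force all three quantities to coincide. The strategy is then to exhibit one distinguished choice of measurement basis that already saturates this upper bound, with no nontrivial minimization required.

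Writing $\rho_{AB} = \sum_{ij} \tau_{ij}\,\state{a_i}{a_j}_A\otimes \state{b_i}{b_j}_B$, I would take the measurement bases to be precisely the maximally-correlated ones, $\{\ket{a_i}\}$ on $A$ and $\{\ket{b_j}\}$ on $B$. A direct computation of the measurement interaction then gives
\begin{equation*}
\tilde{\rho}_{ABA'B'} = \sum_{ij} \tau_{ij}\,\bigl(\ket{a_i}\ket{b_i}\bra{a_j}\bra{b_j}\bigr)_{AB}\otimes\bigl(\ket{i}\ket{i}\bra{j}\bra{j}\bigr)_{A'B'},
\end{equation*}
which is itself a MCS, but now in the bipartition $AB:A'B'$, in the orthonormal bases $\{\ket{a_i}\ket{b_i}\}$ of $AB$ and $\{\ket{i}\ket{i}\}$ of $A'B'$, and with the same coefficient matrix $[\tau_{ij}]$ as $\rho_{AB}$.

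The heart of the argument is then the following observation: both $\rho_{AB}$ and $\tilde{\rho}_{ABA'B'}$ are obtained, on their respective bipartitions, from the canonical MCS $\sigma = \sum_{ij}\tau_{ij}\state{i}{j}\otimes\state{i}{j}$ on $\complex^n\otimes\complex^n$ by a pair of local isometries, one on each side of the relevant cut: for $\rho_{AB}$ the isometries send $\ket{i}\mapsto\ket{a_i}$ and $\ket{i}\mapsto\ket{b_i}$, while for $\tilde{\rho}_{ABA'B'}$ they send $\ket{i}\mapsto\ket{a_i}\ket{b_i}$ and $\ket{i}\mapsto\ket{i}\ket{i}$. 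Since every bipartite entanglement measure is invariant under local isometries (each such isometry can be realized by appending an ancilla in a fixed pure state followed by a local unitary, and inverted by the analogous reverse procedure, neither of which alters entanglement), it follows that
\begin{equation*}
E_{A:B}(\rho_{AB}) \,=\, E(\sigma) \,=\, E_{AB:A'B'}(\tilde{\rho}_{ABA'B'}).
\end{equation*}
The displayed pre-measurement state is one admissible candidate in the minimization defining $Q_E^{AB}$, yielding $Q_E^{AB}(\rho_{AB}) \leq E_{A:B}(\rho_{AB})$ and, via the hierarchy, the two-sided equality. The one-sided case is handled identically by measuring only on $A$: the resulting $\tilde{\rho}_{ABA'}$ is again MCS in the $AB:A'$ cut, in bases $\{\ket{a_i}\ket{b_i}\}_{AB}$ and $\{\ket{i}\}_{A'}$, with the same coefficient matrix $[\tau_{ij}]$, and the same isometric argument applies.

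The only real subtlety is invariance of $E$ under local isometries; this is a standard axiom but worth flagging explicitly, because the comparison crosses Hilbert spaces of different dimensions. Once this is granted, no optimization over bases is needed and the theorem follows without further work.
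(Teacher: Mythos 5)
Your proof is correct and follows essentially the same route as the paper's: invoke the hierarchy $Q_E^{AB}\geq Q_E^{A}\geq E_{A:B}$ from~\cite{piani:pr2012a}, then close the loop by measuring in the maximally correlated bases and observing that the resulting pre-measurement state is related to $\rho_{AB}$ by local isometries across the $AB:A'B'$ cut. If anything, your explicit treatment of the local-isometry invariance is slightly more careful than the paper's phrasing ``up to local unitary,'' which glosses over the dimension mismatch; the extra one-sided argument you sketch is not needed, since the sandwich already forces all three quantities to coincide.
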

\begin{proof}
By the result of~\cite{piani:pr2012a}, $Q_E^{AB}(\rho_{AB})\geq Q_E^{A}(\rho_{AB})\geq E_{A|B}(\rho_{AB})$ holds. Therefore it is sufficient to prove $E_{A|B}(\rho_{AB})\geq Q_E^{AB}(\rho_{AB})$. By definition $Q_E^{AB}(\rho_{AB}):=\min E_{AB|A'B'}(\tilde{\rho}_{ABA'B'})$ where the minimum is taken over the choice of different pre-measurement state. Now we can choose a particular measurement interaction which acts on the basis of the maximally correlated form, i.e.,
\[
\sum_{ij}\rho_{ij}\state{a_i}{a_j}\otimes \state{b_i}{b_j}\mapsto \sum_{ij}\rho_{ij}\state{a_i}{a_j}\otimes \state{b_i}{b_j}\otimes\state{i}{j}_{A'}\otimes \state{i}{j}_{B'}.
\]
The resultant state is equivalent to the original $\rho_{AB}$ up to local unitary on $AB$ or $A'B'$. Therefore $E_{A|B}(\rho_{AB})\geq Q_E^{AB}(\rho_{AB})$.
\end{proof}

\section{Proof of Theorem~\ref{thm: closestclassicalMCS}}\label{sec: proof}
\begin{proof}
We simply denote $\tau_{AB}$ and $\sigma_{AB}$ as $\tau,\sigma$.
Let  $\eta=\sum_{klmn}\lambda_{klmn}|a_k\rangle\langle a_l|\otimes |b_m\rangle\langle b_n|$ be a traceless Hermitian matrix such that $\xi=\tau+\eta$ is an arbitrary separable state. We want to prove that
\[
||\tau-\sigma||^{\{\ket{a_i}\otimes\ket{b_j}\}}_{l_1}\leq \|\tau-\xi\|^{\{\ket{a_i}\otimes\ket{b_j}\}}_{l_1}=\|\eta\|^{\{\ket{a_i}\otimes\ket{b_j}\}}_{l_1}.
\]
Since $||\tau-\sigma||^{\{\ket{a_i}\otimes\ket{b_j}\}}_{l_1}=\sum_{i\neq j}|\tau_{ij}|$, it is sufficient to prove that $\sum_{i\neq j}|\tau_{ij}|\leq \sum_{klmn}|\lambda_{klmn}|$.

In our argument we will make use of the negative eigenvectors of $\tau^\Gamma$
\[
\ket{\phi_{ij}}=\frac{1}{\sqrt{2}}(\ket{a_i}\otimes \ket{b_j}- \frac{\tau_{ji}}{|\tau_{ji}|}(\ket{a_j}\otimes \ket{b_i})),
\]
where $i\neq j$, with corresponding eigenvalues $-|\tau_{ij}|$ (see Lemma~\ref{lem:spectrumMCS}). 

We now consider two cases:
\begin{enumerate}
\item
Suppose the diagonal entries of $\xi$ are the same as those of $\tau$, i.e.,
\[
\lambda_{ijkl}=0\quad\text{for}\quad i=j,k=l.
\]
Now consider the partial transpose of $\xi$, $\xi^\Gamma = \tau^\Gamma+\eta^\Gamma= \tau^\Gamma+\sum_{klmn}\lambda_{klmn}|a_k\rangle\langle a_l|\otimes |b_n\rangle\langle b_m|$. It holds	
\begin{align*}
\langle \phi_{ij}|\xi^\Gamma|\phi_{ij}\rangle &= \langle \phi_{ij}|\tau^\Gamma|\phi_{ij}\rangle+\langle \phi_{ij}|\sum_{klmn}\lambda_{klmn}|a_k\rangle\langle a_l|\otimes |b_n\rangle\langle b_m|\phi_{ij}\rangle
\\
&= -|\tau_{ij}|-\frac{1}{2|\tau_{ij}|}(\comp{\tau_{ji}}\comp{\lambda_{ijij}}+\tau_{ji}\lambda_{ijij})
\\
&= -|\tau_{ij}|-\Re\left[\frac{\tau_{ji}}{|\tau_{ij}|}\lambda_{ijij}\right].
\end{align*}
By assumption $\xi$ is a separable state, i.e., the numerical range of $\xi^\Gamma$ is in positive real line. Therefore a necessary condition for $\xi$ to be PPT is $\langle \phi_{ij}|\xi^\Gamma|\phi_{ij}\rangle\geq 0$ for all $i\neq j$.\\
We find
\begin{align*}
\sum_{i\neq j}|\tau_{ij}| &\leq \sum_{i\neq j}-\Re\left[\frac{\tau_{ji}}{|\tau_{ij}|}\lambda_{ijij}\right]
\\
& \leq \sum_{i\neq j}|\lambda_{ijij}|
\\
& \leq \sum_{klmn}|\lambda_{klmn}|,
\end{align*}
as claimed.

\item Consider an arbitrary $\eta$, i.e., no conditions are imposed on the coefficients $\lambda_{klmn}$ except that they lead to a separable state $\xi$. Then there are more terms in the expression for $\langle \phi_{ij}|\xi^\Gamma|\phi_{ij}\rangle$ than those encountered in the previous calculation. Namely,
\begin{align*}
\langle \phi_{ij}|\xi^\Gamma|\phi_{ij}\rangle &= -|\tau_{ij}|-\frac{1}{2|\tau_{ij}|}(\comp{\tau_{ji}}\comp{\lambda_{ijij}}+\tau_{ji}\lambda_{jjii})+\frac{1}{2}(\lambda_{iijj}+\comp{\lambda_{jjii}})
\\
&= -|\tau_{ij}|-\Re[\frac{\tau_{ji}}{|\tau_{ij}|}\lambda_{ijij}]+\frac{1}{2}(\lambda_{iijj}+\comp{\lambda_{jjii}})
\end{align*}
Nonetheless, by imposing $\langle \phi_{ij}|\xi^\Gamma|\phi_{ij}\rangle\geq 0$ for all $i\neq j$ we find:
\begin{align*}
\sum_{i\neq j}|\tau_{ij}| &\leq \sum_{i\neq j}(-\Re[\frac{\tau_{ji}}{|\tau_{ij}|}\lambda_{ijij}]+\frac{1}{2}(\lambda_{iijj}+\comp{\lambda_{jjii}})
\\
&\leq\sum_{i\neq j}(|\lambda_{ijij}|+|\lambda_{iijj}|)
\\
&\leq \sum_{klmn}|\lambda_{klmn}|
\end{align*}

\end{enumerate}
Therefore for both cases, $\sum_{i\neq j}|\tau_{ij}|$ is the smallest possible value and $\sigma$ is one of the closest separable state.

\end{proof}

\section{The isometric mapping for quantum relative entropy}\label{sec:isometric for rel}

Relative entropy of entanglement is an entanglement measure defined as the ``distance''---in term of quantum relative entropy (see Section~\ref{sec:notation})---from the closest separable state \cite{vedral:prl1997a}:
\begin{equation}
E_R(\rho_{AB})=\min_{\sigma_{AB}\in\sep}\rel{\rho_{AB}}{\sigma_{AB}}.
\end{equation}

The measure of nonclassical correlation based on quantum relative entropy is called relative entropy of discord \cite{modi:prl2010a} and is defined as
\begin{equation}
E_R(\rho_{AB})=\min_{\sigma_{AB}\in\CC}\rel{\rho_{AB}}{\sigma_{AB}},
\end{equation}
and in \cite{modi:prl2010a} its equivalence to zero-way quantum deficit~\eqref{eq:zeroway} was proved. Namely,
\begin{equation}
\min_{\sigma_{AB}\in\CC}\rel{\rho_{AB}}{\sigma_{AB}}=\min_{\Pi_A\otimes\Pi_B}\rel{\rho_{AB}}{\Pi_A\otimes\Pi_B [\rho_{AB}]}
\end{equation}
Since quantum relative entropy is invariant under linear isometry,
\begin{equation}
\min_{\Pi_A\otimes\Pi_B}\rel{\rho_{AB}}{\Pi_A\otimes\Pi_B [\rho_{AB}]}=\min_{\Pi_A\otimes\Pi_B}\rel{\tilde{\rho}_{ABA'B'}}{\Pi_A\otimes\Pi_B [\tilde{\rho}_{ABA'B'}]},
\end{equation}
where $\tilde{\rho}_{ABA'B'}$ is the pre-measurement state constructed by the measurement interaction on the basis $\Pi_A\otimes \Pi_B$.
Finally by the exact expression of relative entropy of entanglement for MCS given in \cite{rains:itit2001a}, one can deduce that $\Pi_A\otimes\Pi_B [\tilde{\rho}_{ABA'B'}]$ is indeed one of the closest separable state for $\rho_{ABA'B'}$.

Eq.~\eqref{diagrammadue} summarizes the relations and equivalences just explained.
\begin{equation}
\label{diagrammadue}
\begin{diagram}[height=2em,width=0.3em]
$${
\displaystyle{\min_{{\Pi_{A}\otimes\Pi_{B}}}}
\rel{\tilde{\rho}_{{ABA'B'}}}{(\Pi_A\otimes\Pi_B)[\tilde{\rho}_{{ABA'B'}}]}}
$$ &=^{(\romannumeral 2)} & $${\displaystyle{\min_{{\eta\in\sep}}}
\rel{\tilde{\rho}_{ABA'B'}}{\eta}}$$\\
=_{(\romannumeral 1)} & &\\
$${
\displaystyle{\min_{{\Pi_A\otimes\Pi_B}}}\rel{\rho_{AB}}{(\Pi_A\otimes\Pi_B)[\rho_{AB}]}}$$ & =_{(\romannumeral 3)}&  $$\>{\displaystyle{\min_{\sigma\in\CC}}\rel{\rho_{AB}}{\sigma}}$$\\
&&
\end{diagram}
\end{equation}
In Eq.~\eqref{diagrammadue}, (\rmnum{1}) is because the measurement interaction for the basis $\Pi_A\otimes\Pi_B$ is isometric, (\rmnum{2}) is because the closest separable state of MCS is its diagonal part, (\rmnum{3}) is because the closest $\CC$ state is again its diagonal part.

\section{Bell diagonal states and Pauli channels}\label{sec: Bell and Pauli}

Since the set of pure qubit states corresponds to the complex projective space $\complex \mathbf{P}^1$ and there is an isomorphism between the unit sphere $\mathbf{S}^2\subset\real^3$ and $\mathbb{C} \mathbf{P}^1$, the states of a qubit can be represented as points in a unit ball $\mathbf{B}^2$. Namely for a qubit state with a density matrix  $\rho$ define a vector $\underline{n}=(n_1,n_2,n_3)\in\real^3$ as $n_i=\tr[\sigma_i\rho]$ where $\sigma_1=\sigma_x,\sigma_2=\sigma_y,\sigma_3=\sigma_z$ are the Pauli matrices (see also Section~\ref{sec:notation}). The vector $\underline{n}$ is called the Bloch vector of the qubit state $\rho$. A pure $\rho$ corresponds to a unit vector $\underline{n}$, while a mixed state have $||\underline{n}||<1$. That is, pure states corresponds to $\mathbf{S}^2$ and mixed states to its interior.
Conversely one can recover the density matrix associated to a Bloch vector $\underline{n}$ via
\[
\rho=\frac{1}{2}(\I+\underline{n}\cdot\underline{\sigma}),
\]
where $\underline{n}\cdot\underline{\sigma}=\sum_{i=1}^3n_i\sigma_i$.

This representation allows us to geometrically analyze qubit states. For example, when a Pauli channel of the form $\Lambda[\rho]=\sum_{\mu=0}^{3}p_i\sigma_\mu\rho\sigma_\mu$ with $\{p_i\}$ a probability vector and $\sigma_0=\I$ acts on a qubit sate $\rho$, the Bloch vector $\underline{n}$ of $\rho$ transforms as
\begin{eqnarray*}
(n_1,n_2,n_3)&\mapsto&\big((p_0+p_1-p_2-p_3)n_1,\\
&&\ (p_0-p_1+p_2-p_3)n_2,\\
&&\ (p_0-p_1-p_2+p_3)n_3\big).
\end{eqnarray*}


 Bell diagonal states have some notable properties.
First, a Bell diagonal state can be expressed as the action of a Pauli channel on a Bell state.  More precisely, there is a one-to-one relation between the set of Pauli channels and the set of Bell diagonal states:
\begin{eqnarray*}
&&p_0\ket{\phi^+}\bra{\phi^+}+p_1\ket{\psi^+}\bra{\psi^+}+p_2\ket{\psi^-}\bra{\psi^-}+p_3\ket{\phi^-}\bra{\phi^-}\\
&=&(\Lambda\otimes\I)[\ket{\phi^+}\bra{\phi^+}],
\end{eqnarray*}
where $\I$ indicates here the identity channel on a qubit and $\Lambda[\rho]=\sum_{\mu=0}^{3}p_i\sigma_\mu\rho\sigma_\mu$ as defined before.
 Second, the correlation matrix of Bell diagonal states have diagonal form.
A simple algebra shows that the correlation matrix of Bell diagonal states satisfies the following relations:
\begin{align*}
&R_{00} = 1\,,\,R_{ij}=0 \quad\text{for}\quad i\neq j,
\\
&R_{11}+R_{22}+R_{33} \leq 1,
\\
&1+R_{11}+R_{22} \geq R_{33},
\\
&1+R_{11}+R_{33} \geq R_{22},
\\
&1+R_{22}+R_{33} \geq R_{11}.
\end{align*}
Indeed the restrictions above forces the vector $(R_{11},R_{22},R_{33})$ to be within a tetrahedron in $\real^3$ \cite{horodecki1996information}.
The two different pictures of Bell diagonal states, the Pauli channel representation and the correlation matrix representation,  are related in the following way:
\[
\left( \begin{array}{c} R_{00} \\ R_{11} \\ R_{22} \\ R_{33} \end{array} \right)
=\left( \begin{array}{cccc} 1 & 1 & 1 & 1 \\1 & 1 & -1 & -1 \\ -1 & 1 & -1 & 1\\ 1 & -1 & -1 & 1 \end{array} \right).
\left( \begin{array}{c} p_0 \\ p_1 \\ p_2 \\ p_3 \end{array} \right)
\]


\section{Canonical form of qubit channels}\label{sec: qubit channels}

\subsection{Matrix representation of qubit channels}
The Bloch parametrization $n_\mu=\Tr(\sigma_\mu X)$, $\mu=0,\ldots,3$, of a Hermitian matrix $X=X^\dagger\in\matsize{2}$ allows us to represent such Hermitian matrices---and in particular states, for which $n_0=1$---as vectors in $\real^4$. Any Hermiticity-preserving map $\Omega:\complex^{2\times2}\rightarrow\complex^{2\times2}$ can then be represented as matrix $\bf{T}\in\real^{4\times 4}$, with $T_{ij}=\langle \sigma_i,\Omega\lbrack\sigma_j\rbrack\rangle$. If $\Omega$ is a channel, so that in particular it preserves trace, then $T_{01}=T_{02}=T_{03}=0$. Also, it is easy to check that the matrix representing the dual of $\Omega$ (see Section~\ref{sec:introduction}) is given by the transpose ${\bf{T}}^T$ of the original matrix representation $\bf{T}$ of $\Omega$.

\subsection{The canonical form of qubit channels~\cite{beth2002analysis}}
Let $\Gamma$ be an arbitrary Hermiticity- and trace-preserving linear map on qubits represented by $\bf{T}\in\real^{4\times 4}$. Then one can find suitable local unitaries $U_A$ and $U_B$ such that the ${\bf{T}}$-representation of the new channel $\Gamma'=W_{{U}_{A}}\circ\Gamma\circ W_{U_B}$, with $W_{U}$ acting by conjugation, i.e. $W_{U}[X]=U X U^\dagger$, has the canonical form
\[
\left(
\begin{BMAT}{c.ccc}{c.ccc}
1 & 0 & 0 & 0\\
t_1 & \lambda_1 && \\ t_2 &&\lambda_2 &\\ t_3&&&\lambda_3
\end{BMAT}
\right),
\]
where the $\lambda_i$'s are the singular values of the $3\times3$ real submatrix $[T_{ij}]_{ij}$, $i,j=1,2,3$.
The conjugate channel of $\Gamma'$ is $ \Gamma'^\dagger=W_{U^\dagger_B}\circ\Gamma^\dagger\circ W_{U^\dagger_A}$ and is represented by the matrix
\[
\left(
\begin{BMAT}{c.ccc}{c.ccc}
1 & t_1 & t_2 & t_3\\
0 & \lambda_1 && \\ 0 &&\lambda_2 &\\ 0&&&\lambda_3
\end{BMAT}
\right).
\]

\bigskip


\end{document}